\documentclass[reqno]{amsart}

\usepackage[utf8]{inputenc}

\usepackage{fullpage}
\usepackage{appendix}
\usepackage{newtxtext}
\usepackage{enumitem}

\usepackage{amssymb}
\usepackage{amsmath}
\usepackage{amsthm}
\usepackage{amsopn}

\usepackage{mathrsfs}
\usepackage{mdframed}
\usepackage{tikz-cd}
\usepackage{subfig}
\usepackage{graphicx}
\usepackage{esint}
\usepackage{cancel}
\usepackage{thmtools, thm-restate}
\usepackage{hyperref}

\usepackage{verbatim}
\usepackage{mathtools}
\usepackage{fullpage}
\usepackage{dsfont}

\usepackage{physics}
\usepackage{setspace}
\usepackage{epstopdf}
\usepackage[T1]{fontenc}
\usepackage{graphicx}

\usepackage{scalerel,stackengine}
\usepackage{ulem}

\usepackage{soul}

\usepackage[numbers]{natbib}

\usepackage{marginnote}

\newcommand{\dx}{\mathrm{d}x}
\newcommand{\dy}{\mathrm{d}y}

\newcommand{\ddp}{\mathrm{d}p}
\newcommand{\dq}{\mathrm{d}q}

\newcommand{\dw}{\mathrm{d}w}
\newcommand{\du}{\mathrm{d}u}

\newcommand{\dt}{\mathrm{d}t}
\newcommand{\ds}{\mathrm{d}s}

\newcommand{\deta}{\mathrm{d}\eta}

\newcommand{\bR}{\mathbb{R}}
\newcommand{\bN}{\mathbb{N}}

\newcommand{\bC}{\mathbb{C}}

\newcommand{\cR}{\mathcal{R}}

\newcommand{\cU}{\mathcal{U}}
\newcommand{\cN}{\mathcal{N}}

\newcommand{\rms}{\mathrm{s}}
\newcommand{\rmu}{\mathrm{u}}
\newcommand{\rmv}{\mathrm{v}}
\newcommand{\rmm}{\mathrm{m}}
\newcommand{\rme}{\mathrm{e}}

\newcommand{\ii}{\mathrm{i}}

\newcommand{\mytodo}[2][]{{%
		\let\marginpar\marginnote
		\reversemarginpar
		\renewcommand{\baselinestretch}{0.8}%
		\todo[#1]{#2}}}

\theoremstyle{remark}
\newtheorem{Remark}{Remark}[section]
\theoremstyle{plain}
\newtheorem{Theorem}{Theorem}[section]
\newtheorem{Lemma}{Lemma}[section]

\newtheorem{Proposition}{Proposition}[section]
\newtheorem{assump}{Assumption}

\DeclareRobustCommand{\rchi}{{\mathpalette\irchi\relax}}
\newcommand{\irchi}[2]{\raisebox{\depth}{$#1\chi$}} % inner command, used by \rchi
\newcommand\numberthis{\addtocounter{equation}{1}\tag{\theequation}}

\setcounter{secnumdepth}{2}
\setcounter{tocdepth}{2}

\DeclareMathOperator{\sign}{sign}

\DeclareMathOperator{\supp}{supp}

\def \qed {\hfill \vrule height6pt width 6pt depth 0pt}

\newcommand{\R}{\bR}
\newcommand{\N}{\bN}

\newcommand{\p}{\partial}

\newcommand{\hsnorm}[1]{\norm{#1}_{\mathrm{HS}}}
\newcommand{\opnorm}[1]{\norm{#1}_{\mathrm{op}}}
\newcommand{\trnorm}[1]{\norm{#1}_{\Tr}}

\usepackage{ulem}

\usepackage{natbib}

\makeatletter

\makeatother

\allowdisplaybreaks

\numberwithin{equation}{section}

\stackMath
\newcommand\reallywidehat[1]{%
	\savestack{\tmpbox}{\stretchto{%
			\scaleto{%
				\scalerel*[\widthof{\ensuremath{#1}}]{\kern-.6pt\bigwedge\kern-.6pt}%
				{\rule[-\textheight/2]{1ex}{\textheight}}%WIDTH-LIMITED BIG WEDGE
			}{\textheight}%
		}{0.5ex}}%
	\stackon[1pt]{#1}{\tmpbox}%
}
\renewcommand{\bar}{\overline}

\renewcommand{\tilde}{\widetilde}
\renewcommand{\leq}{\leqslant}
\renewcommand{\geq}{\geqslant}

\title{A mixed-norm estimate of the two-particle reduced density matrix of many-body Schrödinger dynamics for deriving the Vlasov equation}

\author{Li Chen}
\address[L.~Chen]{Institut für Mathematik, Universität Mannheim}
\email{chen@math.uni-mannheim.de}

\author{Jinyeop Lee}
\address[J.~Lee]{Mathematisches Institut, Ludwig-Maximilians-Universität München}
\email{lee@math.lmu.de}

\author{Yue Li}
\address[Y.~Li]{Department of Mathematics, Nanjing University}
\email{liyue2011008@163.com}

\author{Matthew Liew}
\address[M.~Liew]{Institut für Mathematik, Universität Mannheim}
\email{mliew@mail.uni-mannheim.de}

\begin{document}
	
	\maketitle
	
	\begin{abstract}
		\vspace{1em}
		We re-examine the combined semi-classical and mean-field limit in the $N$-body fermionic Schrödinger equation with pure state initial data using the Husimi measure framework. The Husimi measure equation involves three residue types: kinetic, semiclassical, and mean-field. 
		The main result of this paper is to provide better estimates for the kinetic and mean-field residue than those in  \cite{Chen2021JSP}.
		Especially, the estimate for the mean-field residue is shown to be smaller than the semiclassical residue by a mixed-norm estimate of the two-particle reduced density matrix factorization.
		Our analysis also updates the oscillation estimate parts in the residual term estimates appeared in \cite{Chen2021JSP}.
	
		\vspace{1em}
		\noindent Keywords: \textit{Large fermionic system, Vlasov equation, Husimi measure, Schr\"odinger equation, mean-field limit, semi-classical limit}\\
	\end{abstract}

	\normalem
	
	\section{Introduction}
	
	In this article, we consider the following $N$-particle mean-field Schr\"odinger equation
	\begin{equation}\label{eq:Schrodinger_0}
		\begin{split}
			\ii \hbar \partial_t\, \psi_{N,t} &=-\frac{\hbar^2}{2} \sum_{j=1}^{N}  \Delta_{x_j} \psi_{N,t} + \frac{1}{2N} \sum_{i\neq j}^{N} V(x_i-x_j) \psi_{N,t}\\
			\psi_{N,0} &= \frac{1}{\sqrt{N!}} \det{\rme_j (x_i)}_{i,j=1}^{N},
		\end{split}
	\end{equation}
	where $\{\rme_j\}_{j=1}^N$ is a family of orthonormal basis in $L^2(\bR^3)$ and $\Delta_{x_j}$ is the Laplacian on $j$-th particle.
	The initial data in \eqref{eq:Schrodinger_0} is in the form of a Slater determinant, which stays in the antisymmetric subspace $L^2_a(\R^{3N})$ of $L^2(\bR^3)$ with $\norm{\psi_{N,0}}_2=1$, where
	\[
	L^2_a(\R^{3N}) := \big\{ \psi_N \in L^2(\R^{3N}) :\ \psi_N(x_{\pi(1)},\dots,x_{\pi(N)}) = (\sign{\pi})  \psi_N(x_1, \dots, x_N)\ \text{for all } \pi \in S_N\big\}.
	\]
	In the above formulation, $S_N$ is the {{symmetric}} group.
	
	Note that the number of the terms for interaction is of order $N^2$. Hence, with the mean-field constant $1/N$ in front of the interaction, we could think that the size of interaction energy is of order $N$.
	Since we are interested in the regime where the size of kinetic energy is similar to the size of interaction energy, 
	we have from Tomas-Fermi theory that
	$
	\hbar^2 N^{5/3} = N.
	$
	This gives
	$
	\hbar = N^{-1/3}.
	$
	For more details, we refer to \cite{BACH20161, benedikter2014mean, benedikter2022effective}.
	
	As it is difficult to solve the Schr\"odinger equation in \eqref{eq:Schrodinger_0} numerically when the number of particle $N$ is large, we aim to derive its corresponding effective evolution equation.
	In fact, we consider the $k$-particle reduced density matrix where its corresponding integral kernel is given by
	\begin{equation}\label{eq:gammak}
		\begin{aligned}
			&\gamma_{N,t}^{(k)} (x_1,\dots,x_k;y_1,\dots,y_k)\\
			&= \frac{N!}{(N-k)!} \int \mathrm{d}{x_{k+1}} \dots \mathrm{d}{x_N}\, \overline{\psi_{N,t}(y_1,\dots,y_k,x_{k+1},\dots,x_N)} \psi_{N,t}(x_1, \dots, x_k,x_{k+1},\dots,x_N),
		\end{aligned}
	\end{equation}
	where $1\leq k \leq N$. Moreover, we denote the expectation of the one-particle observable as follows,
	\[
	\Tr O\gamma^{(1)}_{N,t} =\left< \psi_{N,t}, O \psi_{N,t} \right> = \int \mathrm{d}{x_1} \dots \mathrm{d}{x_N}\, \overline{\psi_{N,t}({x_1},x_{2},\dots,x_N)} \big(O \psi_{N,t}\big)(x_1,x_{2},\dots,x_N).
	\]
	The one-particle reduce density matrix of the initial data given in \eqref{eq:Schrodinger_0} is
	$
	\omega_N = \sum_{j=1}^N \dyad{\rme_j}{\rme_j},
	$
	where its corresponding integral kernel is $\omega_N(x;y) = \sum_{j=1}^N \overline{\rme_j(y)} \rme_j(x)$.
	
	\subsubsection{Short review of mean-field limit, $N\to \infty$}
	
	It is well known that the Hartree-Fock equation 
	\begin{equation}\label{eq:hartree-fock}
		\begin{split}
			\ii \hbar \partial_t \, \omega_{N,t} &= \big[-\hbar^2 \Delta + (V*\rho_{N,t}) - X_t, \omega_{N,t}\big],\\
			\rho_{N,t} &= \frac{1}{N} \omega_{N,t}(x;x) \\
			X_t &= \frac{1}{N} V(x-y) \omega_{N,t}(x;y)\\
		\end{split}
	\end{equation}
	is used to approximate the Schr\"odinger equation in the mean-field limit.
	Here we use the conventional notation $[A,B]:=AB-BA$ for commutator of operators.
	
	The mean field limit result for fixed $\hbar$ has been given in \cite{ELGART20041241} for short time.
	Under the scaling $\hbar = N^{-1/3}$, the rates of convergence in the trace norm and the Hilbert-Schmidt norm are obtained for arbitrary given time in \cite{benedikter2014mean} when the initial data is an approximation of the Slater determinant. Later on,  the case with mixed state initial data has been considered in \cite{benediktermixed,benedikter2014rel}. Furthermore, for Coulomb and Riesz potentials, the rate of convergence is obtained in \cite{Porta2017,saffirio2017mean}. We refer more references on this topic to \cite{Frohlich2011,petrat2014derivation,Petrat2017,Petrat2016ANM,benedikter2022effective} and the references therein.
	
	\subsubsection{Short review of semi-classical limit, $\hbar\to0$}
	
	The Vlasov equation can be obtained via semi-classical limit of the Hartree or the Hartree--Fock equations. It has been first investigated in \cite{Lions1993} by using Wigner measure for smooth potentials.
	Recently, the rates of convergence in the trace norm as well as the Hilbert-Schmidt norm has been studied in \cite{benedikter2016hartree} with {{regularity assumptions}} on the mixed state initial data and a class of regular potentials.
	The $k$-particle Wigner measure used in the \cite{Lions1993,benedikter2016hartree} reads
	\begin{equation}\label{eq:wigner_measure}
		\begin{aligned}
			&W^{(k)}_{N,t}(x_1,p_1, \dots, x_k, p_k) \\
			&:= \binom{N}{k}^{-1} \int (\dd{y})^{\otimes k} \; 
			\gamma_{N,t}^{(k)}\left(x_1 + \frac{\hbar}{2}y_1, \dots, x_k + \frac{\hbar}{2}y_k; x_1 - \frac{\hbar}{2}y_1, \dots, x_k - \frac{\hbar}{2}y_k \right)e^{-\ii \sum_{i=1}^k p_i \cdot y_i},
		\end{aligned}
	\end{equation}
	Some of the recent developments in the semi-classical limit are the following: One can find results 
	for the inverse power law potential in \cite{Saffirio2019}, for the rate of convergence in the Schatten norm in \cite{lafleche2020strong}, for the Coulomb potential and mixed states in \cite{saffirio2019hartree}, and for the convergence in the Wasserstein distance in \cite{Lafleche2019GlobalSL,Lafleche2019PropagationOM}.
	Relativistic fermionic system has been studied in \cite{Dietler2018}. Further analyses of the semi-classical limit can be found in \cite{amour2013classical,amour2013,Athanassoulis2011,Gasser1998, Markowich1993,benedikter2022effective}.
	
	\subsubsection{Combined Mean-Field and semi-classical Limits}
	Narnhofer and Sewell, and Spohn independently derived  Vlasov equation~\eqref{eq:vlasov} from the $N$-body Schrödinger equation \eqref{eq:Schrodinger_0} with $\hbar = N^{-1/3}$, in \cite{Narnhofer1981,Spohn1981}.
	Without assuming $\hbar = N^{-1/3}$, a rate of convergence was obtained in \cite{graffi2003mean} in a weak formulation.
	The rate of convergence of the combined limits was studied in \cite{Golse2017,Golse2021,golse:hal-01334365} by using the Wasserstein (pseudo-)distance.
	Under a generalized Husimi measure framework, the authors in \cite{Chen2021JSP} obtained the convergence for regular potentials. 
	Recently, the combined limit for the singular potential case with regular mixed state initial data was obtained in \cite{Chong2021}.
	
	It is well-known that the Wigner measure in \eqref{eq:wigner_measure} is not a (proper) probability measure, as there might be some point having negative sign. (We refer, e.g.,  ~\cite{doi:10.1063/1.531326,Hudson1974,kenfack2004negativity,PhysRevA.79.062302,doi:10.1063/1.525607} for further references on Wigner measure.)
	It has been shown that the Husimi measure, the convolution of the Wigner measure with a Gaussian function, is a nonnegative probability measure
	\cite{Combescure2012,Fournais2018,Zhang2008}.
	In particular, from \cite[p.21]{Fournais2018}, given a specific Gaussian coherent state, the relation between the Husimi measure and Wigner measure is given by the following convolution: for any $1 \leq k \leq N$,
	\begin{equation}\label{eq:gaussian_m_vs_W}
		\mathfrak{m}_{N,t}^{(k)}  := \frac{N(N-1)\cdots (N-k+1)}{N^k} W^{(k)}_{N,t} * \mathcal{G}^\hbar,
	\end{equation}
	where
	\begin{equation*}
		\mathcal{G}^\hbar (q_1,p_1,\dots, q_k,p_k) := \frac{1}{(\pi \hbar)^{3k} }\exp \Big(-\frac{ \sum_{j=1}^k {q_j}^2+ {p_j}^2}{\hbar} \Big).
	\end{equation*}
	The Wigner transform of Hartree (or Hartree-Fock) equation shares the structure of Vlasov equation, see \cite[Eq. (6.15)]{Benedikter2016book} for example. 

In this paper, following the ideas in \cite{Chen2021JSP}, we study the equation for Husimi measure. Instead of using the classical definition of Husimi measure in \eqref{eq:gaussian_m_vs_W}, we consider the following generalized  $k$-particle Husimi measure, which is given for example in \cite{Fournais2018}:
For any $p,q \in \R^3$ and $\psi_{N,t} \in L^2_a(\R^{3N})$, the $k$-particle Husimi measure is given by
\begin{equation}\label{eq:husimi_def_1}
	m_{N,t}^{(k)} (q_1,p_1,\dots, q_k, p_k) = \langle \psi_{N,t}, a^*(f^\hbar_{q_1, p_1}) \cdots a^*(f^\hbar_{q_k, p_k}) a(f^\hbar_{q_k, p_k}) \cdots  a(f^\hbar_{q_1, p_1}) \psi_{N,t} \rangle.
\end{equation}
Here $a^*(f^\hbar_{q,p})$ and $a(f^\hbar_{q,p})$ are standard creation- and annihilation-operator respectively\footnote{Definitions of creation- and annihilation-operator is provided in Appendix \ref{sec:FockSpace}.} with respect to the coherent state $f^\hbar_{q,p}$ given by
\begin{equation*}\label{eq:def_coherent}
	f^{\hbar}_{q, p} (y) := \hbar^{-\frac{3}{4}} f \left(\frac{y-q}{\sqrt{\hbar}} \right) e^{\frac{\ii}{\hbar} p \cdot y },
\end{equation*}
where $f$ is any given real-valued function satisfying $\norm{f}_2 =1$.

\begin{Remark}
	As stated in \cite{Fournais2018}, the $k$-particle Husimi measure $m^{(k)}_{N,t}$ describes how many fermions are within the $k$-semi-classical boxes of length $\sqrt{\hbar}$ centered at the phase-spaces $(q_1,p_1), \dots, (q_k,p_k)$.
\end{Remark}

\begin{Remark}
	If $f (x)= \pi^{-3/4} e^{-|x|^2/2}$, \cite{Combescure2012} shows that the $k$-particle Husimi measure $m_{N,t}^{(k)}$ coincides with the $\mathfrak{m}^{(k)}_{N,t}$ in \eqref{eq:gaussian_m_vs_W}.
\end{Remark}

\subsection{Main Result}
{Let $\psi_{N,t}$ be the solution to the Schr\"odinger equation in \eqref{eq:Schrodinger_0} and denote the one-particle Husimi measure of it by $m_{N,t} := m_{N,t}^{(1)}$.} From  \cite[Proposition 2.1]{Chen2021JSP}, we obtain the following identity:
\begin{equation} \label{eq:chen2021_representation}
	\begin{aligned}
		\p_t m_{N,t}(q,p) &+ p \cdot \nabla_{q} m_{N,t}(q,p) - \nabla_q\cdot\left(\hbar\, \Im \langle \nabla_{q} a (f^\hbar_{q,p}) \psi_{N,t}, a (f^\hbar_{q,p}) \psi_{N,t} \rangle\right)\\
		&= \frac{1}{(2\pi)^3}\nabla_p\cdot\int \dw_1 \du_1 \dw_2\du_2 \dq_2 \ddp_2\,    \left(  f^\hbar_{q,p}(w)  \overline{f^\hbar_{q,p}(u)} \right)^{\otimes 2}\\
		& \hspace{1.5cm}\int_0^1 \ds\ \nabla V_N\big( su_1+ (1-s)w_1-w_2 \big)  \gamma_{N,t}^{(2)}(u_1,u_2;w_1,w_2),
	\end{aligned}
\end{equation}
where
\[
\left(  f^\hbar_{q,p}(w_{1})  \overline{f^\hbar_{q,p}(u_{1})} \right)^{\otimes 2} :=  f^\hbar_{q,p}(w_1)  \overline{f^\hbar_{q,p}(u_1)}  f^\hbar_{q_2,p_2}(w_2)  \overline{f^\hbar_{q_2,p_2}(u_2)}.
\]

\begin{Remark}
	The two-particle reduced density matrix in \eqref{eq:chen2021_representation} {is given in} \eqref{eq:gammak}.
\end{Remark}

Our aim is therefore to obtain the convergence from $m_{N,t}$, in weak sense, to the solution of Vlasov equation $m_t$ as follows:
\begin{equation}\label{eq:vlasov}
	\begin{split}
		\partial_t m_{t}(q,p) &= -p \cdot \nabla_q m_t(q,p) + \nabla_q \big(V * \varrho_{t}\big)(q) \cdot \nabla_p  m_t(q,p),\\
		m_{t}(q,p)\big|_{t=0} &= m_0 (q,p),
	\end{split}
\end{equation}
where {$\varrho_{t} = \int \ddp\,  m_t(q,p)$ and the initial data $m_0 (q,p)$ is the one-particle Husimi measure of $\psi_{N,0}$ given in \eqref{eq:Schrodinger_0}.}

{The following assumptions are needed in this paper.}
\begin{assump}\label{assumptions}\phantom{text}
	\begin{enumerate}
		\item (Interaction potential) { $V\in L^1(\R^3)$ and $V(-x) = V(x)$. Furthermore it holds} $\int \mathrm{d}{p} (1+|p|^2) |\widehat{V}(p)| < \infty$, where $\widehat{V}$ is its Fourier transform of $V$.
		
		\item (Coherent state) $f \in H^1(\R^3)\cap L^\infty(\R^3)$ satisfies $\norm{f}_2 = 1$, and has compact support in $B_{R_1}$ {for a given $R_1>0$}.
		
		\item (Initial data) {$m_{N}$ converges weakly to $m_0$ in  $L^1(\R^3)$}. Furthermore, it satisfies
		\begin{equation}\label{eq:inip2q_ben}
			\int \dq \ddp\, (|p|^2 + |q|) m_{N}(q,p) <\infty
		\end{equation}
		{{uniformly for all $N$.}}
		
		\item (Initial data) { $\omega_N$, the one-particle density matrix of $\psi_{N,0}$} satisfies
		\begin{equation}\label{eq:assum_semi-classical}
			\begin{aligned}
				\sup_{p\in\bR^{3}}\frac{1}{1+|p|}\trnorm{[e^{\ii p\cdot x}, \omega_{N}]} &\leq CN\hbar,\\
				\trnorm{[\hbar \nabla, \omega_{N}]} &\leq CN\hbar,
			\end{aligned}
		\end{equation}
		where $\trnorm{\cdot}$ is the trace norm.
	\end{enumerate}
\end{assump}

\begin{Remark}
	The assumptions in \eqref{eq:assum_semi-classical} can be explained by the nature of the semi-classical structure. More details can be found in \cite{benedikter2014mean} where mean field limit has been studied.
\end{Remark}

With the assumptions presented above, our final goal is to obtain the following theorem:

\begin{Theorem} \label{thm:main_ben}	
	Let $m_{N,t}$ be the one-particle Husimi measure defined in \eqref{eq:husimi_def_1} with $\psi_{N,t}$ the solution of Schr\"odinger equation \eqref{eq:Schrodinger_0}, and suppose the aforementioned assumptions hold and $m_{t}$ is the solution of Vlasov equation in \eqref{eq:vlasov}. Then, for any given $T>0$, $(m_{N,t})_{N\in\bN}$ converges to $m_t$ weakly (*) in $L^p ((0,T)\times \R^3 \times \R^3)$ for arbitrary 
	$1\leq p \leq \infty$.
\end{Theorem}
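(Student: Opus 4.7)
The plan is to derive \eqref{eq:vlasov} from \eqref{eq:chen2021_representation} by rewriting the right-hand side as the classical Vlasov force $\nabla_q(V*\varrho_{N,t})\cdot\nabla_p m_{N,t}$ plus three residue terms, proving that each residue vanishes in a weak sense as $N\to\infty$ (using the estimates that form the technical core of this paper), extracting a weak-$\ast$ limit of $(m_{N,t})$ by compactness, passing to the limit in the weak formulation, and invoking uniqueness of weak solutions to \eqref{eq:vlasov} to promote the subsequential convergence to convergence of the whole sequence.

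Concretely, subtracting $\nabla_q(V*\varrho_{N,t})\cdot\nabla_p m_{N,t}$ from both sides of \eqref{eq:chen2021_representation} produces three error terms. The \emph{kinetic residue} $\nabla_q\cdot\bigl(\hbar\,\Im\langle\nabla_q a(f^\hbar_{q,p})\psi_{N,t},a(f^\hbar_{q,p})\psi_{N,t}\rangle\bigr)$ already present on the left measures the deviation of the Husimi transport from the free transport $p\cdot\nabla_q m$. The \emph{semiclassical residue} arises from replacing $\nabla V_N(su_1+(1-s)w_1-w_2)$ in \eqref{eq:chen2021_representation} by its value at $q-q_2$ and by freezing the coherent-state weights at $(q,p)$ and $(q_2,p_2)$; controlling it requires the compact support of $f$ (H2) together with $\int|p|^2|\widehat V(p)|\,dp<\infty$ (H1). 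The \emph{mean-field residue} comes from factorizing $\gamma_{N,t}^{(2)}$ against the Gaussian weights as a product of two one-particle kernels; integrating in the second phase-space variable then produces the spatial density $\varrho_{N,t}$ entering the Vlasov force.

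For compactness, the moment bound \eqref{eq:inip2q_ben} propagates under \eqref{eq:chen2021_representation} and yields a uniform-in-$N$ control on $\int(|p|^2+|q|)m_{N,t}(q,p)\,dq\,dp$ on $[0,T]$. Combined with the pointwise bound $0\leq m_{N,t}\leq 1$, which follows from the canonical anticommutation relations applied to the coherent-state representation \eqref{eq:husimi_def_1} and the fermionic nature of $\psi_{N,t}$, interpolation yields uniform boundedness of $m_{N,t}$ in $L^\infty((0,T);L^p(\R^3\times\R^3))$ for every $1\leq p\leq\infty$, and Banach--Alaoglu extracts a weakly-$\ast$ convergent subsequence in $L^p((0,T)\times\R^3\times\R^3)$ with some limit $m_t$; the convergence of the initial data from H3 fixes $m_t|_{t=0}=m_0$.

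To pass to the limit against a test function $\varphi\in C_c^\infty([0,T)\times\R^3\times\R^3)$, the linear terms are handled by weak convergence and the three residues vanish thanks to the estimates established in this paper, with the updated oscillation estimates closing the kinetic and semiclassical errors and the new mixed-norm bound on the two-particle reduced density matrix factorization making the mean-field residue smaller than the semiclassical one. The main obstacle is the nonlinear term $\nabla_q(V*\varrho_{N,t})\cdot\nabla_p m_{N,t}$, which is a product of two weakly converging factors; I would handle it by upgrading $\varrho_{N,t}=\int m_{N,t}(\cdot,p)\,dp$ to strong convergence via an Aubin--Lions argument, using the moment bound for spatial tightness and \eqref{eq:chen2021_representation} together with the residue estimates to control $\partial_t\varrho_{N,t}$ in a suitable negative-order space. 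The regularity of $V$ implicit in H1, in particular the global Lipschitz continuity of $\nabla V$, then provides Dobrushin-type uniqueness for \eqref{eq:vlasov}, identifying the limit with $m_t$ and upgrading the subsequential convergence to the convergence of the full sequence.
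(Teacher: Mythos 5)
Your overall plan coincides with the paper's proof: pass from \eqref{eq:chen2021_representation} to the reformulated equation \eqref{eq:BBGKY_k1_ben} with the kinetic, semiclassical and mean-field residues; use the uniform bounds $0\leq m_{N,t}\leq 1$, the moment estimate of Proposition \ref{prop:2nd_moment_finite}, and interpolation for weak-$\ast$ compactness in $L^\infty(0,T;L^s)$; show the three residues vanish by Propositions \ref{prop:tildeR}, \ref{prop:est_R_1}, \ref{prop:est_mean-field_osc}; handle the nonlinear force via an Aubin--Lions argument; and conclude by uniqueness for the Vlasov equation (Dobrushin stability for $\nabla V$ Lipschitz). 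That is exactly what Section \ref{sec:proof_mainTheorem} does.

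The one place where your sketch is not quite right is the compactness step for the nonlinear term. You propose to apply Aubin--Lions to $\varrho_{N,t}$ itself, "using the moment bound for spatial tightness." But Aubin--Lions needs a compact embedding $X\hookrightarrow\hookrightarrow Y$ in the spatial variable, and $\varrho_{N,t}$ enjoys no uniform \emph{spatial regularity}: being bounded in $L^1\cap L^{5/3}$ with a first-moment bound gives tightness but not equicontinuity under translations, so you cannot extract a strongly $L^p$-convergent subsequence of $\varrho_{N,t}$. The paper circumvents this by applying Aubin--Lions not to $\varrho_{N,t}$ but to the smoothed quantity $\nabla V\ast\varrho_{N,t}$: since assumption (H1) implies $V\in W^{2,\infty}(\R^3)$, Young's inequality gives $\|\nabla V\ast\varrho_{N,t}\|_{L^\infty(0,T;W^{1,\infty})}\leq C$, and the spatial compactness $W^{1,\infty}(B_R)\hookrightarrow\hookrightarrow L^r(B_R)$ is what drives the lemma. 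Combined with the time-derivative bound $\|\partial_t(\nabla V\ast\varrho_{N,t})\|_{L^\infty(0,T;W^{-1,3/2})}\leq C$ (obtained from the continuity equation for $\varrho_{N,t}$ and the kinetic residue estimate), this yields $\nabla V\ast\varrho_{N,t}\to\nabla V\ast\varrho_t$ strongly in $L^r_{\mathrm{loc}}$, which is exactly what is needed to pass to the limit in the product $\nabla_q(V\ast\varrho_{N,t})\cdot\nabla_p\phi$. So the route is the same, but you should convolve with $\nabla V$ \emph{before} invoking Aubin--Lions; the regularity gain is the whole point, and a direct Aubin--Lions argument on $\varrho_{N,t}$ would fail.
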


\begin{Remark}
	As a consequence, \cite[Theorem 6.9]{Villani2003} implies, for $t\geq 0$, that $W_1 (m_{N,t}, m_t ) \to 0$ as $ N \to \infty$, where $W_1$ is 1-Weisserstein distance.
\end{Remark}

For convenience, we use the Fock space formalism, which will be briefly introduced in Appendix \ref{sec:FockSpace}.
By using Husimi transform given in \eqref{eq:chen2021_representation},\ 
the Schr\"odinger equation for $\psi_{N,t}$ can be rewritten into the Vlasov type equation for $m_{N,t}$  with residual terms. More precisely, from the computations in \cite{Chen2021JSP}, we have
\begin{equation} \label{eq:BBGKY_k1_ben}
	\begin{aligned}
		&\p_t m_{N,t}(q,p) + p \cdot \nabla_{q} m_{N,t}(q,p)\\
		&= \frac{1}{(2\pi)^3}\nabla_{p} \cdot  \int \dd{q_2} \nabla V(q- q_2) \varrho_{N,t}(q_2) m_{N,t}(q,p)
		+ \nabla_{q}\cdot \widetilde{\cR} +\nabla_{p}\cdot \cR_\rms +\nabla_{p}\cdot \cR_\rmm ,
	\end{aligned}
\end{equation}
where $\varrho_{N,t}(q):= \int \dd{p} m_{N,t}(q,p)$, the kinetic residue  $\widetilde{\cR}$, the semi-classical residue $\cR_\rms$ and the mean-field residue $\cR_\rmm$ are given by
\begin{equation}\label{eq:bbgky_remainder_1_ben}
	\begin{aligned}
		\tilde{\cR}_{\phantom{1}} &:=   \hbar \Im \left< \nabla_{q} a (f^\hbar_{q,p}) \Psi_{N,t}, a (f^\hbar_{q,p}) \Psi_{N,t} \right>,
		\\
		{\cR_\rms} &:= \frac{1}{(2\pi)^3}  \int \dw_1 \du_1 \dw_2\du_2 \dq_2 \ddp_2\,    \left(  f^\hbar_{q,p}(w)  \overline{f^\hbar_{q,p}(u)} \right)^{\otimes 2}\\
		&
		\qquad \qquad  \bigg[\int_0^1 \ds\ \nabla V\big( su_1+ (1-s)w_1-w_2 \big)   - \nabla V(q-q_2) \bigg]\gamma_{N,t}^{(2)}(u_1,u_2;w_1,w_2)  ,
		\\
		{\cR_\rmm} &:= 	\frac{1}{(2\pi)^3}  \int \dw_1 \du_1 \dw_2\du_2 \dq_2 \ddp_2\,    \left(  f^\hbar_{q,p}(w)  \overline{f^\hbar_{q,p}(u)} \right)^{\otimes 2}\\
		&
		\qquad  \nabla V(q-q_2) \bigg[ \gamma_{N,t}^{(2)}(u_1,u_2;w_1,w_2) -\gamma_{N,t}^{(1)}(u_1;w_1)\gamma_{N,t}^{(1)}(u_2;w_2)\bigg].
	\end{aligned}
\end{equation}

\begin{Remark}
The three terms - kinetic residue $\widetilde{\cR}$, semiclassical residue $\cR_\rms$, and mean-field residue $\cR_\rmm$ - arise due to the following reasons:
The kinetic residue $\widetilde{\cR}$ is bounded by the kinetic energy estimate. This term's name reflects its connection to the kinetic energy used in establishing the bound.
The semiclassical residue $\cR_\rms$ is proven to be small by oscillatory integrals that appear due to the Husimi transform. Similarly, the estimate of mean-field residue $\cR_\rmm$ is transformed into the factorization of two-particle reduced density matrix, which is a characteristic of mean-field behavior.
Thus, its name denotes its association with mean-field properties.
\end{Remark}

Under the assumptions of Theorem \ref{thm:main_ben},
	the estimate of the kinetic residue  $\tilde{\cR}$ can be obtained exactly the same as in \cite{Chen2021JSP}.
	(In the updated arXiv version of \cite{Chen2021JSP}, 
	the oscillation estimates have been corrected, with which
	the estimates for the residue terms from BBGKY hierarchy as well as the main result still hold true.)
	In this paper we give a better estimate for $\tilde{\cR}$ in the following proposition.
	\begin{Proposition}\label{prop:tildeR}
		Under the assumptions 1,2, and 3 in  \ref{assumptions}, the following estimate holds
		\begin{align}
			\norm{\int\ddp\,\big|\widetilde{\cR}(p,\cdot)\big|}_{L^{\frac{5}{4}}(\bR^3)}\leq C\hbar^{\frac{1}{2}}.
		\end{align}
As a consequence, the following holds:
\begin{align}
\bigg|\int\dd{q}\ddp\,\varphi(q)\phi(p) \nabla_{q}\cdot \tilde{\cR}(q,p)\bigg| \leq C\hbar^{\frac{1}{2}}.
\end{align}
where  $C$ depends on $\varphi, \phi, f$.
\end{Proposition}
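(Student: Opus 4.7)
The plan is to exploit the derivative structure of the coherent state: since $f^\hbar_{q,p}(y) = \hbar^{-3/4} f((y-q)/\sqrt\hbar) e^{\ii p\cdot y/\hbar}$, a $q$-derivative produces a factor $\hbar^{-1/2}$ and replaces the profile $f$ by $\nabla f$, i.e.\ $\nabla_q f^\hbar_{q,p}(y) = -\hbar^{-1/2} (\nabla f)^\hbar_{q,p}(y)$. Since $a(\cdot)$ commutes with $\partial_{q_j}$ (as real-parameter differentiation commutes with antilinear operations), one has $\nabla_q a(f^\hbar_{q,p}) = -\hbar^{-1/2} a((\nabla f)^\hbar_{q,p})$, hence
\begin{equation*}
\tilde{\cR}(q,p) = -\hbar^{1/2}\,\Im\big\langle a((\nabla f)^\hbar_{q,p})\Psi_{N,t},\, a(f^\hbar_{q,p})\Psi_{N,t}\big\rangle.
\end{equation*}
Pointwise Cauchy--Schwarz in Fock space, followed by Cauchy--Schwarz in $p$, then yields
\begin{equation*}
\int\ddp\,|\tilde{\cR}(q,p)| \leq \hbar^{1/2}\, A(q)^{1/2}\, B(q)^{1/2},
\end{equation*}
where $A(q):=\int\ddp\,\|a((\nabla f)^\hbar_{q,p})\Psi_{N,t}\|^2$ and $B(q):=\int\ddp\,\|a(f^\hbar_{q,p})\Psi_{N,t}\|^2 = \varrho_{N,t}(q)$.

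The key computation is that the $p$-integration yields a Dirac delta through $\int\ddp\, e^{\ii p\cdot(y-x)/\hbar} = (2\pi\hbar)^3 \delta(x-y)$, so both $A$ and $B$ reduce to $(2\pi\hbar)^3$ times a convolution of the spatial density $\rho^{(1)}_{N,t}(x):=\gamma^{(1)}_{N,t}(x;x)$ against a scaled non-negative profile coming from $|f|^2$ or $|\nabla f|^2$, whose $L^1$-norms equal $\|f\|_2^2=1$ and $\|\nabla f\|_2^2$ respectively. Taking the $L^{5/4}(\bR^3)$-norm in $q$ and applying Cauchy--Schwarz (dual exponents $5/2,5/2$) together with Young's convolution inequality,
\begin{equation*}
\Big\|\int\ddp\,|\tilde{\cR}|\Big\|_{L^{5/4}} \leq \hbar^{1/2}\,\|A\|_{L^{5/4}}^{1/2}\|B\|_{L^{5/4}}^{1/2} \leq C\hbar^{1/2}\cdot \hbar^{3}\,\|\rho^{(1)}_{N,t}\|_{L^{5/4}}.
\end{equation*}

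The remaining ingredient is an $L^{5/4}$-bound on $\rho^{(1)}_{N,t}$. By log-convex interpolation (with $1/(5/4)=(1/2)\cdot 1 + (1/2)\cdot(3/5)$),
\begin{equation*}
\|\rho^{(1)}_{N,t}\|_{L^{5/4}} \leq \|\rho^{(1)}_{N,t}\|_{L^1}^{1/2}\,\|\rho^{(1)}_{N,t}\|_{L^{5/3}}^{1/2}.
\end{equation*}
The $L^1$-norm equals $N$. The $L^{5/3}$-norm is controlled by the Lieb--Thirring inequality $\|\rho^{(1)}_{N,t}\|_{L^{5/3}}^{5/3}\leq C\,\hbar^{-2}\Tr(-\hbar^2\Delta\gamma^{(1)}_{N,t})$; the kinetic energy is uniformly bounded by $CN$ via conservation of total energy, using $V\in L^\infty$ (implied by Assumption 1 on $\hat V$) to control the potential contribution, and Assumption 3 which ensures an initial kinetic energy of size $O(N)$. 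This gives $\|\rho^{(1)}_{N,t}\|_{L^{5/3}}\leq CN$ and thus $\|\rho^{(1)}_{N,t}\|_{L^{5/4}}\leq CN$. Substituting back and using $\hbar = N^{-1/3}$ so that $\hbar^3 N = 1$, the claimed bound $C\hbar^{1/2}$ follows.

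For the consequence, integrate by parts in $q$ and apply Hölder with $1/5 + 4/5 = 1$:
\begin{equation*}
\Big|\int\dq\,\ddp\,\varphi(q)\phi(p)\,\nabla_q\cdot\tilde{\cR}(q,p)\Big| \leq \|\phi\|_{L^\infty}\|\nabla\varphi\|_{L^5}\,\Big\|\int\ddp\,|\tilde{\cR}|\Big\|_{L^{5/4}} \leq C\hbar^{1/2}.
\end{equation*}
The main obstacle is bookkeeping: tracking the cancellation between the $\hbar^{-1/2}$ from $\nabla_q$ on the coherent state and the $(2\pi\hbar)^3$ from the $p$-integration, so that after Lieb--Thirring and the $L^1/L^{5/3}$ interpolation, the total $N$-dependence is absorbed exactly by $\hbar^3 N = 1$; the exponent $5/4$ is dictated precisely by this balance.
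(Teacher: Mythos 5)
Your proof is correct and shares the same skeleton as the paper's argument: Cauchy--Schwarz in Fock space, Cauchy--Schwarz in $p$, the Dirac-$\delta$ identity $\int\ddp\,e^{\frac{\ii}{\hbar}p\cdot(w-u)}=(2\pi\hbar)^3\delta(w-u)$, and then a H\"older split at the $L^{5/4}$ level followed by integration by parts for the consequence. The point of departure is how the two factors are estimated. The paper uses an \emph{asymmetric} H\"older split: the factor involving $\nabla f$ is placed entirely in $L^1(\dq)$, where it collapses by a change of variables to $\|\nabla f\|_2^2\langle\Psi_{N,t},\cN\Psi_{N,t}\rangle=\|\nabla f\|_2^2\,N$, while the other factor is kept as the Husimi density $\varrho_{N,t}$ and bounded through $\|\varrho_{N,t}\|_{L^{5/3}}\leq C$, which follows from the pointwise bound $0\leq m_{N,t}\leq 1$ and the second-moment estimate of Appendix~\ref{sec:apriori}. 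You instead split \emph{symmetrically} (both factors in $L^{5/2}$), rewrite both $A(q)$ and $B(q)$ after the $p$-integration as $(2\pi\hbar)^3$ times an $\hbar$-scaled convolution of the microscopic density $\rho^{(1)}_{N,t}(x)=\gamma^{(1)}_{N,t}(x;x)$ with a unit-mass profile, and then invoke Young's convolution inequality, $L^1$--$L^{5/3}$ interpolation, and the quantum Lieb--Thirring inequality together with the propagated kinetic-energy bound (Lemma~\ref{lem:kinetic_finite}) to obtain $\|\rho^{(1)}_{N,t}\|_{L^{5/4}}\leq CN$. Both routes close to $C\hbar^{1/2}$ using $N\hbar^3=1$. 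The paper's version is more economical because it never leaves the Husimi picture and leans directly on the already-established $L^{5/3}$ bound for $\varrho_{N,t}$; your version is slightly longer but makes the cancellation between the $\hbar^{-1/2}$ from the coherent-state derivative, the $(2\pi\hbar)^3$ from the $\delta$-function, and the $N$ from the density bound fully explicit, and it rests on the standard Lieb--Thirring machinery rather than the Husimi-specific \emph{a priori} estimates.
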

The proof will be listed in Section \ref{sec:kinetic_R}.

In addition, we obtain the following two propositions for the other two residues:
\begin{Proposition}\label{prop:est_R_1} Under the assumptions 1,2, and 3 in  \ref{assumptions} and let $\phi, \varphi$ be test functions, then the following inequality holds:
\begin{equation}
\begin{aligned}
	\bigg|\int \dq \ddp\, \varphi(q) \phi(p) \nabla_{p} \cdot \cR_\rms(q,p) \bigg| 	\leq C \hbar^{\frac{1}{2}+3(\alpha_2 - 1)} ,
\end{aligned}
\end{equation}
where $\frac{1}{2} < \alpha_2 < 1$, and $C$ depends on $\varphi, \phi, f$.
\end{Proposition}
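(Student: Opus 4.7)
The strategy exploits the near-cancellation between $\int_0^1 \nabla V(su_1+(1-s)w_1-w_2)\,\mathrm{d}s$ and $\nabla V(q-q_2)$. This cancellation is expected because, by the compact-support hypothesis on $f$ in Assumption 2, the coherent state $f^\hbar_{q,p}$ is supported in a ball of radius $\sqrt{\hbar}R_1$ around $q$, forcing $|u_1-q|,|w_1-q|\lesssim\sqrt{\hbar}$ and (by the analogous factor at $(q_2,p_2)$) also $|w_2-q_2|\lesssim\sqrt{\hbar}$ throughout the integrand of $\cR_\rms$.

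The first step is to integrate by parts in $p$ to move $\nabla_p$ onto the test function $\phi$, reducing the task to bounding a weighted phase-space integral of $|\cR_\rms|$. Next I would insert the Fourier representation $\nabla V(x) = \tfrac{i}{(2\pi)^3}\int \dd{k}\, k\, \widehat V(k)\, e^{ik\cdot x}$, justified by Assumption 1, and rewrite the bracketed difference as the oscillatory expression $\int_0^1 e^{ik\cdot(su_1+(1-s)w_1-w_2)}\,\mathrm{d}s - e^{ik\cdot(q-q_2)}$ weighted by $k\,\widehat V(k)$.

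I would then split the $k$-integration at the threshold $|k|=\hbar^{-\alpha_2}$. On the low-frequency region $|k|\le\hbar^{-\alpha_2}$, I apply the telescoping identity $e^{ik\cdot z}-e^{ik\cdot z_0} = ik\cdot(z-z_0)\int_0^1 e^{ik\cdot(z_0+t(z-z_0))}\,\mathrm{d}t$ with $z=su_1+(1-s)w_1-w_2$ and $z_0 = q-q_2$; on the support of the coherent states this gains a factor $\sqrt{\hbar}\,|k|$. Combining with $\int(1+|k|^2)|\widehat V(k)|\,\dd{k}<\infty$ and the total-mass control of the two-particle object built from $\gamma^{(2)}_{N,t}$, this region contributes $O(\sqrt{\hbar})$. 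On the high-frequency region $|k|>\hbar^{-\alpha_2}$, the two summands of the oscillatory difference are estimated separately using the tail bound $\int_{|k|>\hbar^{-\alpha_2}}|k||\widehat V(k)|\,\dd{k}\le\hbar^{\alpha_2}\int|k|^2|\widehat V(k)|\,\dd{k}$.

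The main obstacle is turning the high-frequency estimate into a genuine bound on the full phase-space integral. The oscillatory factors $e^{ip\cdot(w_j-u_j)/\hbar}$ coming from the coherent states interact with the new oscillation $e^{ik\cdot z}$, and controlling the result requires a careful oscillation/localization argument (of the sort that updates the oscillation step of \cite{Chen2021JSP}), which introduces the volume factor $\hbar^{-3(1-\alpha_2)}$ reflecting the mismatch between the coherent-state scale $\sqrt{\hbar}$ and the frequency cutoff $\hbar^{\alpha_2}$. Balancing this against the $\hbar^{\alpha_2}$ gained from the Fourier tail, and against the $\sqrt{\hbar}$ from the low-frequency Taylor expansion, yields the stated bound $C\hbar^{1/2+3(\alpha_2-1)}$, which becomes nontrivial precisely when $\alpha_2$ is taken sufficiently close to $1$.
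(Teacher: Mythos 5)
Your proposal replaces the paper's direct use of $V\in W^{2,\infty}$ (which follows from $\int(1+|k|^2)|\hat V|<\infty$) with a Fourier-frequency decomposition of $\nabla V$, splitting at $|k|=\hbar^{-\alpha_2}$ and Taylor-expanding $e^{ik\cdot z}-e^{ik\cdot z_0}$ in the low-frequency region. The paper does something structurally different: it bounds $|\nabla V(z)-\nabla V(z_0)|\lesssim\|D^2V\|_{L^\infty}|z-z_0|\lesssim\sqrt{\hbar}$ once and for all, and then applies the oscillation estimate of Lemma~\ref{lem:estimate_oscillation} to split the $(w_1-u_1)$ \emph{position} domain at the scale $\hbar^{\alpha_2}$, balancing the volume $|\Omega_\hbar^{\alpha_2}|\sim\hbar^{3\alpha_2}$ on the near-diagonal against the decay $\hbar^{(1-\alpha_2)s}$ on the complement.

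The key gap is your claim that the low-frequency region contributes $O(\sqrt{\hbar})$. This cannot be right without also exploiting the $p$-oscillation. Tracking the powers of $\hbar$ exactly as the paper does: after the $\ddp_2$-integration produces a $\delta$-function that cancels the $(2\pi\hbar)^{-3}$ normalization, the $q_2$-average of $|f|^2$ yields $\hbar^{3/2}$; the $q$-overlap of the two $f$ factors yields another $\hbar^{3/2}$; an \emph{unrestricted} $u_1$-integration over the compact support of $f$ (which is what one has if the $p$-integral is just bounded by $\|\nabla\phi\|_{L^1}$) yields $\hbar^{3/2}$; and the two-particle mass $\int\dw_1\langle\psi_{N,t},a^*_{w_1}\cN a_{w_1}\psi_{N,t}\rangle\sim N^2=\hbar^{-6}$. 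Multiplying by your Taylor gain $\sqrt{\hbar}$ gives $\hbar^{3/2+3/2+3/2+1/2-6}=\hbar^{-1}$, not $\hbar^{1/2}$. The missing factor $\hbar^{3\alpha_2-3/2}$ is precisely what the oscillation lemma supplies (it shrinks the $u_1$-volume from $\hbar^{3/2}$ to $\hbar^{3\alpha_2}$ on the near-diagonal, or produces $\hbar^{(1-\alpha_2)s}$ off it). You apply the oscillation argument only to the high-frequency piece, so the low-frequency bookkeeping as stated is off by $\hbar^{-3/2}$, and the asserted "total-mass control of the two-particle object" does not rescue it.

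There is also a conceptual mismatch: your $\alpha_2$ is a frequency cutoff for $\hat V$, whereas the exponent $\alpha_2$ appearing in the stated bound $\hbar^{1/2+3(\alpha_2-1)}$ is, in the paper, the position-scale of the $(w_1-u_1)$ split. There is no a priori reason these two parameters should coincide, and the volume factor $\hbar^{-3(1-\alpha_2)}$ you introduce for the high-frequency region is left unexplained. To make the argument rigorous you would need to reinstall the $\Omega_\hbar^{\alpha_2}$ domain split \emph{before} any decomposition of $\nabla V$; but once that is in place, the Lipschitz bound on $\nabla V$ already yields $\sqrt{\hbar}$ and the Fourier splitting of $V$ becomes superfluous.
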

Detailed proof will be given in Section \ref{sec:semi-classical_R}.

\begin{Proposition}\label{prop:est_mean-field_osc} Assuming  \ref{assumptions},
	let $\varphi, \phi \in C_0^\infty(\R^3)$ be test functions, then the following inequality holds:
	\begin{equation}
		\bigg|\int \dq \ddp\, \varphi(q)\phi(p)  \nabla_{p} \cdot \cR_\rmm(q,p) \bigg| \leq {C} \hbar^{\frac{3}{2}(\alpha_1 {-} \frac{1}{2}) + \frac{3}{2}}
	\end{equation}
	where $\frac{1}{2} < \alpha_1 < 1$, $C$ depends on $\phi, \varphi, f$.
\end{Proposition}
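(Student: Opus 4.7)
The plan is to convert the phase-space integral $\int\dq\ddp\,\varphi(q)\phi(p)\nabla_p\!\cdot\!\cR_\rmm(q,p)$ into a trace pairing between plane-wave operators and the two-body correlator $\gamma^{(2)}_{N,t}-\gamma^{(1)}_{N,t}\otimes\gamma^{(1)}_{N,t}$, and then invoke the mixed-norm factorization estimate built in this paper. As a first move I would integrate by parts in $p$ so the divergence lands on the test function, reducing the task to bounding
\begin{equation*}
\Big|\int\dq\ddp\,\varphi(q)\,\nabla_p\phi(p)\cdot\cR_\rmm(q,p)\Big|.
\end{equation*}
Then I would insert the Fourier decomposition $\nabla V(q-q_2)=(2\pi)^{-3/2}\int d\xi\,i\xi\,\widehat V(\xi)\,e^{i\xi\cdot q}e^{-i\xi\cdot q_2}$ inside $\cR_\rmm$, so that the interaction factors into plane waves in $q$ and in $q_2$ separately.

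With $V$ in this form, the integrations over $(q_2,p_2)$ and then over $(w_1,u_1,w_2,u_2)$ can be performed using the reproducing/completeness properties of the coherent states $f^\hbar_{q,p}$. The $p_2$-integral produces $(2\pi\hbar)^3\delta(w_2-u_2)$, while the subsequent $q_2$-integral against $e^{-i\xi\cdot q_2}$ yields the smoothed multiplier $\widehat{|f|^2}(\sqrt\hbar\,\xi)\,e^{-i\xi\cdot w_2}$. An analogous calculation on the $(q,p)$-centred coherent states recasts the pairing schematically as
\begin{equation*}
\int d\xi\,\xi\,\widehat V(\xi)\,\widehat{|f|^2}(\sqrt\hbar\,\xi)\,\mathcal{T}(q,\xi),\qquad \mathcal{T}(q,\xi):=\Tr_{1,2}\!\Big[\big(M^{(1)}_{q,\xi}\otimes M^{(2)}_{\xi}\big)\big(\gamma^{(2)}_{N,t}-\gamma^{(1)}_{N,t}\otimes\gamma^{(1)}_{N,t}\big)\Big],
\end{equation*}
where $M^{(j)}_{\pm\xi}$ is a plane wave $e^{\pm i\xi\cdot x_j}$ smeared by a coherent-state mollifier of scale $\sqrt\hbar$.

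I would then split the $\xi$-integral at the threshold $|\xi|\sim\hbar^{-\alpha_1}$. On the high-frequency portion $|\xi|>\hbar^{-\alpha_1}$, the assumption $\int(1+|\xi|^2)|\widehat V(\xi)|d\xi<\infty$ together with the rapid decay of $\widehat{|f|^2}(\sqrt\hbar\,\xi)$ produces an arbitrarily small contribution, which is the source of the gain growing with $\alpha_1$. On the low-frequency portion, I would apply the paper's mixed-norm factorization estimate for $\gamma^{(2)}_{N,t}-\gamma^{(1)}_{N,t}\otimes\gamma^{(1)}_{N,t}$, yielding a bound on $\mathcal{T}(q,\xi)$ of the form $\hbar^{\beta}|\xi|^{\kappa}$ with $\kappa$ mild. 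Combining the frequency split, the weight $|\xi|\,|\widehat V(\xi)|$, the smoothed multiplier, and the mixed-norm bound, and finally integrating against $\varphi(q)\nabla_p\phi(p)$, would produce the stated rate $\hbar^{\frac{3}{2}(\alpha_1-\frac{1}{2})+\frac{3}{2}}$ after balancing the two regions in $\alpha_1\in(\tfrac12,1)$.

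The main obstacle is the mixed-norm factorization estimate itself. Controlling $\Tr_{1,2}[(A_1\otimes A_2)(\gamma^{(2)}_{N,t}-\gamma^{(1)}_{N,t}\otimes\gamma^{(1)}_{N,t})]$ uniformly on $[0,T]$ with the right scaling in $\hbar$ is delicate: in contrast to the Hartree-type difference $\gamma^{(2)}_{N,t}-\omega_{N,t}^{\otimes 2}$, the present correlator still carries the fermionic exchange contribution and cannot be reduced to the standard BBGKY--Hartree--Fock factorization. The plan is to propagate the estimate from the initial data using the semi-classical commutator hypotheses \eqref{eq:assum_semi-classical} through a Gr\"onwall argument adapted to the $N$-body Schr\"odinger flow, with a carefully calibrated pair of mixed norms so that their product absorbs the smeared plane waves $M^{(1)}_{q,\xi},M^{(2)}_{\xi}$ while keeping the $|\xi|$-growth controlled.
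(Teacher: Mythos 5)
Your high-level plan shares one correct opening step with the paper---integrating by parts in $p$ and reducing to a pairing between observables and the correlator $\gamma^{(2)}_{N,t}-\gamma^{(1)}_{N,t}\otimes\gamma^{(1)}_{N,t}$---but the route you then take diverges and contains a genuine gap.

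The paper's Lemma~\ref{lem:est_mean-field_osc} does \emph{not} Fourier-decompose $\nabla V$; it keeps the interaction in position space and relies only on $\norm{\nabla V}_{L^\infty}$. The $\alpha_1$-dependent gain comes from splitting the position-space domain of $w_1-u_1$ (the conjugate variable to $p$) into $\Omega_\hbar^{\alpha_1}$ and its complement and applying the oscillation estimate of Lemma~\ref{lem:estimate_oscillation} to the inner $p$-integral, not from a frequency cutoff on $\widehat V$. Your proposal to split the $\xi$-integral at $|\xi|\sim\hbar^{-\alpha_1}$ hinges on ``rapid decay'' of $\widehat{|f|^2}(\sqrt\hbar\,\xi)$, but Assumption H2 only guarantees $f\in H^1\cap L^\infty$ with compact support, so $\widehat{|f|^2}$ has at best mild polynomial decay; the high-frequency smallness you are counting on is not available from the hypotheses as stated.

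More importantly, your claim that the correlator ``cannot be reduced to the standard BBGKY--Hartree--Fock factorization'' because of the exchange term is incorrect and is precisely what the paper does. Inserting the Hartree--Fock solution $\omega_{N,t}$ gives $\gamma^{(2)}_{N,t}-\gamma^{(1)}_{N,t}\otimes\gamma^{(1)}_{N,t}=T_1+T_2+T_3$ as in \eqref{eq:insert}, with $T_1=\gamma^{(2)}_{N,t}-\omega_{N,t}\otimes\omega_{N,t}$ and $T_2,T_3$ controlled by $\hsnorm{\gamma^{(1)}_{N,t}-\omega_{N,t}}$ and $\trnorm{\gamma^{(1)}_{N,t}-\omega_{N,t}}$. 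The term $T_1$ is then handled through the Bogoliubov transformation and the fluctuation-dynamics bound of Lemma~\ref{lem:est_U}, \emph{not} by a Grönwall argument applied directly to a mixed norm; the exchange contribution poses no obstruction in this decomposition. So while your sketch is aimed at the right kind of bound, both the mechanism for the $\alpha_1$-gain (frequency cutoff of $V$ vs.\ oscillation estimate in $w_1-u_1$) and the treatment of the two-particle correlator (direct Grönwall in a mixed norm vs.\ Hartree--Fock insertion plus Bogoliubov expansion) differ from the paper, and as written the second half of your plan would not close.
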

Detailed proof will be given in Section \ref{sec:mean_field_residue}. 

By using a similar idea in the estimate of the semi-classical residue term, as shown in Lemma \ref{lem:est_mean-field_osc}, the estimate of the mean-field residue term can be reduced to the corresponding quantities involving
\begin{equation}\label{eq:intro_insert}
	\begin{aligned}
		&\gamma_{N,t}^{(2)}(u_1,u_2;w_1,w_2)  - \gamma_{N,t}^{(1)}(u_1;w_1)\gamma_{N,t}^{(1)}(u_2;w_2) \\
		&
		=  \gamma_{N,t}^{(2)}(u_1,u_2;w_1,w_2) - \omega_{N,t}(u_1;w_1) \omega_{N,t}(u_2;w_2)\\
		&
		\quad + \big[ \omega_{N,t}(u_1;w_1)  -  \gamma^{(1)}_{N,t}(u_1;w_1)\big] \omega_{N,t}(u_2;w_2)\\
		&
		\quad + \gamma^{(1)}_{N,t}(u_1;w_1) \big[\omega_{N,t}(u_2;w_2) -  \gamma_{N,t}^{(1)}(u_2;w_2)\big]\\
		&
		=: T_1 + T_2 +T_3,
	\end{aligned}
\end{equation}
where $\omega_{N,t}$ is the solution to Hartree-Fock equation.
The terms with {$T_2$ and $T_3$ can be estimated by the trace norm and Hilbert-Schmidt norm of $\gamma_{N,t}^{(1)} - \omega_{N,t}$, respectively.} To control $T_1$, we provide a bound of the following mixed norm estimate for two particle density matrix
\begin{equation}\label{eq:mixnorm}
	\left(\int \dw_1\du_1\ \left[ \int \mathrm{d}{w_2} \left|\gamma_{N,t}^{(2)}(w_1, w_2; u_1, w_2)  - \omega_{N,t}(w_1;u_1) \omega_{N,t}(w_2;w_2)  \right| \right]^2   \right)^\frac{1}{2}.
\end{equation}

In \cite{benedikter2014mean}, the convergence with respect to the trace norm and Hilbert-Schmidt norm of the difference between $\gamma_{N,t}^{(k)}$ and $\omega_{N,t}^{(k)}$ are obtained separately with the help of Wick's theorem for $k \geq 2$. {However, in the current framework the mixed norm estimate as listed in \eqref{eq:mixnorm} needs extra efforts. We trace the strategies given in \cite{benedikter2014mean} to reduce it to the estimate of the expectation of the number operator $\cN$ along the quantum fluctuation}.

{The above estimates show that} in the sense of distribution $\cR_\rms\sim \hbar^{\frac{1}{2}-}$ and $\cR_\rmm\sim \hbar^{\frac{9}{4}-}\sim N^{-\frac{3}{4}-}$, from which one can observe that the semi-classical and mean field residue terms are not of the same order in the combined limit $N^{-1}=\hbar^3$ argument. 

This paper is arranged as follows: we prove the estimate for semi-classical residue in Section \ref{sec:semi-classical_R}, followed by the estimate for mean-field residue in Section \ref{sec:mean_field_residue}. {Then we} conclude the proof of Theorem \ref{thm:main_ben} in Section \ref{sec:proof_mainTheorem}.
In the appendices, for reader's convenience, we list some basic notations and known estimates.

	\vspace{1.5em}	\noindent\textit{Acknowledgment}
	We acknowledge support from the Deutsche Forschungsgemeinschaft through grant CH 955/4-1. 
	Yue Li supported by National Natural Science Foundation of China (12071212).
	Jinyeop Lee received funding from the Deutsche Forschungsgemeinschaft (DFG, German Research Foundation) under Germany’s Excellence Strategy (EXC-2111-390814868).	

\section{Estimate for kinetic residue}\label{sec:kinetic_R}
In this section, we provide the estimate for $\widetilde{\cR}$ in Proposition \ref{prop:tildeR}.
\begin{proof}[Proof of Proposition \ref{prop:tildeR}]
	Note that
	\begin{align*}
		& \left| \int  \ddp \,  \big| \tilde{\cR}{(q,p)} \big| \right| \\
		&
		\leq \hbar   \int \ddp \norm{\nabla_{q} a (f^\hbar_{q,p}) \Psi_{N,t}}\norm{ a (f^\hbar_{q,p}) \Psi_{N,t} }\\
		&
		\leq    \left[ \hbar^2 \int \ddp  \left< \nabla_{q} a (f^\hbar_{q,p}) \Psi_{N,t},   \nabla_{q} a (f^\hbar_{q,p}) \Psi_{N,t} \right> \right]^\frac{1}{2} \left[\int  \ddp\, m_{N,t}(q,p) \right]^\frac{1}{2}\\
		&
		{=}  \left[ \hbar^2 \int \ddp  \left< \nabla_{q} a (f^\hbar_{q,p}) \Psi_{N,t},   \nabla_{q} a (f^\hbar_{q,p}) \Psi_{N,t} \right> \right]^\frac{1}{2} \varrho_{N,t}^\frac{1}{2}\\
		&	=\left[ \hbar^{\frac{1}{2}} \int \ddp  \int  \dw \du\   \nabla_{q} f\left(\frac{w-q}{\sqrt{\hbar}}\right)\nabla_{q} f\left(\frac{u-q}{\sqrt{\hbar}}\right) e^{\frac{\ii}{\hbar}p \cdot (w-u)} \left< \Psi_{N,t},  a^*_w a_u \Psi_{N,t} \right> \right]^\frac{1}{2} \varrho_{N,t}^\frac{1}{2}{(q)}\\
		&	=(2\pi)^3 \left[  \hbar^{\frac{1}{2}+3} \int \dw\ \hbar^{-1}\left|\nabla f\left(\frac{w-q}{\sqrt{\hbar}}\right) \right|^2 \left< \Psi_{N,t},  a^*_w a_w \Psi_{N,t} \right>\right]^\frac{1}{2} \varrho_{N,t}^\frac{1}{2}{(q)}\\
		&\leq \hbar^2 \left[ \int \dw\ \hbar^{-\frac{3}{2}}\left|\nabla f\left(\frac{w-q}{\sqrt{\hbar}}\right) \right|^2 \left< \Psi_{N,t},  a^*_w a_w \Psi_{N,t} \right>\right]^\frac{1}{2} \varrho_{N,t}^\frac{1}{2}{(q)}.
	\end{align*}
	{This implies, by using H\"odler inequality, that}
	\begin{align*}
		&{\Bigg(}\int \dd{q} \left| \int  \ddp\, \big| \tilde{\cR}(q,p)\big| \right|^\frac{5}{4} {\Bigg)^{\frac{4}{5}}}\\
		&\leq {\hbar^2
			\left[\int \dq  \dw\ \hbar^{-\frac{3}{2}}\left|\nabla f\left(\frac{w-q}{\sqrt{\hbar}}\right) \right|^2 \left< \Psi_{N,t},  a^*_w a_w \Psi_{N,t} \right>\right]^\frac{1}{2}
			\; \left[\int dq  \varrho_{N,t}^\frac{5}{3}{(q)}\right]^{\frac{3}{10}}}\\
		& \leq C \hbar^2\|\nabla f\|_2 \left< \Psi_{N,t}, \cN \Psi_{N,t} \right>^{\frac{1}{2}}
		\leq C \hbar^{\frac{1}{2}}.
	\end{align*}
	In the above estimate, we have used the fact that $\|\rho_{N,t}\|_{L^\infty(0,T;L^\frac{5}{3}(\bR^3))}\leq C$, which is a direct result from Appendix \ref{sec:apriori}.
	Moreover, we obatain
	\begin{align*}
		&\bigg|\int\dd{q}\ddp\,\varphi(q)\phi(p) \nabla_{q}\cdot \tilde{\cR}(q,p)\bigg| \\
		&\leq\,{\|\nabla \varphi\|_{L^5(\bR^3)}}{\Bigg(}\int \dd{q} \left| \int  \ddp\, { \phi(p)}\,  \tilde{\cR}(q,p) \right|^\frac{5}{4} {\Bigg)^{\frac{4}{5}}}\\
		&\leq\,{\|\nabla \varphi\|_{L^5(\bR^3)}} { \|\phi\|_{L^\infty(\bR^3)}}\, {\Bigg(}\int \dd{q} \left| \int  \ddp\, \big| \tilde{\cR}(q,p)\big| \right|^\frac{5}{4} {\Bigg)^{\frac{4}{5}}}\\
		%				&\leq {C} {\hbar^2					\left[\int \dq  \dw\ \hbar^{-\frac{3}{2}}\left|\nabla f\left(\frac{w-q}{\sqrt{\hbar}}\right) \right|^2 \left< \Psi_{N,t},  a^*_w a_w \Psi_{N,t} \right>\right]^\frac{1}{2}					\; \left[\int dq  \varrho_{N,t}^\frac{5}{3}\right]^{\frac{3}{10}}}\\
		%				& \leq C \hbar^2\|\nabla f\|_2 \left< \Psi_{N,t}, \cN \Psi_{N,t} \right>^{\frac{1}{2}}
		& \leq C \hbar^{\frac{1}{2}}.
	\end{align*}
	
\end{proof}

\section{Estimate for semi-classical residue} \label{sec:semi-classical_R}

In this section, we will estimate the semi-classical residual term under the assumption with $V\in W^{2,\infty}(\R^3)$ in Proposition \ref{prop:est_R_1}, with which give us the insight to compare the rate between semi-classical and mean-field residuals.

\begin{proof}[Proof of Proposition \ref{prop:est_R_1}]
	First, recall that
	\begin{equation}
		\begin{split}
			\cR_\rms := & \dfrac{1}{(2\pi)^3}  \int \dw_1 \du_1 \dw_2\du_2 \mathrm{d}{{q_2}} \ddp_2    \left(  f^\hbar_{q,p}(w)  \overline{f^\hbar_{q,p}(u)} \right)^{\otimes 2} \\
			&\qquad \left[\int_0^1 \ds\ \nabla V\big(su_1+ (1-s)w_1 - w_2 \big) - \nabla V(q-{{q_2}}) \right]  \gamma_{N,t}^{(2)}(u_1,u_2;w_1,w_2).
		\end{split}
	\end{equation}
	Since $\phi(q)$, $\varphi(p)$ are test functions, we {{see that}}	
	\begin{align*}
		&\bigg|\int \dq \ddp\, \phi(q)\varphi(p)  \nabla_{p} \cdot \cR_\rms(q,p) \bigg| \\
		&
		=  \frac{1}{ (2\pi)^3} \bigg| \int (\dq \ddp)^{\otimes 2}\ \phi(q) \nabla_{p} \varphi(p) \cdot \int \dw_1 \du_1 \dw_2\du_2  \left(  f^\hbar_{q,p}(w)  \overline{f^\hbar_{q,p}(u)} \right)^{\otimes 2} \\
		&
		\qquad \left[\int_0^1 \ds\ \nabla V\big(su_1+ (1-s)w_1 - w_2 \big) - \nabla V(q-{{q_2}}) \right]  \gamma_{N,t}^{(2)}(u_1,u_2;w_1,w_2)\bigg|\\
		&
		= \dfrac{1}{(2\pi\hbar)^3} \bigg| \int (\dq)^{\otimes 2}\ddp \dw_1 \du_1  \dw_2\ \phi(q) \nabla_{p} \varphi(p) f\left(\frac{w_1 -q}{\sqrt{\hbar}}\right) f\left(\frac{u_1 -q}{\sqrt{\hbar}}\right)\\
		&
		\qquad e^{\frac{\ii}{\hbar}p \cdot (w_1-u_1)} \int\du_2 \ddp_2\, e^{\frac{\ii}{\hbar}p_2 \cdot (w_2-u_2)}f\left(\frac{w_2 -{{q_2}}}{\sqrt{\hbar}}\right) f\left(\frac{u_2 -{{q_2}}}{\sqrt{\hbar}}\right) \\
		&\qquad\left[\int_0^1 \ds\ \nabla V\big(su_1+ (1-s)w_1 - w_2 \big) - \nabla V(q-{{q_2}}) \right]
		\gamma_{N,t}^{(2)}(u_1,w_2;w_1,w_2)\bigg|\\
		&
		= \bigg| \int (\dq)^{\otimes 2}\ddp \dw_1 \du_1  \dw_2\ \phi(q) \nabla_{p} \varphi(p) f\left(\frac{w_1 -q}{\sqrt{\hbar}}\right) f\left(\frac{u_1 -q}{\sqrt{\hbar}}\right)\\
		&
		\qquad e^{\frac{\ii}{\hbar}p \cdot (w_1-u_1)} \left|f\left(\frac{w_2 -{{q_2}}}{\sqrt{\hbar}}\right) \right|^2  \left[\int_0^1 \ds\ \nabla V\big(su_1+ (1-s)w_1 - w_2 \big) - \nabla V(q-{{q_2}}) \right] \\
		&
		\qquad \gamma_{N,t}^{(2)}(u_1,w_2;w_1,w_2)\bigg|,
	\end{align*}
	where we applied the fact that $ (2\pi \hbar)^3 \delta_x(y) = \int  e^{\frac{\ii}{\hbar} p \cdot (x-y)} \dd{p}$. Then, inserting $\pm \nabla V(q-w_2)$ and we have
	\begin{align*}
		&
		\leqslant    \bigg| \int (\dq)^{\otimes 2}\ddp \dw_1 \du_1  \dw_2\ \phi(q) \nabla \varphi(p) \cdot f\left(\frac{w_1 -q}{\sqrt{\hbar}}\right) f\left(\frac{u_1 -q}{\sqrt{\hbar}}\right)\\
		&
		\qquad e^{\frac{\ii}{\hbar}p \cdot (w_1-u_1)} \left|f\left(\frac{w_2 -{{q_2}}}{\sqrt{\hbar}}\right) \right|^2  \left[ \int_0^1 \ds\ \nabla V\big(su_1+ (1-s)w_1 - w_2 \big)  - \nabla V(q - w_2) \right] \gamma_{N,t}^{(2)}(u_1,w_2;w_1,w_2)\bigg|\\
		&
		\quad +  \bigg| \int (\dq)^{\otimes 2}\ddp \dw_1 \du_1  \dw_2\ \phi(q) \nabla \varphi(p) \cdot f\left(\frac{w_1 -q}{\sqrt{\hbar}}\right) f\left(\frac{u_1 -q}{\sqrt{\hbar}}\right)\\
		&
		\qquad e^{\frac{\ii}{\hbar}p \cdot (w_1-u_1)} \left|f\left(\frac{w_2 -{{q_2}}}{\sqrt{\hbar}}\right) \right|^2  \left[\nabla V(q - w_2 )  - \nabla V(q - {{q_2}}) \right] \gamma_{N,t}^{(2)}(u_1,w_2;w_1,w_2)\bigg|\\
		& =: I_\rms + J_\rms
	\end{align*}
	where we used integration by part in the second to last equality.\\
	
	Before advancing, recalling \eqref{eq:estimate_oscillation_0}, we split the integral and obtain the following estimate, $\forall\alpha_2\in(\frac{1}{2},1)$,
	\begin{equation}\label{momentum_split}
		\begin{aligned}
			\left|\int \dd{p} \nabla \varphi(p) e^{\frac{i}{\hbar}p\cdot(w-u)}\right|
			&=\left|\int \dd{p}  (\rchi_{(w_1-u_1)\in \Omega_\hbar^{\alpha_2}}+ \rchi_{(w_1-u_1)\in (\Omega_\hbar^{\alpha_2})^c}) \nabla \varphi(p) e^{\frac{i}{\hbar}p\cdot(w-u)}\right|\\
			&\leq \tilde{C} \left(\rchi_{(w_1-u_1)\in \Omega_\hbar^{\alpha_2}} + \hbar^{(1-\alpha_2)s}\right),
		\end{aligned}
	\end{equation}
	where $\tilde{C}$ depends on $\norm{\phi}_{W^{s+1,\infty}}$ and $\supp \phi$.\\
	
	Now we want to estimate the term $I_\rms$ and $J_\rms$ separately. We begin by estimating $I_\rms$,
	\begin{align*}
		I_\rms &
		= \hbar^{\frac{3}{2}} \bigg| \int \dq\ddp \dw_1 \du_1  \dw_2\ \phi(q) \nabla \varphi(p) f\left(\frac{w_1 -q}{\sqrt{\hbar}}\right) f\left(\frac{u_1 -q}{\sqrt{\hbar}}\right)e^{\frac{\ii}{\hbar}p \cdot (w_1-u_1)}\\
		&
		\qquad \left( \int \mathrm{d}\tilde{q}_2\left|f\left(\tilde{q}_2\right) \right|^2 \right) \left[ \int_0^1  \ds\ \nabla V\big(su_1+ (1-s)w_1 - w_2 \big)  - \nabla V(q - w_2) \right]\\
		& \qquad \gamma_{N,t}^{(2)}(u_1,w_2;w_1,w_2)\Big|.
	\end{align*}
	Using $ \norm{D^2 V}_{L^\infty}\leq C$, \eqref{momentum_split}, the definition of $\gamma^{(2)}$ with Cauchy-Schwarz inequality, we have 
	\begin{align*}
		I_\rms %&{\leq}  \norm{D^2 V}_{L^\infty} \tilde{C} \hbar^{\frac{3}{2}} \int \dq\ |\phi(q )|\int \dw_1 \du_1  \dw_2 \left( \rchi_{(w_1-u_1)\in \Omega_\hbar^{\alpha_2}} + \hbar^{(1-\alpha_2)s} \right)   \bigg|f\left(\frac{w_1 -q}{\sqrt{\hbar}}\right) f\left(\frac{u_1 -q}{\sqrt{\hbar}}\right)\bigg|   \\
%		&
%		\qquad \left( |u_1-q|+|w_1-q| \right) \big|  \gamma_{N,t}^{(2)}(u_1,w_2;w_1,w_2)\big|\\
		&
		\leq C \hbar^{\frac{3}{2}} \int \dq\ |\phi(q )|\int \dw_1 \du_1   \left( \rchi_{(w_1-u_1)\in \Omega_\hbar^{\alpha_2}} + \hbar^{(1-\alpha_2)s} \right)   \bigg|f\left(\frac{w_1 -q}{\sqrt{\hbar}}\right) f\left(\frac{u_1 -q}{\sqrt{\hbar}}\right)\bigg|   \\
		&
		\qquad \left( |u_1-q|+|w_1-q| \right) \int \dw_2 \big| \gamma_{N,t}^{(2)}( w_1,w_2; u_1, w_2)\big|\\
%		&
%		=  C  \hbar^{\frac{3}{2}} \int \dq\ |\phi(q )|\int \dw_1 \du_1  \left( \rchi_{(w_1-u_1)\in \Omega_\hbar^{\alpha_2}} + \hbar^{(1-\alpha_2)s} \right)   \bigg|f\left(\frac{w_1 -q}{\sqrt{\hbar}}\right) f\left(\frac{u_1 -q}{\sqrt{\hbar}}\right)\bigg|   \\
%		&
%		\qquad \left( |u_1-q|+|w_1-q| \right) \int \dw_2 \big|  \left<a_{w_2}a_{w_1} \psi_{N,t},  a_{w_2}  a_{u_1} \psi_{N,t} \right>\big|\\
%		&
%		\leq   C  \hbar^{\frac{3}{2}} \int \dq\ |\phi(q )|\int \dw_1 \du_1   \left( \rchi_{(w_1-u_1)\in \Omega_\hbar^{\alpha_2}} + \hbar^{(1-\alpha_2)s} \right)   \bigg|f\left(\frac{w_1 -q}{\sqrt{\hbar}}\right) f\left(\frac{u_1 -q}{\sqrt{\hbar}}\right)\bigg|   \\
%		&
%		\qquad \left( |u_1-q|+|w_1-q| \right) \int \dw_2 \norm{a_{w_2}a_{w_1} \psi_{N,t}} \norm{  a_{w_2}  a_{u_1} \psi_{N,t}}\\
		&
		\leq C  \hbar^{\frac{3}{2}} \int \dq\ |\phi(q )|\int \dw_1 \du_1   \left( \rchi_{(w_1-u_1)\in \Omega_\hbar^{\alpha_2}} + \hbar^{(1-\alpha_2)s} \right)   \bigg|f\left(\frac{w_1 -q}{\sqrt{\hbar}}\right) f\left(\frac{u_1 -q}{\sqrt{\hbar}}\right)\bigg|   \\
		&
		\qquad \left( |u_1-q|+|w_1-q| \right)  \left(\int \dw_2 \norm{ a_{w_2}a_{w_1} \psi_{N,t}}^2\right)^\frac{1}{2} \left(\int \dw_2 \norm{  a_{w_2}  a_{u_1} \psi_{N,t}}^2\right)^\frac{1}{2}\\
		& =: C  \bigg[ i_{\rms,1} + i_{\rms,2}\bigg],
	\end{align*}
	where we use $i_{s,1}$ to be the term with $ \rchi_{(w_1-u_1)\in \Omega_\hbar^\alpha}$, and $i_{s,2}$ to be the other one.
	
	Due to the symmetric property , we can reduce the estimate for $i_{\rms,1}$ into the following 	
	\begin{align}\nonumber
		i_{\rms,1} &  	\leq 2C  \hbar^{\frac{3}{2}} \int \dq\ |\phi(q )|\int \dw_1 \du_1   \rchi_{(w_1-u_1)\in \Omega_\hbar^{\alpha_2}} \bigg|f\left(\frac{w_1 -q}{\sqrt{\hbar}}\right) f\left(\frac{u_1 -q}{\sqrt{\hbar}}\right)\bigg|
		\cdot |u_1-q| \int \dw_2 \norm{ a_{w_2}a_{w_1} \psi_{N,t}}^2\\
%		\nonumber	
		%&  	\leq 2C  \hbar^{\frac{3}{2}} \int \dq\ |\phi(q )|\int \dw_1 \du_1   \rchi_{(w_1-u_1)\in \Omega_\hbar^{\alpha_2}} \bigg|f\left(\frac{w_1 -q}{\sqrt{\hbar}}\right) f\left(\frac{u_1 -q}{\sqrt{\hbar}}\right)\bigg|
		%\cdot |u_1-q|\left<\psi_{N,t}, a^*_{w_1} \cN a_{w_1} \psi_{N,t}\right> \\
		\nonumber	
		&\leq C\hbar^{\frac{3}{2}} \norm{f\left(\frac{u_1 -q}{\sqrt{\hbar}}\right)
			\cdot |u_1-q|}_{L^\infty} |\Omega_\hbar^{\alpha_2}|  \int \dq\ |\phi(q )|\int \dw_1\bigg|f\left(\frac{w_1 -q}{\sqrt{\hbar}}\right)\bigg|\left<\psi_{N,t}, a^*_{w_1} \cN a_{w_1} \psi_{N,t}\right>
		\\
		&\leq C \hbar^{\frac{3}{2}} \hbar^{\frac{1}{2}} \hbar^{3\alpha_2} \hbar^{\frac{3}{2}} \int \dw_1\left<\psi_{N,t}, a^*_{w_1} \cN a_{w_1} \psi_{N,t}\right>
	\leq C \hbar^{3\alpha_2+3+\frac{1}{2}-6}. 	\label{is1}	
	\end{align}
	%Using similar steps with $i_{\rms,1}$, 
	Similarly, noticing that $\phi(q)$ has compact support, one obtains the estimate for the term $i_{\rms,2}$,
	\begin{align*}
		i_{\rms,2} &  	\leq 2C  \hbar^{\frac{3}{2}} \int \dq\ |\phi(q )|\int \dw_1 \du_1   \hbar^{(1-\alpha_2)s} \bigg|f\left(\frac{w_1 -q}{\sqrt{\hbar}}\right) f\left(\frac{u_1 -q}{\sqrt{\hbar}}\right)\bigg|
		\cdot |u_1-q| \int \dw_2 \norm{ a_{w_2}a_{w_1} \psi_{N,t}}^2\\
		&  	\leq 2C  \hbar^{\frac{3}{2}+\frac{3}{2}+\frac{1}{2}} \int \ \mathrm{d}\tilde u \rchi_{|\tilde u|\leq R_1}|f(\tilde u)| |\tilde u|\int \dq\ |\phi(q )|\int \dw_1    \hbar^{(1-\alpha_2)s} \bigg|f\left(\frac{w_1 -q}{\sqrt{\hbar}}\right) \bigg|\int \dw_2 \norm{ a_{w_2}a_{w_1} \psi_{N,t}}^2\\
		&\leq C \hbar^{\frac{3}{2}+\frac{3}{2}+\frac{1}{2}}  \hbar^{(1-\alpha_2)s} \hbar^{\frac{3}{2}}\hbar^{-6}\leq C \hbar^{(1-\alpha_2)s-1}
	\end{align*}
	To balance the order between $i_{\rms,1}$ and $i_{\rms,2}$, the term $s$ is chosen to be
	\[
	s = \left\lceil \frac{3(\alpha_2-\frac{1}{2})}{1-\alpha_2}\right\rceil,
	\]
	where $\alpha_2 \in (\frac{1}{2},1)$. Therefore, we have
	\begin{equation}\label{R1_I}
		I_\rms \leq C \hbar^{\frac{1}{2}+3(\alpha_2-1)}.
	\end{equation}
	
	Now, to estimate $J_\rms$, we recall the estimate in \eqref{momentum_split} and obtain
	\begin{align*}
		J_\rms% &=\bigg| \int (\dq)^{\otimes 2}\ddp \dw_1 \du_1  \dw_2\ \phi(q) \nabla \varphi(p) \cdot f\left(\frac{w_1 -q}{\sqrt{\hbar}}\right) f\left(\frac{u_1 -q}{\sqrt{\hbar}}\right)\\
		%&
		%\qquad e^{\frac{\ii}{\hbar}p \cdot (w_1-u_1)} \left|f\left(\frac{w_2 -{{q_2}}}{\sqrt{\hbar}}\right) \right|^2  \left[\nabla V(q - w_2 )  - \nabla V(q - {{q_2}}) \right] \gamma_{N,t}^{(2)}(u_1,w_2;w_1,w_2)\bigg|\\
		&\leq  C \bigg| \int (\dq)^{\otimes 2}|\phi(q)|\int \dw_1 \du_1 \left( \rchi_{(w_1-u_1)\in \Omega_\hbar^{\alpha_2}} + \hbar^{(1-\alpha_2)s} \right)   f\left(\frac{w_1 -q}{\sqrt{\hbar}}\right) f\left(\frac{u_1 -q}{\sqrt{\hbar}}\right)\\
		&
		\qquad \int \dw_2 \left|f\left(\frac{w_2 -{{q_2}}}{\sqrt{\hbar}}\right) \right|^2  |w_2-{{q_2}}| \gamma_{N,t}^{(2)}(u_1,w_2;w_1,w_2)\bigg|\\
		&
		\leq C \int \dq\ |\phi(q )|\int \dw_1 \du_1   \left( \rchi_{(w_1-u_1)\in \Omega_\hbar^{\alpha_2}} + \hbar^{(1-\alpha_2)s} \right)   \bigg|f\left(\frac{w_1 -q}{\sqrt{\hbar}}\right) f\left(\frac{u_1 -q}{\sqrt{\hbar}}\right)\bigg|   \\
		&
		\qquad \int \mathrm{d}\tilde q \left|f\left(\tilde q\right) \right|^2  h^{\frac{3}{2}} h^{\frac{1}{2}}|\tilde q| \int \dw_2 \big| \gamma_{N,t}^{(2)}( w_1,w_2; u_1, w_2)\big|\\
		&
		\leq   C  \hbar^{2} \int \dq\ |\phi(q )|\int \dw_1 \du_1   \left( \rchi_{(w_1-u_1)\in \Omega_\hbar^{\alpha_2}} + \hbar^{(1-\alpha_2)s} \right)   \bigg|f\left(\frac{w_1 -q}{\sqrt{\hbar}}\right) f\left(\frac{u_1 -q}{\sqrt{\hbar}}\right)\bigg|   \\
		&
		\qquad \left(\int \dw_2 \norm{ a_{w_2}a_{w_1} \psi_{N,t}}^2\right)^\frac{1}{2} \left(\int \dw_2 \norm{  a_{w_2}  a_{u_1} \psi_{N,t}}^2\right)^\frac{1}{2}\\
		& =: C  \bigg[ j_{\rms,1} + j_{\rms,2}\bigg].
	\end{align*}
	The estimate for $j_{\rms,1}$ can be exactly done as in \eqref{is1} for $i_{\rms,1}$, the same for $j_{\rms,2}$ as in $i_{\rms,2}$. Therefore we obtain the same rate for $J_{\rms}$ as in \eqref{R1_I} for $I_{\rms}$. This completes the proof.
\end{proof}

{
	\begin{Remark}
		The key step in the estimates of semi-classical residue is in \eqref{is1}, with which the computational bugs appeared in \cite{Chen2021JSP} can both be fixed by the same technique.
	\end{Remark}
}

\section{Estimate for mean-field residue}\label{sec:mean_field_residue}

In this section, we will estimate the mean-field residue by first showing in Lemma \ref{lem:est_mean-field_osc} that the estimate for mean-field residue term can be reduced to the estimate for the term
\begin{equation}\label{pdbmea}
	\mathds{T}^{(1)} \big|\gamma_{N,t}^{(2)}  - \gamma^{(1)}_{N,t} \otimes \gamma^{(1)}_{N,t}  \big| (u_1;w_1)
\end{equation}
where we denote
\[
\mathds{T}^{(1)} |\gamma^{(2)}  - \gamma^{(1)} \otimes \gamma^{(1)} | (u_1;w_1) := \int \mathrm{d}{y} \left|\gamma^{(2)} (u_1,y;w_1,y) - \gamma^{(1)}(u_1;w_1) \gamma^{(1)}(y;y)\right|.
\]

Then, we prove the estimate for \eqref{pdbmea} in Proposition \ref{prop:mixed_norm} and finally summary the estimation for the mean-field residue in Proposition \ref{prop:est_mean-field_osc}.

\begin{Lemma}\label{lem:est_mean-field_osc}
	Let $\varphi, \phi \in C_0^\infty(\R^3)$. Then, for $\frac{1}{2} < \alpha_1 < 1$ and $s ={\left\lceil \frac{3(\alpha_1 {-} \frac{1}{2})}{2(1-\alpha_1)} \right\rceil}$, we have
	\begin{equation}
		\begin{aligned}
			&\bigg|\int \dq \ddp\, \varphi(q)\phi(p)  \nabla_{p} \cdot \cR_\rmm(q,p) \bigg| \\
			&\quad \leq {C} \norm{\nabla V}_{L^\infty}  \hbar^{3+ \frac{3}{2}(\alpha_1 {-} \frac{1}{2}) + \frac{3}{2}}
			\left(\int \dw_1\du_1\ \left[ \mathds{T}^{(1)} \big|\gamma_{N,t}^{(2)}  - \gamma^{(1)}_{N,t} \otimes \gamma^{(1)}_{N,t}  \big| (u_1;w_1) \right]^2   \right)^\frac{1}{2},
		\end{aligned}
	\end{equation}
	where the constant ${C}$ depends on $\norm{\varphi}_\infty$, $\norm{\nabla \phi}_{W^{s,\infty}}$, $\supp\phi$, $\norm{f}_{L^\infty \cap H^1}$, $\supp f$.
\end{Lemma}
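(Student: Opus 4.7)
The plan is to adapt the integration-by-parts plus oscillation argument of Proposition \ref{prop:est_R_1} and then isolate the mixed-norm quantity $\mathds{T}^{(1)}|\gamma_{N,t}^{(2)}-\gamma_{N,t}^{(1)}\otimes\gamma_{N,t}^{(1)}|(u_1;w_1)$ via a Cauchy--Schwarz in $(w_1,u_1)$. First I would integrate by parts to move $\nabla_p$ onto $\phi$, expand the coherent states $f^\hbar_{q,p}(y)=\hbar^{-3/4}f\big(\tfrac{y-q}{\sqrt{\hbar}}\big)e^{\ii p\cdot y/\hbar}$, and perform the $p_2$-integration to obtain $(2\pi\hbar)^3\delta(w_2-u_2)$; the factor $\hbar^3$ cancels the $\hbar^{-3}$ coming from the four $\hbar^{-3/4}$ coherent-state prefactors while $(2\pi)^3$ cancels the $1/(2\pi)^3$ in the definition of $\cR_\rmm$. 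Setting $u_2=w_2$ then reduces the integrand to a product of three $f$-profiles, the oscillation $e^{\ii p\cdot(w_1-u_1)/\hbar}$, the potential $\nabla V(q-q_2)$, and the density-matrix difference.

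Next, I would handle the remaining scalar integrations. The $q_2$-integration is controlled by the change of variables $\tilde{q}_2=(w_2-q_2)/\sqrt{\hbar}$, giving the uniform bound $\big|\int\dq_2\,f^2\big(\tfrac{w_2-q_2}{\sqrt{\hbar}}\big)\nabla V(q-q_2)\big|\leq\hbar^{3/2}\|\nabla V\|_{L^\infty}$. The $p$-integration is handled by the oscillation estimate \eqref{momentum_split} already established in Section \ref{sec:semi-classical_R}, producing the split $\rchi_{w_1-u_1\in\Omega_\hbar^{\alpha_1}}+\hbar^{(1-\alpha_1)s}$. Performing the $w_2$-integration against the density-matrix difference produces precisely $\mathds{T}^{(1)}|\gamma_{N,t}^{(2)}-\gamma_{N,t}^{(1)}\otimes\gamma_{N,t}^{(1)}|(u_1;w_1)$.

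Finally, Cauchy--Schwarz in $(w_1,u_1)$ separates this mixed-norm factor from the weight $H(w_1,u_1)\cdot(\rchi_{w_1-u_1\in\Omega_\hbar^{\alpha_1}}+\hbar^{(1-\alpha_1)s})$ with $H(w_1,u_1):=\int\dq\,|\varphi(q)f((w_1-q)/\sqrt{\hbar})f((u_1-q)/\sqrt{\hbar})|$. The compact supports of $f$ and of $\varphi$ give the pointwise bound $H\leq C\hbar^{3/2}$ together with support in $\{|w_1-u_1|\leq 2R_1\sqrt{\hbar}\}\cap\{w_1\in\mathrm{compact}\}$, from which one obtains $\int H^2\lesssim\hbar^{9/2}$ and, using $|\Omega_\hbar^{\alpha_1}|\lesssim\hbar^{3\alpha_1}$, $\int H^2\rchi_{w_1-u_1\in\Omega_\hbar^{\alpha_1}}\lesssim\hbar^{3+3\alpha_1}$. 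Balancing the $L^2$-contributions of the indicator and the tail selects $s=\lceil 3(\alpha_1-\tfrac{1}{2})/(2(1-\alpha_1))\rceil$, and combining with the $\hbar^{3/2}$ from the $q_2$-integration produces the claimed rate.

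The main obstacle is organizing the Cauchy--Schwarz so that only the stated mixed-norm quantity appears on the right, while $H$ retains both of its support restrictions---the narrow one $|w_1-u_1|\leq 2R_1\sqrt{\hbar}$ inherited from $\supp f$ and the global one from $\supp\varphi$. Losing either restriction in the bound of $\int H^2$ produces a strictly worse power of $\hbar$ after balancing with the oscillatory tail, so the compactness hypotheses on $\varphi$ and on $f$ must be used in tandem.
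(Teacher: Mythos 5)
Your outline reproduces the paper's skeleton — integration by parts in $p$, the $p_2$-delta collapsing $u_2=w_2$, the $q_2$-change of variables extracting $\hbar^{3/2}\|\nabla V\|_{L^\infty}$, the oscillation split $\rchi_{\Omega^{\alpha_1}_\hbar}+\hbar^{(1-\alpha_1)s}$, and a Cauchy--Schwarz that isolates the mixed-norm factor. The genuine difference, and the gap, lies in the order of the $q$-integration relative to the Cauchy--Schwarz. You perform the $q$-integral first, collapsing both support restrictions into the weight $H(w_1,u_1)$, and then do Cauchy--Schwarz in $(w_1,u_1)$. Carry this through with your own estimates: the indicator piece is bounded by
\begin{equation*}
\hbar^{3/2}\,\Big(\int H^2\,\rchi_{w_1-u_1\in\Omega^{\alpha_1}_\hbar}\Big)^{1/2}\Big(\int\big[\mathds{T}^{(1)}|\cdot|\big]^2\Big)^{1/2}\lesssim \hbar^{3/2}\cdot\hbar^{\frac{3+3\alpha_1}{2}}\Big(\int\big[\mathds{T}^{(1)}|\cdot|\big]^2\Big)^{1/2}=\hbar^{3+\frac{3}{2}\alpha_1}\Big(\int\big[\mathds{T}^{(1)}|\cdot|\big]^2\Big)^{1/2},
\end{equation*}
while the tail piece gives $\hbar^{3/2}\cdot\hbar^{(1-\alpha_1)s}\cdot\hbar^{9/4}=\hbar^{\frac{15}{4}+(1-\alpha_1)s}$. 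Balancing does recover the stated $s=\big\lceil\tfrac{3(\alpha_1-1/2)}{2(1-\alpha_1)}\big\rceil$, but the resulting exponent is $3+\tfrac{3}{2}\alpha_1$, whereas the Lemma asserts $3+\tfrac{3}{2}(\alpha_1-\tfrac12)+\tfrac32=\tfrac{15}{4}+\tfrac32\alpha_1$. Your bound is therefore $\hbar^{3/4}$ short, and the claim that the balancing ``produces the claimed rate'' is not supported by your own arithmetic.

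The missing $\hbar^{3/4}$ is exactly what the paper extracts by keeping the $(w_1,u_1)$-Cauchy--Schwarz \emph{inside} the $q$-integral. There, the restriction $\rchi_{|w_1-q|\leq R_1\sqrt{\hbar}}$ from $\supp f$ is placed on the density-matrix side of the Cauchy--Schwarz, producing a $q$-dependent quantity $\big(\int\dw_1\du_1[\mathds{T}^{(1)}|\cdot|]^2\rchi_{|w_1-q|\leq R_1\sqrt{\hbar}}\big)^{1/2}$ whose subsequent $q$-average is small because the support in $w_1$ has volume $\sim\hbar^{3/2}$. Once you fold the $q$-integral into $H$, this localization is no longer visible on the density-matrix side and cannot be recovered. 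If you want to reproduce the paper's rate, you must defer the $q$-integration, split $\rchi_{|w_1-q|\leq R_1\sqrt{\hbar}}$ across both Cauchy--Schwarz factors, and only then integrate in $q$. (You should also be alert to the fact that even the paper's passage from $\int\dq\,|\varphi(q)|\big(\cdots\rchi_{|w_1-q|\leq R_1\sqrt\hbar}\big)^{1/2}$ to a factor $\hbar^{3/2}\big(\int\dw_1\du_1[\cdots]^2\big)^{1/2}$ is not a literal change of variables, since $w_1$ is an inner variable; a Cauchy--Schwarz in $q$ using $\int\dq\,\rchi_{|w_1-q|\leq R_1\sqrt{\hbar}}=c\hbar^{3/2}$ yields $\hbar^{3/4}$, and any attempt to justify the stronger bound should be made explicit.)
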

\begin{proof}
	Recall that, in \eqref{eq:bbgky_remainder_1_ben}, we defined the mean-field residue such that
	\begin{equation}
		\begin{aligned}
			{\cR_\rmm} &:= 	\frac{1}{(2\pi)^3}  \int \dw_1 \du_1 \dw_2\du_2 \dq_2 \ddp_2\,    \left(  f^\hbar_{q,p}(w)  \overline{f^\hbar_{q,p}(u)} \right)^{\otimes 2}\\
			&
			\qquad  \nabla V(q-q_2) \bigg[ \gamma_{N,t}^{(2)}(u_1,u_2;w_1,w_2) -\gamma_{N,t}^{(1)}(u_1;w_1)\gamma_{N,t}^{(1)}(u_2;w_2)\bigg].
		\end{aligned}
	\end{equation}
	Then one obtains
	\begin{align*}
		&\bigg|\int \dq \ddp\, \varphi(q) \phi(p) \nabla_{p} \cdot \cR_\rmm (q,p)\bigg| \\
		&
		= \frac{1}{(2\pi)^3}\bigg| \int (\dq \ddp)^{\otimes 2} (\dw \du)^{\otimes 2}\, \varphi(q) \nabla\phi(p)  \cdot  \left(  f^\hbar_{q,p}(w)  \overline{f^\hbar_{q,p}(u)} \right)^{\otimes 2} \nabla V(q-q_2) \\
		&
		\qquad \left[\gamma_{N,t}^{(2)}(u_1,u_2;w_1,w_2)  - \gamma^{(1)}_{N,t}(u_1; w_1) \gamma^{(1)}_{N,t}(u_2; w_2)\right] \bigg|\\
		%		&
		%		= \frac{1}{(2\pi\hbar)^3}\bigg| \int (\dq \ddp)^{\otimes 2} (\dw \du)^{\otimes 2}\,  \varphi(q) \nabla\phi(p)  \cdot \left(f\left(\frac{w-q}{\sqrt{\hbar}}\right) f\left(\frac{u-q}{\sqrt{\hbar}}\right) e^{\frac{\ii}{\hbar} p\cdot(w-u)}\right)^{\otimes 2}  \nabla V(q-q_2) \\
		%		&
		%		\qquad \left[\gamma_{N,t}^{(2)}(u_1,u_2;w_1,w_2)  - \gamma^{(1)}_{N,t}(u_1; w_1) \gamma^{(1)}_{N,t}(u_2; w_2)\right] \bigg|\\
		&
		= \frac{1}{(2\pi\hbar)^3}\bigg| \int (\dq \dw \du)^{\otimes 2}\, \left(f\left(\frac{w-q}{\sqrt{\hbar}}\right) f\left(\frac{u-q}{\sqrt{\hbar}}\right) \right)^{\otimes 2}\left(\int \ddp\, \varphi(q) \nabla\phi(p)  \cdot e^{\frac{\ii}{\hbar} p\cdot(w_1-u_1)}\right) \\
		&
		\qquad \nabla V(q-q_2) \left(\int \ddp_2\, e^{\frac{\ii}{\hbar} p_2\cdot(w_2-u_2)} \right) \left[\gamma_{N,t}^{(2)}(u_1,u_2;w_1,w_2)  -  \gamma^{(1)}_{N,t}(u_1; w_1) \gamma^{(1)}_{N,t}(u_2; w_2) \right] \bigg|\\
		&
		= \bigg| \int (\dq)^{\otimes 2} \dw_1\du_1\dw_2 \,   f\left(\frac{w_1-q}{\sqrt{\hbar}}\right) f\left(\frac{u_1-q}{\sqrt{\hbar}}\right) \left|f\left(\frac{w_2-q_2}{\sqrt{\hbar}}\right)\right|^2  \left(\int \ddp\, \varphi(q) \nabla\phi(p)  \cdot e^{\frac{\ii}{\hbar} p_1\cdot(w_1-u_1)}\right) \\
		&
		\qquad \nabla V(q-q_2) \left[\gamma_{N,t}^{(2)}(u_1,w_2;w_1,w_2)  -  \gamma^{(1)}_{N,t}(u_1; w_1) \gamma^{(1)}_{N,t}(w_2; w_2)\right] \bigg|,
	\end{align*}
	where we use the weighted Dirac-Delta function in the last equality, i.e.,
	\begin{equation}\label{eq:diract_delta}
		\frac{1}{(2\pi \hbar)^3}\int \ddp_2\ e^{\frac{\ii}{\hbar} p_2\cdot(w_2-u_2)} = \delta_{w_2}(u_2).
	\end{equation}
	Now, splitting the domains of $w_1$ and $u_1$ into two, namely, with the characteristic functions $\rchi_{(w_1 - u_1)\in \Omega_\hbar^{\alpha_1}}$ and $\rchi_{(w_1 - u_1)\in(\Omega_\hbar^{\alpha_1})^c}$ whose domain $\Omega_\hbar^{\alpha_1}$ is defined in \eqref{eq:estimate_oscillation_omega}, we have
	
	\begin{align*}
		&\leqslant \bigg| \int (\dq)^{\otimes 2}  \varphi(q) \int \dw_1\du_1\dw_2\   f\left(\frac{w_1-q}{\sqrt{\hbar}}\right) f\left(\frac{u_1-q}{\sqrt{\hbar}}\right) \left|f\left(\frac{w_2-q_2}{\sqrt{\hbar}}\right)\right|^2  \\
		&
		\qquad \left(\int \ddp\, \rchi_{(w_1 - u_1)\in \Omega_\hbar^{\alpha_1}} e^{\frac{\ii}{\hbar} p\cdot(w_1-u_1)} \nabla\phi(p)   \right) \cdot \nabla V(q-q_2) \left[\gamma_{N,t}^{(2)}(u_1,w_2;w_1,w_2)  - \gamma^{(1)}_{N,t}(u_1; w_1) \gamma^{(1)}_{N,t}(w_2; w_2)\right] \bigg|\\
		&
		+  \bigg| 	\int (\dq)^{\otimes 2}\varphi(q) \int \dw_1\du_1\dw_2\    f\left(\frac{w_1-q}{\sqrt{\hbar}}\right) f\left(\frac{u_1-q}{\sqrt{\hbar}}\right) \left|f\left(\frac{w_2-q_2}{\sqrt{\hbar}}\right)\right|^2  \\
		&
		\qquad \left(\int \ddp\, \rchi_{(w_1 - u_1)\in (\Omega_\hbar^{\alpha_1})^c}  e^{\frac{\ii}{\hbar} p\cdot(w_1-u_1)} \nabla \phi(p)  \right) \cdot  \nabla V(q-q_2) \left[\gamma_{N,t}^{(2)}(u_1,w_2;w_1,w_2)  - \gamma^{(1)}_{N,t}(u_1; w_1) \gamma^{(1)}_{N,t}(w_2; w_2)\right] \bigg|\\
		& =: \text{I}_{\rmm} + \text{J}_{\rmm}.
	\end{align*}
	First, considering the term $\text{J}_{\rmm}$, by the change of variable $\sqrt{\hbar}\,\tilde{q}_2 = w_2 - q_2$, we obtain
	\begin{align*}
		\text{J}_{\rmm}& =  \bigg|	\int \dq\, \varphi(q) \int \dw_1\du_1\dw_2\, f\left(\frac{w_1-q}{\sqrt{\hbar}}\right) f \left(\frac{u_1-q}{\sqrt{\hbar}}  \right) \left(\hbar^{\frac{3}{2}} \int  \mathrm{d}\tilde{q}_2\   |f(\tilde{q}_2)|^2  \nabla V(q-w_2 + \sqrt{\hbar}\,\tilde{q}_2)\right) \\
		&
		\qquad \left(\int \ddp\, \rchi_{(w_1 - u_1)\in (\Omega_\hbar^{\alpha_1})^c} \nabla  \phi(p) e^{\frac{\ii}{\hbar} p\cdot(w_1-u_1)}  \right)\left(\gamma_{N,t}^{(2)}(u_1,w_2;w_1,w_2)  - \gamma^{(1)}_{N,t}(u_1; w_1) \gamma^{(1)}_{N,t}(w_2; w_2)\right) \bigg|\\
		%		&
		%		\leq C \norm{\nabla V}_{L^\infty}  \hbar^{\frac{3}{2}} 	\int \dq\,  |\varphi(q)| \int \dw_1\du_1\dw_2\  \left|f\left(\frac{w_1-q}{\sqrt{\hbar}}\right) f\left(\frac{u_1-q}{\sqrt{\hbar}}\right) \right|\\
		%		&
		%		\qquad \left|\int \ddp\, \rchi_{(w_1 - u_1)\in (\Omega_\hbar^{\alpha_1})^c} \nabla  \phi(p) e^{\frac{\ii}{\hbar} p\cdot(w_1-u_1)}\right| \left|  \gamma_{N,t}^{(2)}(u_1,w_2;w_1,w_2) -  \gamma_{N,t}(u_1; w_1) \gamma_{N,t}(w_2; w_2) \right|\\
		&
		\leq C \norm{\nabla V}_{L^\infty}  \hbar^{\frac{3}{2}}   \int \dq\, | \varphi(q)| \int \dw_1\du_1\,  \bigg| f\left(\frac{w_1-q}{\sqrt{\hbar}}\right) f\left(\frac{u_1-q}{\sqrt{\hbar}}\right)\bigg|\rchi_{|w_1-u_1|\leq 2R_1 \sqrt{\hbar}}\\
		&
		\qquad \left|\int \ddp\, \rchi_{(w_1 - u_1)\in (\Omega_\hbar^{\alpha_1})^c} \nabla  \phi(p) e^{\frac{\ii}{\hbar} p\cdot(w_1-u_1)}\right|\,  \int\dw_2 \left|\gamma_{N,t}^{(2)}(u_1,w_2;w_1,w_2)  - \gamma^{(1)}_{N,t}(u_1; w_1) \gamma^{(1)}_{N,t}(w_2; w_2)\right|  .
	\end{align*}
	where we have used that supp$f\subset B_{R_1}$.
	Recall again from Lemma \ref{lem:estimate_oscillation} that we have
	\begin{equation}
		\left|\int \ddp\, \rchi_{(w_1 - u_1)\in (\Omega_\hbar^{\alpha_1})^c} e^{\frac{\ii}{\hbar} p\cdot(w_1-u_1)} \nabla  \phi(p) \right| \leq \norm{\nabla \phi}_{W^{s,\infty}}  \hbar^{(1-\alpha_1)s},
	\end{equation}
	for $s$ to be chosen later.
	%	Hence,
	%	\begin{align*}
		%		\text{J}_{\rmm} & \leq C \norm{\nabla \phi}_{W^{s,\infty}} \norm{\nabla V}_{L^\infty}  \hbar^{\frac{3}{2}+ (1-\alpha_1)s}  \int \dq\, |\varphi(q)|\int \dw_1\du_1\,  \bigg|  f\left(\frac{w_1-q}{\sqrt{\hbar}}\right) f\left(\frac{u_1-q}{\sqrt{\hbar}}\right)\bigg| \\
		%		&
		%		\qquad  \mathds{T}^{(1)} \left|\gamma_{N,t}^{(2)}  - \gamma^{(1)}_{N,t} \otimes \gamma^{(1)}_{N,t} \right|(u_1;w_1)   \rchi_{|w_1-q|\leq R_1\sqrt{\hbar}}\;\rchi_{|u_1-q|\leq R_1 \sqrt{\hbar}},
		%	\end{align*}
	Hence, together with H\"older's inequality we get
	\begin{align*}
		\text{J}_{\rmm} & \leq C  \norm{\nabla \phi}_{W^{s,\infty}} \norm{\nabla V}_{L^\infty} \hbar^{\frac{3}{2}+ (1-\alpha_1)s} \int \dq\, |\varphi(q)| \left(\int \dw_1\du_1\, \rchi_{|w_1-u_1|\leq 2R_1 \sqrt{\hbar}} \left|f\left(\frac{w_1-q}{\sqrt{\hbar}}\right) f\left(\frac{u_1-q}{\sqrt{\hbar}}\right)\right|^2 \right)^\frac{1}{2}\\
		&
		\qquad \left(\int \dw_1\du_1\, \left[ \mathds{T}^{(1)} \big|\gamma_{N,t}^{(2)}  -  \gamma^{(1)}_{N,t} \otimes \gamma^{(1)}_{N,t} \big|(u_1;w_1) \right]^2 \rchi_{|w_1-q|\leq R_1\sqrt{\hbar}}\right)^\frac{1}{2}\\
		&
		\leq  C \norm{\varphi}_{L^\infty} \norm{\nabla \phi}_{W^{s,\infty}} \norm{\nabla V}_{L^\infty} \hbar^{\frac{3}{2}+ (1-\alpha_1)s}
		\left(\hbar^3 \int \mathrm{d}\tilde{w}_1\mathrm{d}\tilde{u}_1\, \rchi_{|\tilde{w}_1-\tilde{u}_1|\leq 2R_1 } |f\left(\tilde{w}\right) f\left(\tilde{u}\right)|^2 \right)^\frac{1}{2}\\
		&
		\qquad \int \dq  {\Bigg(}\int \dw_1\du_1\, \left[ \mathds{T}^{(1)} \big|\gamma_{N,t}^{(2)}  -  \gamma^{(1)}_{N,t} \otimes \gamma^{(1)}_{N,t} \big|(u_1;w_1) \right]^2  \rchi_{|w_1-q|\leq R_1 \sqrt{\hbar}} \Bigg)^\frac{1}{2}\\
		&
		\leq C   \norm{\varphi}_{L^\infty} \norm{\nabla \phi}_{W^{s,\infty}} \norm{\nabla V}_{L^\infty}  \hbar^{ 3 + (1-\alpha_1)s } \left(\int \mathrm{d}\tilde{w}_1\mathrm{d}\tilde{u}_1\, |f\left(\tilde{w}\right) f\left(\tilde{u}\right)|^2\right)^\frac{1}{2} \\
		&
		\qquad \hbar^{\frac{3}{2}} \int \mathrm{d}\tilde{q}_1 \rchi_{|\tilde{q}_1|\leq R_1 }   \left(\int \dw_1\du_1\,\left[ \mathds{T}^{(1)} \big|\gamma_{N,t}^{(2)}  - \gamma^{(1)}_{N,t} \otimes \gamma^{(1)}_{N,t} \big|(u_1;w_1) \right]^2  \right)^\frac{1}{2}\\
		&
		\leq C  \norm{\varphi}_{L^\infty} \norm{\nabla \phi}_{W^{s,\infty}} \norm{\nabla V}_{L^\infty} \hbar^{ 3 + (1-\alpha_1)s+\frac{3}{2}} \left(\int \dw_1\du_1\, \left[ \mathds{T}^{(1)} \big |\gamma_{N,t}^{(2)}  -  \gamma^{(1)}_{N,t} \otimes \gamma^{(1)}_{N,t} \big|(u_1;w_1) \right]^2  \right)^\frac{1}{2}.
	\end{align*}
	Now, we focus on $\text{I}_{m}$.
	Using the fact that
	$
	\left|\int \ddp\,  e^{\frac{\ii}{\hbar} p\cdot(w_1-u_1)} \nabla  \phi(p)\right|\leq \norm{\nabla \phi}_{L^1},
	$
	we obtain the following estimate:
	\begin{align*}
		\text{I}_{\rmm}& \leq C \norm{\nabla \phi}_{L^1}\norm{\nabla V}_{L^\infty} \int \dq\, |\varphi(q)|   \left( \int \dw_1\du_1\, \rchi_{|w_1-u_1|\leq \hbar^{\alpha_1}}  \rchi_{|w_1-u_1|\leq 2R_1 \sqrt{\hbar}} \left|f\left(\frac{w_1-q}{\sqrt{\hbar}}\right) f\left(\frac{u_1-q}{\sqrt{\hbar}}\right)\right|^2 \right)^\frac{1}{2}\\
		&
		\qquad \hbar^{\frac{3}{2}} \int \mathrm{d}{\tilde{q}_2} |f(\tilde{q}_2)|^2 \left(\int \dw_1\du_1\, \left[ \mathds{T}^{(1)} \big|\gamma_{N,t}^{(2)}  - \gamma^{(1)}_{N,t} \otimes \gamma^{(1)}_{N,t} \big| (u_1;w_1) \right]^2   \rchi_{|w_1-q|\leq R_1 \sqrt{\hbar}}  \right)^\frac{1}{2}\\
		&
		\leq C \norm{\varphi}_{L^\infty} \norm{\nabla \phi}_{L^1} \norm{\nabla V}_{L^\infty}  \hbar^{\frac{3}{2}} \left(\hbar^3  \int \mathrm{d}\tilde{w}_1\mathrm{d}\tilde{u}_1\, \rchi_{|\tilde{w}_1-\tilde{u}_1|\leq \hbar^{\alpha_1 {-} \frac{1}{2}}}  \rchi_{|\tilde{w}_1-\tilde{u}_1|\leq 2R_1} \left|f(\tilde{w}_1) f(\tilde{u}_1) \right|^2 \right)^\frac{1}{2}\\
		&
		\qquad \int\mathrm{d}{q} \left(\int \dw_1\du_1\, \left[ \mathds{T}^{(1)} \big|\gamma_{N,t}^{(2)}  - \gamma^{(1)}_{N,t} \otimes \gamma^{(1)}_{N,t} \big| (u_1;w_1) \right]^2  \rchi_{|w_1-q|\leq R_1 \sqrt{\hbar}}   \right)^\frac{1}{2}\\
		&
		\leq C \norm{\varphi}_{L^\infty} \norm{\nabla \phi}_{L^1}\norm{\nabla V}_{L^\infty} \hbar^3 \left( \int \mathrm{d}\tilde{w}_1\mathrm{d}\tilde{u}_1\, \rchi_{|\tilde{w}_1-\tilde{u}_1|\leq \hbar^{\alpha_1 {-} \frac{1}{2}}}  \left|f(\tilde{w}_1) f(\tilde{u}_1) \right|^2 \right)^\frac{1}{2}\\
		&
		\qquad \hbar^{ \frac{3}{2}}\int \mathrm{d}{\tilde{q}_1} \rchi_{|\tilde{q}_1|\leq R_1}
		\left(\int \dw_1\du_1\, \left[ \mathds{T}^{(1)} \big|\gamma_{N,t}^{(2)}  - \gamma^{(1)}_{N,t} \otimes \gamma^{(1)}_{N,t}  \big| (u_1;w_1) \right]^2  \right)^\frac{1}{2},
	\end{align*}
	which together with
	\begin{align*}
		&\int \mathrm{d}\widetilde{w}_1\, |f(\widetilde{w}_1)|^2 \int \mathrm{d}\widetilde{u}_1 \rchi_{|\widetilde{w}_1-\widetilde{u}_1| \leq \hbar^{\alpha_1-\frac{1}{2}}}  |f(\widetilde{u}_1)|^2\leq \norm{f}_{L^\infty(\bR^3)}^2 \norm{f}_{L^2(\bR^3)}^2 \hbar^{3(\alpha_1-\frac{1}{2})},
	\end{align*}
	implies immediately that
	\begin{equation*}
		\text{I}_{\rmm} \leq C \norm{\varphi}_{L^\infty} \norm{\nabla \phi}_{L^1} \norm{\nabla V}_{L^\infty}  \hbar^{3+{\frac{3}{2}}(\alpha_1 {-} \frac{1}{2}) + \frac{3}{2}} \left(\int \dw_1\du_1\ \left[ \mathds{T}^{(1)} \big|\gamma_{N,t}^{(2)}  - \gamma^{(1)}_{N,t} \otimes \gamma^{(1)}_{N,t}  \big| (u_1;w_1) \right]^2  \right)^\frac{1}{2}.
	\end{equation*}
	To balance the order between $\text{I}_{\rmm}$ and $\text{J}_{\rmm}$, $s$ is chosen to be
	\[
	s = {\left\lceil \frac{3(\alpha_1 {-} \frac{1}{2})}{2(1-\alpha_1)} \right\rceil},
	\]
	for $\alpha_1 \in \left[0,1\right)$. Therefore, we obtained the desired result:
	\begin{equation*}
		\begin{aligned}
			\bigg|&\int \dq \ddp\, \ \varphi(q) \phi(p) \nabla_{p} \cdot \cR_\rmm \bigg| \\
			&
			\leq C  \norm{\varphi}_{L^\infty} \norm{\nabla \phi}_{W^{s,\infty}} \norm{\nabla V}_{L^\infty}  \hbar^{3+ {\frac{3}{2}}(\alpha_1 {-} \frac{1}{2}) + \frac{3}{2}} \left(\int \dw_1\du_1\ \left[ \mathds{T}^{(1)} \big|\gamma_{N,t}^{(2)}  - \gamma^{(1)}_{N,t} \otimes \gamma^{(1)}_{N,t}  \big| (u_1;w_1) \right]^2   \right)^\frac{1}{2}.
		\end{aligned}
	\end{equation*}
\end{proof}

Next, we want to bound the term with the `mixed'-norm, i.e.,
\begin{equation}\label{eq_mixnorm_0}
	\int \dw_1\du_1\ \left[ \mathds{T}^{(1)} \big|\gamma_{N,t}^{(2)}  - \gamma^{(1)}_{N,t} \otimes \gamma^{(1)}_{N,t}  \big| (u_1;w_1) \right]^2.
\end{equation}
The following proposition provides the estimate of \eqref{eq_mixnorm_0}:

\begin{Proposition}\label{prop:mixed_norm}
	Let $\gamma^{(k)}_{N,t}$ be $k$-particle reduced density matrix associated with $\Psi_{N,t}$, $\omega_{N,t}$ be the solution of the Hartree-Fock equation in \eqref{eq:hartree-fock}. Suppose the assumption for Theorem \ref{thm:main_ben} holds. Then the following inequalities hold for all $t\in \R$:
	
	\begin{equation}\label{eq:theorem_HS}
		\norm{\gamma^{(1)}_{N,t}- \omega_{N,t}}_{\mathrm{HS}} \leq C.
	\end{equation}
	and
	\begin{equation}\label{eq:theorem_trace}
		\trnorm{\gamma^{(1)}_{N,t}- \omega_{N,t} }\leq C \sqrt{N}.
	\end{equation}
	Furthermore, it holds that
	\begin{equation}\label{eq:mixed_norm_0}
		\left(\int \dw_1\du_1\ \left[ \mathds{T}^{(1)} \big|\gamma_{N,t}^{(2)}  - \omega_{N,t} \otimes \omega_{N,t}  \big| \right]^2(u_1;w_1)   \right)^\frac{1}{2} \leq   C\, N
	\end{equation}
{where $C$ depends on $t$ but is independent of $N$.}
\end{Proposition}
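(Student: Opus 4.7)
The first two bounds \eqref{eq:theorem_HS} and \eqref{eq:theorem_trace} are essentially the Hilbert--Schmidt and trace-norm convergence theorems of \cite{benedikter2014mean}. I would introduce the particle--hole (Bogoliubov) transformation $R_{\omega_{N,t}}$ on fermionic Fock space---the unitary mapping the vacuum $\Omega$ to the Slater determinant with one-particle density $\omega_{N,t}$---and the fluctuation vector
\begin{equation*}
	\xi_{N,t} := R_{\omega_{N,t}}^{*}\,\mathcal{U}(t;0)\, R_{\omega_{N}}\,\Omega,
\end{equation*}
where $\mathcal{U}(t;0)$ is the many-body propagator of \eqref{eq:Schrodinger_0}. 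A Gr\"onwall estimate on the generator of the fluctuation dynamics then yields $\langle \xi_{N,t}, \cN^{k} \xi_{N,t}\rangle \leq C(t)$ uniformly in $N$ for any fixed $k$; combining with $\operatorname{Tr}\,\omega_{N,t} = N$ and the direct Fock-space expansion of $\gamma_{N,t}^{(1)} - \omega_{N,t}$ delivers \eqref{eq:theorem_HS}--\eqref{eq:theorem_trace}.

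For the mixed-norm bound \eqref{eq:mixed_norm_0}, I would first pass to second quantization, writing $\gamma_{N,t}^{(2)}(u_1,y;w_1,y) = \langle \Psi_{N,t}, a_{w_1}^{*} a_{y}^{*} a_{y} a_{u_1} \Psi_{N,t}\rangle$, and then conjugate by $R_{\omega_{N,t}}$ via the identity
\begin{equation*}
	R_{\omega_{N,t}}^{*}\, a_{x}\, R_{\omega_{N,t}} = a\bigl((1-\omega_{N,t})\delta_{x}\bigr) + a^{*}\bigl(\overline{\omega_{N,t}\delta_{x}}\bigr).
\end{equation*}
Normal ordering the sixteen resulting monomials produces the Wick-contracted mean-field pieces $\omega_{N,t}(u_1;w_1)\omega_{N,t}(y;y) - \omega_{N,t}(u_1;y)\omega_{N,t}(y;w_1)$ plus fluctuation terms, each of which carries at least one factor of $\xi_{N,t}$. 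The exchange contribution $-\omega_{N,t}(u_1;y)\omega_{N,t}(y;w_1)$ must be retained: after $y$-integration it becomes $\omega_{N,t}^{2}(u_1;w_1) = \omega_{N,t}(u_1;w_1)$, whose $L^{2}_{u_1,w_1}$-norm equals $\|\omega_{N,t}\|_{\mathrm{HS}} = O(\sqrt{N})$, well below the claimed $O(N)$.

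For each fluctuation monomial the variable $y$ enters at most two creation/annihilation factors. The key manipulation is to apply Cauchy--Schwarz in the inner product against $\xi_{N,t}$ \emph{before} taking the absolute value in $y$, so that the $y$-integration can be re-paired into a quadratic form $\int dy\, a(h_{y})^{*} a(h_{y}) \leq \|h\|^{2}\,\cN$ to which the fermionic bound applies. What remains is an $(u_{1},w_{1})$-kernel assembled from $\omega_{N,t}$, $1-\omega_{N,t}$, and one-particle reduced densities of $\xi_{N,t}$; the outer $L^{2}_{u_{1},w_{1}}$-norm of each such kernel is controlled by either $\|\omega_{N,t}\|_{\mathrm{HS}} = O(\sqrt{N})$ paired with a trace-class factor $\operatorname{Tr}\,\omega_{N,t} = O(N)$ from an uncontracted internal density, or by two Hilbert--Schmidt factors, all multiplied by $\langle \xi_{N,t}, \cN^{k} \xi_{N,t}\rangle \leq C$. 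Summing the sixteen contributions yields the advertised $O(N)$ bound.

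The principal obstacle is the anisotropy of the mixed norm $L^{2}_{u_{1},w_{1}} L^{1}_{y}$: the $L^{1}$ integration in the internal variable $y$ is not dual to any natural operator or Schatten norm of $\gamma_{N,t}^{(2)} - \omega_{N,t}\otimes\omega_{N,t}$, so the $k=2$ Schatten-norm results of \cite{benedikter2014mean} cannot be invoked as a black box. The core technical step is therefore to commit to the explicit Bogoliubov expansion and to perform the $y$-Cauchy--Schwarz \emph{before} absolute values, so that every fluctuation piece is recognisable as the kernel of an operator on $\xi_{N,t}$ controllable by the a priori number-operator estimate.
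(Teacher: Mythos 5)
Your proposal for the first two bounds is the paper's own route: pass to the particle--hole transformation $R_{\omega_{N,t}}$ (denoted $\cR_{\mathcal{V}_{N,t}}$ there), form the fluctuation vector $\xi_{N,t}=\cU_N(t;0)\Omega$, invoke the Gr\"onwall-type propagation estimate for $\langle\xi_{N,t},\cN^k\xi_{N,t}\rangle$ (the paper cites \cite{benedikter2014mean}, Thm.~3.2), and read off \eqref{eq:theorem_HS}--\eqref{eq:theorem_trace} from the Fock-space expansion of $\gamma^{(1)}_{N,t}-\omega_{N,t}$.

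For \eqref{eq:mixed_norm_0} you also follow the same skeleton as the paper---expand the four $a^\#$'s through the Bogoliubov relations \eqref{eq:exciting_anil_creation} into sixteen monomials, peel off the fully Wick-contracted direct piece $\omega(u_1;w_1)\omega(y;y)$ (which cancels the subtracted $\omega\otimes\omega$) and the exchange piece, and control the remaining fluctuation monomials with $\opnorm{\rmu_t},\opnorm{\rmv_t}\le1$, $\hsnorm{\rmv_t}\le\sqrt{N}$, $\trnorm{\omega_{N,t}}=N$, and $\langle\xi_{N,t},\cN^k\xi_{N,t}\rangle\le C(t)$. The genuine difference is how the mixed $L^2_{u_1,w_1}L^1_y$ norm is reached at the end. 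The paper tests $\gamma^{(2)}_{N,t}-\omega_{N,t}\otimes\omega_{N,t}$ against a factored observable $O_1\otimes O_2$, proves
\[
\left|\Tr\,(O_1\otimes O_2)\big(\gamma^{(2)}_{N,t}-\omega_{N,t}\otimes\omega_{N,t}\big)\right|\le N\,\hsnorm{O_1}\opnorm{O_2}\,\norm{(\cN+1)\cU_N(t;0)\xi_N},
\]
and then asserts the mixed-norm bound follows; since the absolute value sits inside the $y$-integral, the dual observable is not of product form, so this passage is implicit. Your alternative---apply Cauchy--Schwarz in $y$ inside each monomial \emph{before} taking absolute values, re-pairing the internal variable into a quadratic form $\int\dy\,\norm{a_y\xi}^2=\langle\xi,\cN\xi\rangle$ or $\int\dy\,a^*(h_y)a(h_y)\le\norm{h}^2\cN$---attacks the $L^1_y$ directly and is arguably the cleaner way to make the duality step rigorous. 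You correctly identify the anisotropic mixed norm as the real technical point.

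One small inaccuracy: you claim the exchange contribution gives $\int\dy\,\omega(u_1;y)\omega(y;w_1)=\omega(u_1;w_1)$ with $L^2_{u_1,w_1}$-norm $\hsnorm{\omega_{N,t}}=O(\sqrt N)$. That computation drops the absolute value inside the $y$-integral, which cannot be pushed past the operator composition. With the absolute value, Cauchy--Schwarz in $y$ gives only
\[
\int\du_1\dw_1\left[\int\dy\,|\omega(u_1;y)||\omega(y;w_1)|\right]^2\le\hsnorm{\omega_{N,t}}^4=N^2,
\]
so the exchange piece is $O(N)$, not $O(\sqrt N)$; this is exactly the rate the paper records for the term $C$ after dualizing. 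The final bound $O(N)$ in \eqref{eq:mixed_norm_0} is unaffected.
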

\label{sec:mixednorm}

The proof of Proposition \ref{prop:mixed_norm} requires the following results from \cite{benedikter2014mean}, namely:

\begin{Lemma}[Lemma 3.1 of \cite{benedikter2014mean}]\label{lem:bound_O}
	Let $\dd{\Gamma}(O)$ be the second quantization of any bounded operator $O$ on $L^2(\R^3)$, i.e.
	\[
	\dd{\Gamma}(O) :=  \int \dx\dy\ O(x;y) a^*_x a_y.
	\]
	For any $\Psi \in \mathcal{F}_a$, the following inequalities hold
	\begin{equation}\label{eq:bound_O_1}
		\norm{\dd{\Gamma(O)\Psi}} \leq \norm{O} \norm{\cN \Psi }.
	\end{equation}
	If furthermore $O$ is a Hilbert-Schmidt operator, we have the following bounds:
	\begin{align}
		\norm{\dd{\Gamma (O) \Psi}} &\leq \norm{O}_{\mathrm{HS}} \norm{\cN^{1/2} \Psi },\label{eq:bound_O_2} \\
		\norm{\int \mathrm{d}{x} \mathrm{d}{y} O(x;y) a_x a_y \Psi } & \leq  \norm{O}_{\mathrm{HS}} \norm{\cN^{1/2} \Psi },\label{eq:bound_O_3} \\
		\norm{\int \mathrm{d}{x} \mathrm{d}{y} O(x;y) a_x^* a_y^* \Psi } & \leq 2 \norm{O}_{\mathrm{HS}}\norm{(\cN+1)^{1/2} \Psi }.\label{eq:bound_O_4}
	\end{align}
	Finally, if $O$ is a trace class operator, we obtain
	\begin{align}
		\norm{\dd{\Gamma (O) \Psi}} &\leq 2  \norm{O}_{\Tr},\label{eq:bound_O_5} \\
		\norm{\int \mathrm{d}{x} \mathrm{d}{y} O(x;y) a_x a_y \Psi } &\leq 2  \norm{O}_{\Tr},\label{eq:bound_O_6} \\
		\norm{\int \mathrm{d}{x} \mathrm{d}{y} O(x;y) a_x^* a_y^* \Psi } & \leq 2  \norm{O}_{\Tr}\label{eq:bound_O_7},
	\end{align}
	where $\norm{O}_{\Tr} := \Tr |O| = \Tr \sqrt{O^*O}$.
\end{Lemma}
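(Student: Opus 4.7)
The plan rests on the particle-hole (Bogoliubov) transformation framework of \cite{benedikter2014mean}. At every time $t$ one introduces the unitary $R_{\omega_{N,t}}$ on the fermionic Fock space together with the fluctuation dynamics $\mathcal{U}_N(t;0)$, so that $\Psi_{N,t} = R^*_{\omega_{N,t}}\,\mathcal{U}_N(t;0)\, R_{\omega_{N}}\Omega$. The conjugation rules $R^*_\omega a_x R_\omega = a(u_x) + a^*(\bar v_x)$ with $u = 1-\omega$, $v$ determined by $\omega$, allow one to rewrite any expectation $\langle\Psi_{N,t},\mathcal{O}\Psi_{N,t}\rangle$ as a vacuum expectation of $\mathcal{U}_N\Omega$ against an explicit polynomial in $a^{\sharp}$. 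The crucial input from \cite{benedikter2014mean} is the uniform-in-$N$ bound $\langle \mathcal{U}_N(t;0)\Omega,\,\mathcal{N}^k\,\mathcal{U}_N(t;0)\Omega\rangle\leq C(t)$ for fixed $k$, which feeds into Lemma \ref{lem:bound_O}.

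For \eqref{eq:theorem_HS} and \eqref{eq:theorem_trace}, expanding $\gamma^{(1)}_{N,t}(x;y) - \omega_{N,t}(x;y)$ through the particle-hole conjugation produces a finite sum of vacuum expectations $\langle \mathcal{U}_N\Omega, Q_{xy}\mathcal{U}_N\Omega\rangle$ with $Q_{xy}$ quadratic in $a^{\sharp}$, whose kernels are built from $u$, $v$, $\omega_{N,t}$. Pairing with an arbitrary test operator $J$ one gets
\begin{equation*}
\bigl|\Tr J(\gamma^{(1)}_{N,t}-\omega_{N,t})\bigr|\leq \sum_i \bigl|\langle\mathcal{U}_N\Omega,\, \mathrm{d}\Gamma_i(A_{J,i})\,\mathcal{U}_N\Omega\rangle\bigr|,
\end{equation*}
where each $A_{J,i}$ is Hilbert-Schmidt when $J$ is Hilbert-Schmidt and trace-class when $J$ is bounded. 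Applying Lemma \ref{lem:bound_O} with the $\mathcal{N}$-bound above and taking supremum over $J$ in the dual class yields the two stated estimates.

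For the mixed-norm inequality \eqref{eq:mixed_norm_0}, I would first add and subtract the two-body Hartree-Fock correlator
\begin{equation*}
\omega^{(2)}_{N,t}(u_1,y;w_1,y) = \omega_{N,t}(u_1;w_1)\omega_{N,t}(y;y)-\omega_{N,t}(u_1;y)\omega_{N,t}(y;w_1),
\end{equation*}
splitting the integrand into a direct \emph{exchange} piece $-\omega_{N,t}(u_1;y)\omega_{N,t}(y;w_1)$ and a \emph{fluctuation} piece $R_{N,t}:=\gamma^{(2)}_{N,t}-\omega^{(2)}_{N,t}$. The exchange piece is controlled by Cauchy-Schwarz in $y$ together with the projection identity $\int dy\,|\omega_{N,t}(u_1;y)|^2 = \omega_{N,t}(u_1;u_1)$, which gives $\int du_1 dw_1 \bigl(\int dy|\omega_{N,t}(u_1;y)\omega_{N,t}(y;w_1)|\bigr)^2 \leq (\Tr\omega_{N,t})^2 = N^2$, contributing $\leq N$ to the left-hand side. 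For the fluctuation piece I dualize: with $J\in L^2(du_1 dw_1)$ of norm one and $\sigma(u_1,w_1,y):=\mathrm{sign}\,R_{N,t}(u_1,y;w_1,y)$, the left-hand side is the supremum of $\int du_1 dw_1 dy\,J(u_1,w_1)\sigma(u_1,w_1,y)\,R_{N,t}(u_1,y;w_1,y)$. Expanding $R_{N,t}$ via the particle-hole conjugation produces vacuum expectations $\langle\mathcal{U}_N\Omega, Q_J\mathcal{U}_N\Omega\rangle$ where $Q_J$ is quartic in $a^{\sharp}$ with a combined kernel $K(u_1,w_1,y)=J(u_1,w_1)\sigma(u_1,w_1,y)$ intertwined with $u$, $v$, which I then plan to estimate via iterated application of Lemma \ref{lem:bound_O}.

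The main obstacle lies precisely in this last step: the sign function $\sigma$ is only an $L^\infty$ multiplier, so $K$ is not a Schatten-class operator and Lemma \ref{lem:bound_O} cannot be invoked verbatim on the usual block. I plan to recast $K$ as an operator $L^2(dw_1 dy)\to L^2(du_1)$ (or an appropriate rearrangement for each of the finitely many quartic terms), so that its operator or Hilbert-Schmidt norm is controlled by $\|J\|_{L^2}=1$ and $\|\sigma\|_{L^\infty}\leq 1$ alone. Combined with $\|\omega_{N,t}\|_{\Tr}=N$, $\|\omega_{N,t}\|_{HS}=\sqrt{N}$, $\|\omega_{N,t}\|_{op}\leq 1$, and the uniform $\mathcal{N}^{1/2}$-control on $\mathcal{U}_N\Omega$, each contribution should produce at most one factor of $N$. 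The terms carrying three or four factors of $v$ (each of order $\sqrt{N}$) are expected to be the ones saturating the stated bound, so the bookkeeping of which norm to put on each constituent operator is the delicate part.
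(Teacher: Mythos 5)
You have not addressed the statement you were asked to prove. The statement is Lemma~\ref{lem:bound_O}, which collects purely algebraic/functional-analytic second-quantization bounds: \eqref{eq:bound_O_1}--\eqref{eq:bound_O_7} relate $\norm{\dd\Gamma(O)\Psi}$ and the pair operators to $\norm{O}$, $\norm{O}_{\mathrm{HS}}$, $\norm{O}_{\Tr}$ and powers of $\cN$, for an \emph{arbitrary} Fock-space vector $\Psi$, with no dynamics in sight. A proof would proceed by writing $\langle\Phi,\dd\Gamma(O)\Psi\rangle = \int O(x;y)\langle a_x\Phi,a_y\Psi\rangle\,\dx\dy$ and applying Cauchy--Schwarz together with $\int\dx\,\norm{a_x\Psi}^2 = \norm{\cN^{1/2}\Psi}^2$, the CAR bound $\norm{a_x^{\sharp}}_{\mathrm{op}}\leq 1$, and (for the trace-class case) the singular-value decomposition $O=\sum_j\lambda_j\dyad{g_j}{h_j}$. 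None of this appears in your writeup.

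What you have instead sketched is a proof of Proposition~\ref{prop:mixed_norm} (the bounds \eqref{eq:theorem_HS}, \eqref{eq:theorem_trace}, \eqref{eq:mixed_norm_0}), which \emph{invokes} Lemma~\ref{lem:bound_O} as a black box — indeed you say the $\cN^k$-propagation bound ``feeds into Lemma~\ref{lem:bound_O}'', i.e.\ you are using it, not proving it. This is a genuine mismatch of target. (For the record, the paper also does not prove Lemma~\ref{lem:bound_O}: it is quoted verbatim as Lemma~3.1 of \cite{benedikter2014mean}, so there is no internal proof to compare against. But that cuts against your proposal too: the task was to supply or at least sketch that missing proof, not to replicate the downstream application.) If you wish to salvage the remainder as a sketch of Proposition~\ref{prop:mixed_norm}, note that your exchange-term estimate and your dualization against a sign multiplier $\sigma$ diverge from the paper's route, which instead inserts Bogoliubov-conjugated creation/annihilation operators into $\gamma_{N,t}^{(2)}$ directly, normal-orders the sixteen resulting quartic strings plus contraction terms, and estimates each against Hilbert--Schmidt/trace/operator norms of $O_1, O_2, \rmu_t, \rmv_t, \omega_{N,t}$ — the sign obstruction you flag never arises there because the paper bounds the duality pairing $\Tr\,O(\gamma^{(2)}-\omega\otimes\omega)$ term by term and then passes to the mixed norm by the choice of $O_1$, $O_2$.
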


\begin{Lemma}[Proposition 3.4 of \cite{benedikter2014mean}]\label{lem:propagation}
	Suppose the assumption for Theorem \ref{thm:main_ben} holds. Then, there exist constants $K,c>0$ depending only on potential $V$ such that
	\begin{align*}
		\sup_{p\in\bR^{3}}\frac{1}{1+|p|}\Tr|[\omega_{N,t},e^{\ii p\cdot x}]| & \leq K N\hbar\, C(t)\\
		\Tr|[\omega_{N,t},\hbar\nabla]| & \leq KN\hbar \ C(t).
	\end{align*}
\end{Lemma}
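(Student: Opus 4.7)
My plan is to use the particle-hole (Bogoliubov) fluctuation-dynamics framework of \cite{benedikter2014mean}. Let $R_{\omega_{N,t}}$ denote the particle-hole transformation associated with $\omega_{N,t}$, and set $\xi_{N,t} := R_{\omega_{N,t}}^{*}\Psi_{N,t} \in \mathcal{F}_a$. The fundamental input I will take from \cite{benedikter2014mean}, whose proof of propagation relies precisely on Lemma \ref{lem:propagation} as stated above, is the uniform-in-$N$ bound $\langle \xi_{N,t},\mathcal{N}^{k}\xi_{N,t}\rangle \leq C(t)$ for $k=1,2$. All three claims will then be reduced to applications of Lemma \ref{lem:bound_O} in this fluctuation state.

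For \eqref{eq:theorem_HS} and \eqref{eq:theorem_trace}, I will expand $\mathrm{Tr}[A(\gamma^{(1)}_{N,t}-\omega_{N,t})] = \langle \xi_{N,t},\mathcal{O}_A \xi_{N,t}\rangle$ by writing $R_{\omega_{N,t}}^{*}\, d\Gamma(A)\, R_{\omega_{N,t}}$ in Wick normal order; here $\mathcal{O}_A$ is a finite sum of $d\Gamma(u_{N,t} A u_{N,t})$, $d\Gamma(\omega_{N,t} A^{T} \omega_{N,t})$ and off-diagonal $a^{*}a^{*}/aa$ pieces with $u_{N,t}:=1-\omega_{N,t}$. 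Duality then produces the two bounds. Testing against $\|A\|_{\mathrm{HS}}\leq 1$ and applying the Hilbert–Schmidt inequalities in Lemma \ref{lem:bound_O} yields \eqref{eq:theorem_HS} at order $\langle\mathcal{N}\rangle_{\xi_{N,t}}\leq C$; testing against $\|A\|_{\mathrm{op}}\leq 1$ and using $\|\omega_{N,t}A\|_{\mathrm{HS}} \leq \|\omega_{N,t}\|_{\mathrm{HS}}\cdot\|A\|_{\mathrm{op}} = \sqrt{N}$ produces the $\sqrt{N}$ factor in \eqref{eq:theorem_trace}.

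For the mixed-norm estimate \eqref{eq:mixed_norm_0}, I decompose $\gamma^{(2)}_{N,t}(u_1,y;w_1,y) = \langle \Psi_{N,t}, a_{w_1}^{*}a_{y}^{*}a_{y}a_{u_1}\Psi_{N,t}\rangle$ by collecting all Wick contractions produced by the Bogoliubov transformation:
\begin{equation*}
\gamma^{(2)}_{N,t}(u_1,y;w_1,y) = \omega_{N,t}(u_1;w_1)\,\omega_{N,t}(y;y) - \omega_{N,t}(u_1;y)\,\omega_{N,t}(y;w_1) + \mathcal{E}_{N,t}(u_1,y;w_1,y),
\end{equation*}
where $\mathcal{E}_{N,t}$ gathers all contributions quadratic or higher in the fluctuations $\xi_{N,t}$. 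The triangle inequality thus splits the mixed norm into an exchange piece and a fluctuation piece. The exchange piece alone saturates the claimed order $N$: Cauchy–Schwarz in $y$ gives
\begin{equation*}
\int du_1\,dw_1\,\Big(\int dy\,|\omega_{N,t}(u_1;y)\omega_{N,t}(y;w_1)|\Big)^{2} \leq \|\omega_{N,t}\|_{\mathrm{HS}}^{4} = N^{2}.
\end{equation*}
For the remainder $\mathcal{E}_{N,t}$, every term has the form $\int dy\,\langle \xi_{N,t},a^{\#}\!\cdots a^{\#}\xi_{N,t}\rangle$ with kernels built from $\omega_{N,t}$ and $u_{N,t}$; after pulling $L^{2}_{u_1,w_1}$ inside via Cauchy–Schwarz, each such term is controlled through Lemma \ref{lem:bound_O} applied with the contracted kernel $\int dy\,\omega_{N,t}(\cdot;y)\omega_{N,t}(y;\cdot)$ whose HS norm is bounded by $\|\omega_{N,t}\|_{\mathrm{op}}\|\omega_{N,t}\|_{\mathrm{HS}}\leq\sqrt{N}$, yielding subleading $O(\sqrt{N})$ contributions.

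The main obstacle is that the mixed $L^{2}_{u_1,w_1}L^{1}_{y}$ norm is not directly accessible from the HS or trace-norm bounds of \cite{benedikter2014mean}, which absorb absolute values before the partial trace and thereby scramble the ordering of the two integrations. The remedy is to exploit the diagonal structure $u_2=w_2=y$ in $\mathds{T}^{(1)}$ to convert the $y$-contraction into an operator kernel $K(u_1;w_1)=\int dy\,\cdots$ whose HS norm is directly controlled, at which point Lemma \ref{lem:bound_O} together with $\langle \mathcal{N}\rangle_{\xi_{N,t}}\leq C$ closes the estimate. A secondary technical point is the bookkeeping of the off-diagonal $aa$ and $a^{*}a^{*}$ contributions generated by the Bogoliubov conjugation, which are handled via \eqref{eq:bound_O_3}--\eqref{eq:bound_O_4} and introduce an extra $\langle\mathcal{N}\rangle_{\xi_{N,t}}^{1/2}$ factor that remains consistent with the targeted order $N$.
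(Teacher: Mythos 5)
Your proposal does not address the statement at issue. Lemma~\ref{lem:propagation} is a statement purely about the Hartree--Fock flow $\omega_{N,t}$: it asserts that the semiclassical commutator bounds
\[
\sup_{p}\frac{1}{1+|p|}\Tr\big|[\omega_{N,t},e^{\ii p\cdot x}]\big|\leq KN\hbar\, C(t),
\qquad
\Tr\big|[\omega_{N,t},\hbar\nabla]\big|\leq KN\hbar\, C(t),
\]
which are assumed at $t=0$ in \eqref{eq:assum_semi-classical}, are propagated for all times. What you have written out instead is a sketch of Proposition~\ref{prop:mixed_norm} (the Hilbert--Schmidt bound \eqref{eq:theorem_HS}, the trace bound \eqref{eq:theorem_trace}, and the mixed-norm bound \eqref{eq:mixed_norm_0}), which is a different result about $\gamma^{(1)}_{N,t}$, $\gamma^{(2)}_{N,t}$ and the fluctuation state $\xi_{N,t}$. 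Nothing in your argument produces a bound on $\Tr|[\omega_{N,t},e^{\ii p\cdot x}]|$ or $\Tr|[\omega_{N,t},\hbar\nabla]|$.

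The confusion is more than a mismatch of labels: your opening paragraph explicitly takes as granted the moment bounds $\langle\xi_{N,t},\cN^{k}\xi_{N,t}\rangle\leq C(t)$ \emph{because} they are known to follow from Lemma~\ref{lem:propagation} (this is Lemma~\ref{lem:est_U}, i.e.\ Theorem~3.2 of \cite{benedikter2014mean}). If the goal were to establish Lemma~\ref{lem:propagation} itself, that step would be circular. The correct proof, given as Proposition~3.4 of \cite{benedikter2014mean} (and simply cited in the present paper, which supplies no proof of its own), makes no use of the Fock-space/Bogoliubov machinery, of Wick's theorem, of Lemma~\ref{lem:bound_O}, or of the reduced density matrices $\gamma^{(k)}_{N,t}$. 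It is a Gr\"onwall argument on the Hartree--Fock equation \eqref{eq:hartree-fock}: one differentiates $[\omega_{N,t},e^{\ii p\cdot x}]$ in time, expands the Hartree term as $V*\rho_{N,t}=\int \dd{\xi}\,\hat V(\xi)\,\hat\rho_{N,t}(\xi)\,e^{\ii\xi\cdot x}$ using the assumption $\int \ddp\,(1+|p|^{2})|\hat V(p)|<\infty$, uses the Jacobi-type commutator identities to rewrite the resulting nested commutators in terms of $[\omega_{N,t},e^{\ii\xi\cdot x}]$ and $[\omega_{N,t},\hbar\nabla]$, estimates the exchange term $X_t$ similarly, and closes the loop by Gr\"onwall with the initial input from Assumption~\ref{assumptions}. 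You would need to rebuild your argument along these lines.
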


\begin{Lemma}[Theorem 3.2 of \cite{benedikter2014mean}]\label{lem:est_U}
	Let $\cU_{N}(t;s)$ be the quantum fluctuation dynamics defined in \eqref{eq:flunctuation} and $\cN$ be the number operator. If the assumptions in Lemma \ref{lem:propagation} hold. Then for $\xi_N\in\mathcal{F}_a$ with $\langle \xi_N , \cN^k \xi_N \rangle \leq C$ for any $k \geq 1$, we have the following inequality:
	\begin{equation}
		\norm{(\cN+1)^k \, \cU_{N}(t;0)\xi_N} \leq   C{(k,t)} .
	\end{equation}
\end{Lemma}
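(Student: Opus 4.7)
The plan is to run a Grönwall argument on the quantity
\[
\Phi_k(t) := \bigl\|(\cN+1)^k\,\cU_N(t;0)\xi_N\bigr\|^2,
\]
so we need to differentiate in $t$ and control the commutator of $(\cN+1)^{2k}$ with the generator of $\cU_N(t;s)$. Writing $\psi_t := \cU_N(t;0)\xi_N$ and denoting by $\mathcal{L}_N(t)$ the (time-dependent, self-adjoint) generator of the fluctuation dynamics, one has
\[
\frac{d}{dt}\Phi_k(t) = i\,\bigl\langle \psi_t,\bigl[\mathcal{L}_N(t),(\cN+1)^{2k}\bigr]\psi_t\bigr\rangle.
\]
The key structural point is that $\mathcal{L}_N(t)$ splits as $\mathcal{L}_N(t) = \mathcal{L}_{\text{diag}}(t) + \mathcal{L}_{\text{off}}(t)$, where $\mathcal{L}_{\text{diag}}$ preserves the particle number (kinetic term and $a^*a$-type mean-field/exchange pieces, which commute with $\cN$ and hence with $(\cN+1)^{2k}$) and $\mathcal{L}_{\text{off}}(t)$ consists of pair-creation/annihilation contributions of the schematic form
\[
\int dx\,dy\; V_N(x-y)\,K_t(x;y)\,a^*_x a^*_y \;+\; \text{h.c.},
\]
where the kernels $K_t$ are built from $\omega_{N,t}$, $1-\omega_{N,t}$, and Slater projections associated with the Hartree--Fock state. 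Only $\mathcal{L}_{\text{off}}$ contributes to the commutator.

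The first step would be the algebraic identity, proved by iterating $[(\cN+1),a^*_x a^*_y]=2\,a^*_x a^*_y$,
\[
\bigl[(\cN+1)^{2k},a^*_x a^*_y\bigr] \;=\; a^*_x a^*_y\,P_{2k}(\cN) \;=\; P'_{2k}(\cN)\,a^*_x a^*_y,
\]
with $P_{2k}$ a polynomial of degree $2k-1$. Using this, each summand of $\bigl\langle\psi_t,[\mathcal{L}_{\text{off}}(t),(\cN+1)^{2k}]\psi_t\bigr\rangle$ can be written as
\[
\bigl\langle (\cN+1)^{k-1/2}\psi_t,\ (\cN+1)^{-k+1/2}P_{2k}(\cN)\cdot\mathcal{A}_{K_t}\cdot(\cN+1)^{-k+1/2}\;(\cN+1)^{k-1/2}\psi_t\bigr\rangle,
\]
where $\mathcal{A}_{K_t}$ is either of the quadratic operators $\int K_t(x;y)\,a^*_x a^*_y\,dx\,dy$ or its adjoint. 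Combined with the Hilbert--Schmidt and trace-norm bounds \eqref{eq:bound_O_4} and \eqref{eq:bound_O_7} of Lemma \ref{lem:bound_O}, and Cauchy--Schwarz, this reduces everything to controlling Hilbert--Schmidt/trace norms of the kernels $V_N(\cdot-\cdot)K_t(\cdot;\cdot)$. Since $V_N$ is a convolution with the interaction, these norms are exactly the quantities estimated by the commutator bounds of Lemma \ref{lem:propagation}: $\sup_p(1+|p|)^{-1}\|[\omega_{N,t},e^{ip\cdot x}]\|_{\Tr} \leq KN\hbar C(t)$ and $\|[\omega_{N,t},\hbar\nabla]\|_{\Tr}\leq KN\hbar C(t)$. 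Combined with $\hbar^3 N = 1$, these produce bounds on $\|K_t\|_{\mathrm{HS}}$ and $\|K_t\|_{\Tr}$ that are \emph{uniform in $N$} on any bounded time interval.

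Putting everything together one obtains an inequality of the form
\[
\frac{d}{dt}\Phi_k(t) \;\leq\; C(t)\bigl(\Phi_k(t) + \Phi_{k-1}(t) + \cdots + \Phi_0(t)\bigr),
\]
because the polynomial $P_{2k}(\cN)$ on $(\cN+1)^{-k+1/2}$ produces terms of various orders up to $\cN^{k-1/2}$. Induction on $k$, starting from the trivial case $k=0$ (where $\Phi_0(t)=\|\psi_t\|^2 = \|\xi_N\|^2$ is conserved by the unitarity of $\cU_N$), together with Grönwall's inequality, yields $\Phi_k(t)\leq C(k,t)$, which is the desired bound.

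The main obstacle will be the careful bookkeeping of all the summands produced by the commutator $[\mathcal{L}_{\text{off}}(t),(\cN+1)^{2k}]$: one must check that every kernel that appears (there are typically several, coming from the $\omega_{N,t}$ and $1-\omega_{N,t}$ factors in the Bogoliubov transformation) can be dominated by the operators whose trace norms are controlled by Lemma \ref{lem:propagation}, and that the polynomials $P_{2k}(\cN)$ on the $(\cN+1)^{-k+1/2}$-normalized states do not produce powers higher than $k$. A second subtlety is that the quadratic $a^*a^*$ terms do not commute with $\cN$, so after using \eqref{eq:bound_O_4} one still needs to bring the vectors back into the form $\|(\cN+1)^{k}\psi_t\|$; this is standard but must be done with sufficient care to keep the induction on $k$ closed. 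Once these bookkeeping steps are in place, the Grönwall estimate gives $C(k,t)$ with explicit (exponential in $t$) dependence.
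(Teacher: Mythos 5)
The paper does not prove this lemma: it is imported verbatim as Theorem 3.2 of \cite{benedikter2014mean}, so there is no in-house argument to compare against. Your sketch is, however, a faithful reconstruction of the Gr\"onwall argument used in that reference. The essential ingredients you identify are all correct: only the pair-creation/annihilation part of the fluctuation generator fails to commute with $\cN$; the pull-through identity $[(\cN+1)^{2k},a^*_xa^*_y]=a^*_xa^*_y\,P_{2k}(\cN)$ reduces the commutator to quadratic operators with kernels of the schematic form $\frac{1}{N\hbar}\hat V(p)\,\rmu_t e^{\ii p\cdot x}\bar\rmv_t$; and since $\rmu_t\rmv_t=0$, one has $\rmu_t e^{\ii p\cdot x}\bar\rmv_t=[\rmu_t,e^{\ii p\cdot x}]\bar\rmv_t$, so the trace/Hilbert--Schmidt norms of these kernels are exactly what Lemma \ref{lem:propagation} controls, giving an $N$-uniform bound $(1+|p|)C(t)$ that is integrable against $|\hat V(p)|$ by Assumption 1. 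Two minor refinements: (i) the induction on $k$ is unnecessary, since $(\cN+1)^{j}\leq(\cN+1)^{k}$ for $j\leq k$ implies $\Phi_j(t)\leq\Phi_k(t)$, so the differential inequality already closes to $\frac{d}{dt}\Phi_k\leq C(k,t)\Phi_k$ and a single Gr\"onwall step suffices for each $k$; (ii) the ``bring the vectors back to $\norm{(\cN+1)^{k}\psi_t}$'' step you flag is resolved by the same pull-through relation $a^*_xa^*_y\,g(\cN)=g(\cN-2)\,a^*_xa^*_y$, which lets one distribute the polynomial $P_{2k}(\cN)$ symmetrically across the two creation operators before applying the estimates \eqref{eq:bound_O_4} and \eqref{eq:bound_O_7} of Lemma \ref{lem:bound_O}. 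With these points filled in, the proposal matches the cited source's proof.
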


\begin{Remark}
	Here in this paper we only need the result for initial data {$\xi_N=\Omega$ where $\Omega$ is the vacuum state give in Appendix \ref{sec:FockSpace}.}
\end{Remark}

Now, we are ready to provide the proof of Proposition \ref{prop:mixed_norm}.

\begin{proof}[Proof of Proposition \ref{prop:mixed_norm}]
	
	The proof of the inequalities \eqref{eq:theorem_HS} and \eqref{eq:theorem_trace} follows by modifying Theorem 2.1 of \cite{benedikter2014mean}. In particular, from equation (4.3) in \cite{benedikter2014mean}, we obtain
	\begin{align*}
		\hsnorm{\gamma^{(1)}_{N,t} - \omega_{N,t}}  &\leq C  \norm{\cN^\frac{1}{2} \cU_N(t;0) \xi_N},\\
		\trnorm{ \gamma^{(1)}_{N,t} - \omega_{N,t} } &\leq  C  \sqrt{N} \norm{ \cN  \cU_N(t;0) \xi_N},
	\end{align*}
	by choosing the appropriate operator $O$ as discussed in \cite{benedikter2014mean}. Our results for \eqref{eq:theorem_HS} and \eqref{eq:theorem_trace} are obtained by applying Lemma \ref{lem:est_U} and taking the assumption that  $\norm{(\cN+1)\xi_N}\leq C$.
	
	Therefore, it remains to prove for \eqref{eq:mixed_norm_0}. As remarked previously, the trace norm and Hilbert-Schmidt norm of the difference between $\gamma_{N,t}^{(k)}$ and $\omega_{N,t}^{(k)}$ are obtained separately with the help of Wick's theorem for $k \geq 2$ in \cite{benedikter2014mean}. For our term, however, we do not directly use Wick's theorem to compute \eqref{eq:mixed_norm_0} as each terms requires similar but still unique method when taking the estimation.
	
	Simplifying the notation $\cR_{t} := \cR_{\mathcal{V}_{N,t}}$, where $\cR_{\mathcal{V}_{N,t}}$ is the Bogoliubov transformation given in \eqref{eq:exciting_anil_creation}, we have, from the definition of a $2$-particle reduced density matrix and \eqref{eq:exciting_anil_creation}. that
	\begin{align}
		&\gamma^{(2)}_{N,t}(x_1,x_2;y_1,y_2) \nonumber\\
		& = \left< \xi_N, \cU^*_N(t;0) \cR^*_t a^*_{y_1} a^*_{y_2} a_{x_2} a_{x_1} \cR_t    \cU_N(t;0) \xi_N \right>\nonumber\\
		&
		= \left< \xi_N, \cU^*_N(t;0) \ \cR^*_t a^*_{y_1} \cR_t \cR^*_t a^*_{y_2} \cR_t \cR^*_t a_{x_2} \cR_t \cR^*_t a_{x_1} \cR_t  \cU_N(t;0) \xi_N \right>\nonumber\\
		&
		= \bigg<  \xi_N, \cU^*_N(t;0)  \left( a^*(\rmu_{t,{y_1}}) + a(\bar{\rmv}_{t,{y_1}}) \right) \left( a^*(\rmu_{t,{y_2}}) + a(\bar{\rmv}_{t,{y_2}}) \right) \nonumber\\
		&\qquad \left(a(\rmu_{t,x_2}) + a^*(\bar{\rmv}_{t,x_2})\right) \left(a(\rmu_{t,x_1}) + a^*(\bar{\rmv}_{t,x_1})\right)    \cU_N(t;0)  \xi_N  \big>\nonumber\\
		&
		= \bigg<  \xi_N, \cU^*_N(t;0) \bigg[ a(\bar{\rmv}_{t,y_1})a(\bar{\rmv}_{t,y_2})a(\rmu_{t,x_2})a(\rmu_{t,x_1}) + a(\bar{\rmv}_{t,y_1})a(\bar{\rmv}_{t,y_2})a^*(\bar{\rmv}_{t,x_2})a(\rmu_{t,x_1}) \nonumber\\
		&
		\qquad + a(\bar{\rmv}_{t,y_1})a(\bar{\rmv}_{t,y_2})a(\rmu_{t,x_2})a^*(\bar{\rmv}_{t,x_1}) + {  a(\bar{\rmv}_{t,y_1})a(\bar{\rmv}_{t,y_2})a^*(\bar{\rmv}_{t,x_2})a^*(\bar{\rmv}_{t,x_1})}\nonumber\\
		&
		\qquad + a(\bar{\rmv}_{t,y_1})a^*(\rmu_{t,y_2})a(\rmu_{t,x_2})a(\rmu_{t,x_1}) + a(\bar{\rmv}_{t,y_1})a^*(\rmu_{t,y_2})a^*(\bar{\rmv}_{t,x_2})a(\rmu_{t,x_1})\nonumber\\
		&
		\qquad + a(\bar{\rmv}_{t,y_1})a^*(\rmu_{t,y_2})a(\rmu_{t,x_2})a^*(\bar{\rmv}_{t,x_1}) + a(\bar{\rmv}_{t,y_1})a^*(\rmu_{t,y_2})a^*(\bar{\rmv}_{t,x_2})a^*(\bar{\rmv}_{t,x_1})\nonumber\\
		&
		\qquad + a^*(\rmu_{t,y_1})a(\bar{\rmv}_{t,y_2})a(\rmu_{t,x_2})a(\rmu_{t,x_1}) + a^*(\rmu_{t,y_1})a(\bar{\rmv}_{t,y_2})a^*(\bar{\rmv}_{t,x_2})a(\rmu_{t,x_1})\nonumber\\
		&
		\qquad + a^*(\rmu_{t,y_1})a(\bar{\rmv}_{t,y_2})a(\rmu_{t,x_2})a^*(\bar{\rmv}_{t,x_1}) + a^*(\rmu_{t,y_1})a(\bar{\rmv}_{t,y_2})a^*(\bar{\rmv}_{t,x_2})a^*(\bar{\rmv}_{t,x_1})\nonumber\\
		&
		\qquad + a^*(\rmu_{t,y_1})a^*(\rmu_{t,y_2})a(\rmu_{t,x_2}) a(\rmu_{t,x_1})+ a^*(\rmu_{t,y_1})a^*(\rmu_{t,y_2})a^*(\bar{\rmv}_{t,x_2})a(\rmu_{t,x_1})\nonumber\\
		&
		\qquad + a^*(\rmu_{t,y_1})a^*(\rmu_{t,y_2}) a(\rmu_{t,x_2})a^*(\bar{\rmv}_{t,x_1})+ a^*(\rmu_{t,y_1})a^*(\rmu_{t,y_2})a^*(\bar{\rmv}_{t,x_2})a^*(\bar{\rmv}_{t,x_1})\bigg] \cU_N(t;0)  \xi_N  \bigg>,\label{eq:gamma_omega_0}
	\end{align}
	where we use \eqref{eq:exciting_anil_creation} in the third equality. 
	Therefore, we obtain,	using the fact that $ \left<\bar{\rmv}_{t,x}, {\rmu}_{t,y} \right> = 0$, $\left<\rmu_{t,x}, \bar{\rmv}_{t,y} \right>  = 0$,  $\left<\bar{\rmv}_{t,x}, \bar{\rmv}_{t,y}\right> = {\omega_{N,t}(y;x)}$, and CAR,
	\begin{align}%\label{eq:rearranged_CAR_1}
		&\int \dx_1 \dx_2 \mathrm{d}z_1 \mathrm{d}z_2\ O_1(x_1;z_1) O_2(x_2;z_2)\left( \gamma^{(2)}_{N,t}(z_1,z_2;x_1,x_2) - \omega_{N,t} (z_1; x_1)\omega_{N,t} (z_2; x_2)  \right)\nonumber\\
		&
		= \int \dx_1 \dx_2 \mathrm{d}z_1 \mathrm{d}z_2\ O_1(x_1;z_1) O_2(x_2;z_2)  \bigg<  \xi_N, \cU^*_N(t;0) \bigg[ a(\bar{\rmv}_{t,x_1}) a(\rmu_{t,z_1}) a(\bar{\rmv}_{t,x_2}) a(\rmu_{t,z_2})  \nonumber\\
		&
		\qquad + a(\bar{\rmv}_{t,x_1}) a(\rmu_{t,z_1}) a^*(\rmu_{t,x_2}) a(\rmu_{t,z_2}) + a(\bar{\rmv}_{t,x_1}) a(\rmu_{t,z_1}) a^*(\rmu_{t,x_2})a^*(\bar{\rmv}_{t,z_2}) + a^*(\bar{\rmv}_{t,z_1}) a(\bar{\rmv}_{t,x_1})a^*(\rmu_{t,x_2})a(\rmu_{t,z_2}) \nonumber\\
		&
		\qquad + a^*(\bar{\rmv}_{t,z_1}) a(\bar{\rmv}_{t,x_1}) a^*(\bar{\rmv}_{t,z_2}) a^*(\rmu_{t,x_2}) - a^*(\rmu_{t,x_1})a(\rmu_{t,z_1})a(\bar{\rmv}_{t,x_2})a(\rmu_{t,z_2}) + a^*(\rmu_{t,x_1}) a(\rmu_{t,z_1})  a^*(\bar{\rmv}_{t,z_2}) a(\bar{\rmv}_{t,x_2}) \nonumber\\
		&
		\qquad + a^*(\rmu_{t,x_1})a^*(\bar{\rmv}_{t,z_1})a(\bar{\rmv}_{t,x_2})a(\rmu_{t,z_2}) + a^*(\bar{\rmv}_{t,z_1}) a^*(\rmu_{t,x_1}) a^*(\bar{\rmv}_{t,z_2})a(\bar{\rmv}_{t,x_2}) + a^*(\rmu_{t,x_1})  a(\rmu_{t,z_1}) a^*(\rmu_{t,x_2}) a(\rmu_{t,z_2})\nonumber\\
		&
		\qquad + a^*(\rmu_{t,x_1}) a(\rmu_{t,z_1}) a^*(\rmu_{t,x_2}) a^*(\bar{\rmv}_{t,z_2})  + a^*(\rmu_{t,x_1})a^*(\bar{\rmv}_{t,z_1})a^*(\rmu_{t,x_2})a^*(\bar{\rmv}_{t,z_2}) +  a^*(\bar{\rmv}_{t,z_1})a(\bar{\rmv}_{t,x_1})a^*(\bar{\rmv}_{t,z_2})a(\bar{\rmv}_{t,x_2})\nonumber\\
		&
		\qquad - a^*(\bar{\rmv}_{t,z_1}) a(\bar{\rmv}_{t,x_1}) a(\rmu_{t,z_2}) a(\bar{\rmv}_{t,x_2})  + a^*(\rmu_{t,x_1})a^*(\bar{\rmv}_{t,z_1}) a^*(\rmu_{t,x_2}) a(\rmu_{t,z_2}) - a(\rmu_{t,z_1}) a(\bar{\rmv}_{t,x_1}) a^*(\bar{\rmv}_{t,z_2}) a(\bar{\rmv}_{t,x_2})\nonumber\\
		&
		\qquad
		- \left<\bar{\rmv}_{t,x_2}, \bar{\rmv}_{t,z_1} \right> a(\bar{\rmv}_{t,x_1}) a(\rmu_{t,z_2})
		+ \left<\bar{\rmv}_{t,x_1}, \bar{\rmv}_{t,z_1} \right> a(\bar{\rmv}_{t,x_2}) a(\rmu_{t,z_2}) \nonumber\\
		&
		\qquad- \left<\rmu_{t,z_1}, \rmu_{t,x_2} \right> a(\bar{\rmv}_{t,x_1}) a(\rmu_{t,z_2})
		+ \left<\rmu_{t,z_1}, \rmu_{t,x_2} \right> a^*(\bar{\rmv}_{t,z_2}) a(\bar{\rmv}_{t,x_1})  -  \left<\rmu_{t,z_1}, \rmu_{t,x_2} \right> \left<\bar{\rmv}_{t,x_1}, \bar{\rmv}_{t,z_2} \right>\nonumber\\
		&
		\qquad
		+ \left<\bar{\rmv}_{t,x_1}, \bar{\rmv}_{t,z_1} \right> a^*(\rmu_{t,x_2})  a(\rmu_{t,z_2})
		+ \left<\bar{\rmv}_{t,x_1}, \bar{\rmv}_{t,z_1} \right> a^*(\rmu_{t,x_2})  a^*(\bar{\rmv}_{t,z_2})  \nonumber\\
		&
		\qquad
		- \left<\bar{\rmv}_{t,x_2}, \bar{\rmv}_{t,z_1} \right>  a^*(\rmu_{t,x_1})a(\rmu_{t,z_2})
		- \left<\bar{\rmv}_{t,x_2}, \bar{\rmv}_{t,z_1} \right> a^*(\rmu_{t,x_1})a^*(\bar{\rmv}_{t,z_2}) \nonumber\\
		&
		\qquad  - \left<\rmu_{t,z_1}, \rmu_{t,x_2} \right> a^*(\rmu_{t,x_1})a(\rmu_{t,z_2})
		- \left<\rmu_{t,z_1},\rmu_{t,x_2} \right> a^*(\rmu_{t,x_1})a^*(\bar{\rmv}_{t,z_2}) \nonumber\\
		&
		\qquad  - \left<\bar{\rmv}_{t,x_2}, \bar{\rmv}_{t,z_2}\right> a^*(\bar{\rmv}_{t,z_1}) a(\bar{\rmv}_{t,x_1}) + \left<\bar{\rmv}_{t,x_2}, \bar{\rmv}_{t,z_1}\right> a^*(\bar{\rmv}_{t,z_2}) a(\bar{\rmv}_{t,x_1}) \nonumber\\
		&
		\qquad +\left< \bar{\rmv}_{t,x_2}, \bar{\rmv}_{t,z_2} \right> a^*(\rmu_{t,x_1}) a(\rmu_{t,z_1}) - \left< \bar{\rmv}_{t,x_2}, \bar{\rmv}_{t,z_2} \right> a^*(\bar{\rmv}_{t,z_1}) a^*(\rmu_{t,x_1})\nonumber\\
		&
		\qquad   - \left<\bar{\rmv}_{t,x_1}, \bar{\rmv}_{t,z_1}\right> a^*(\bar{\rmv}_{t,z_2}) a(\bar{\rmv}_{t,x_2})  -  {\left<\bar{\rmv}_{t,x_2}, \bar{\rmv}_{t,z_1}\right> \left<\bar{\rmv}_{t,x_1}, \bar{\rmv}_{t,z_2}\right>} \bigg] \cU_N(t;0)  \xi_N  \bigg>\nonumber\\
		&
		= : \sum_{i=1}^{16} A_i +  \sum_{j=1}^{16} B_j + {C}. \numberthis \label{many_terms}
	\end{align}
	Using the fact that $\opnorm{\rmu_t},\opnorm{\rmv_t} \leq 1$, $\hsnorm{\rmv_t} \leq \sqrt{N}$, $\trnorm{\omega_{N,t}} = N$ and the assumption $\norm{\xi_N} \leq 1$, we do the following estimates for the first term from $\{A_i\}_{i=1}^{16}$ and $\{B_i\}_{i=1}^{16}$ separately.
	\begin{align*}
		&|A_1|\\
		&
		= \bigg| \int \dx_1 \dx_2  \mathrm{d}z_1 \mathrm{d}z_2\ \big<\xi_N, U^*_N(t;0)    \int  \deta_1 \deta'_1\ a_{\eta_1} a_{\eta'_1} \rmv_t(\eta_1;x_1) O_1(x_1;z_1) {\rmu_t(z_1;\eta'_1)}\\
		&
		\qquad  \int \deta_2 \deta'_2\ a_{\eta_2} a_{\eta'_2} \rmv_t(\eta_2;x_2) O_2(x_2; z_2) {\rmu_t(z_2;\eta'_2)} \cU_N(t;0) \xi_N \big> \bigg|\\
		%		&
		%		= \bigg| \big<\xi_N, U^*_N(t;0)    \int  \deta_1 \deta'_1\ a_{\eta_1} a_{\eta'_1} \big(\rmv_t O_1 \rmu_t\big)(\eta_1;\eta'_1) \int \deta_2 \deta'_2\ a_{\eta_2} a_{\eta'_2} \big(\rmv_t O_2 \rmu_t\big)(\eta_2;\eta'_2) \cU_N(t;0) \xi_N \big> \bigg|\\
		&
		= \bigg|\int \deta_1 \deta'_2 \big< \xi_N,  U^*_N(t;0)  a_{\eta_1} a\big(\overline{\rmv_t O_1 \rmu_t}(\eta_1;\cdot)\big)  a\big(\overline{\rmv_t O_2 \rmu_t}(\cdot;\eta'_2)\big) a_{\eta'_2}  \cU_N(t;0) \xi_N \big> \bigg|\\
		%		&
		%		= \bigg|\int \deta_1 \deta'_2 \big< a^*\big(\overline{\rmv_t O_1 \rmu_t}(\eta_1;\cdot)\big) a^*_{\eta_1} \cU_N(t;0)  \xi_N,   a\big(\overline{\rmv_t O_2 \rmu_t}(\cdot;\eta'_2)\big) a_{\eta'_2}  \cU_N(t;0) \xi_N \big> \bigg|\\
		&
		\leq \int \dd{\eta_1}  \norm{ a^*\big(\overline{\rmv_t O_1 \rmu_t}(\eta_1;\cdot)\big) a^*_{\eta_1} \cU_N(t;0)  \xi_N} \int \dd{\eta'_2} \norm{ a\big(\overline{\rmv_t O_2 \rmu_t}(\cdot;\eta'_2)\big) a_{\eta'_2}  \cU_N(t;0) \xi_N } \\
		&
		\leq \int \dd{\eta_1}  \norm{\rmv_t O_1 \rmu_t(\eta_1;\cdot)}_2 \norm{a^*_{\eta_1} \cU_N(t;0)  \xi_N} \int \dd{\eta'_2} \norm{ \rmv_t O_2 \rmu_t(\cdot;\eta'_2)}_2 \norm{a_{\eta'_2}  \cU_N(t;0) \xi_N }\\
		&
		\leq  \hsnorm{\rmv_t O_1 \rmu_t}  \left(\int \dd{\eta_1} \norm{a^*_{\eta_1} \cU_N(t;0)  \xi_N}^2 \right)^\frac{1}{2} \hsnorm{\rmv_t O_2 \rmu_t} \left(\int \dd{\eta'_2} \norm{a_{\eta'_2}  \cU_N(t;0) \xi_N }^2 \right)^\frac{1}{2}\\
		%		&
		%		{=} \hsnorm{\rmv_t O_1 \rmu_t} \left(\int \dd{\eta_1}  \left<\xi_N, \cU_N^*(t;0)  a_{\eta_1} a^*_{\eta_1} \cU_N(t;0)  \xi_N \right> \right)^\frac{1}{2} \\
		%		&
		%		\qquad \hsnorm{\rmv_t O_2 \rmu_t}  \left(\int \dd{\eta'_2}  \left<\xi_N, \cU_N^*(t;0)  a^*_{\eta'_2} a_{\eta'_2} \cU_N(t;0)  \xi_N \right> \right)^\frac{1}{2}\\
		%		&
		%		\leq \sqrt{N} \hsnorm{\rmv_t O_1} \opnorm{\rmu_t} \hsnorm{\rmv_t O_2} \opnorm{\rmu_t}  \norm{\left(\cN + 1 \right)^\frac{1}{2} \cU_N(t;0) \xi_N}\\
		&
		\leq \sqrt{N} \opnorm{\rmv_t} \hsnorm{O_1} \opnorm{\rmu_t} \hsnorm{\rmv_t} \opnorm{O_2} \opnorm{\rmu_t}  \norm{\left(\cN + 1 \right)^\frac{1}{2} \cU_N(t;0) \xi_N}\\
		&
		\leq N \hsnorm{O_1} \opnorm{O_2} \norm{\left(\cN + 1 \right)^\frac{1}{2} \cU_N(t;0) \xi_N}.
	\end{align*}
	Additionally, we have
	\begin{align*}
		&|B_1|\\
		&
		= \bigg| \int \dx_1 \dx_2 \mathrm{d}z_1 \mathrm{d}z_2\ O_1(x_1;z_1) O_2(x_2;z_2)  \big<  \xi_N, \cU^*_N(t;0) , \left<\bar{\rmv}_{t,x_2}, \bar{\rmv}_{t,z_1} \right> a(\bar{\rmv}_{t,x_1}) a(\rmu_{t,z_2})    \cU_N(t;0)  \xi_N \big> \bigg|\\
		%		&
		%		= \bigg| \int \dx_1 \dx_2 \mathrm{d}z_1 \mathrm{d}z_2\ O_1(x_1;z_1) O_2(x_2;z_2)  \big<  \xi_N, \cU^*_N(t;0) , \overline{\omega_{N,t}(x_2;z_1)} \\
		%		&
		%		\qquad \int  \deta \deta'\ a_{\eta} a_{\eta'} \rmv_t(\eta;x_1) \overline{\rmu_t(\eta'; z_2)}   \cU_N(t;0)  \xi_N \big> \bigg|\\
		&
		= \bigg| \int \dx_1 \dx_2 \mathrm{d}z_1 \mathrm{d}z_2 \big<  \xi_N, \cU^*_N(t;0) , \int  \deta \deta'\ a_{\eta}    a_{\eta'} \\
		&
		\qquad \rmv_t(\eta;x_1) O_1(x_1;z_1) \omega_{N,t}(z_1;x_2)  O_2(x_2;z_2) {\rmu_t(z_2;\eta')}   \cU_N(t;0)  \xi_N \big> \bigg|\\
		&
		= \bigg|  \int  \deta \deta' \big<  \xi_N, \cU^*_N(t;0) , a_{\eta}    a_{\eta'}   \big(\rmv_t O_1 \omega_{N,t} O_2 \rmu_t\big)(\eta;\eta')   \cU_N(t;0)  \xi_N \big> \bigg|\\
		&
		\leq \hsnorm{\rmv_t O_1 \omega_{N,t} O_2 \rmu_t} \norm{\xi_N} \norm{\cN^{1/2} \cU_N(t;0)  \xi_N}\\
		%		&
		%		\leq  \opnorm{\rmv_t} \hsnorm{O_1} \opnorm{\omega_{N,t}} \opnorm{O_2} \opnorm{\rmu_t} \norm{\cN^{1/2} \cU_N(t;0)  \xi_N}\\
		&
		\leq  \hsnorm{O_1} \opnorm{O_2} \norm{\cN^{1/2} \cU_N(t;0)  \xi_N}.
	\end{align*}
	
	The estimates for the rest of the terms can be done with similar steps
	After getting the bound of each terms, we have obtained the following estimates:
	\begin{equation}
		\begin{aligned}
			\left|\sum_{i=1}^{16} A_i \right| &\leq  N \hsnorm{O_1} \opnorm{O_2}  \norm{\cN \cU_N(t;0) \xi_N}, \\
			\left|\sum_{j=1}^{16} B_i \right| &\leq  \hsnorm{O_1} \opnorm{O_2} \bigg(N \norm{ (\cN+1)^{1/2} \cU_N(t;0)  \xi_N} + \sqrt{N}  \norm{ (\cN+1) \cU_N(t;0)  \xi_N}   \bigg)\\
			&
			\leq   N  \hsnorm{O_1} \opnorm{O_2} \norm{ (\cN+1) \cU_N(t;0)  \xi_N}.
		\end{aligned}
	\end{equation}
	Lastly, the final term is estimated as follows:
	\begin{align*}
		&|C|\\
		& = \bigg|\int \dx_1 \dx_2 \mathrm{d}z_1 \mathrm{d}z_2\ O_1(x_1;z_1) O_2(x_2;z_2)  \big<  \xi_N, \cU^*_N(t;0)\left<\bar{\rmv}_{t,x_2}, \bar{\rmv}_{t,z_1}\right> \left<\bar{\rmv}_{t,x_1}, \bar{\rmv}_{t,z_2}\right> \cU_N(t;0)  \xi_N \big> \bigg|\\
		%		&
		%		= \bigg|\int \dx_1 \dx_2 \mathrm{d}z_1 \mathrm{d}z_2\ O_1(x_1;z_1) O_2(x_2;z_2)  \big<  \xi_N, \cU^*_N(t;0) \omega_{N,t}(z_1;x_2) \omega_{N,t}(z_2;x_1) \cU_N(t;0)  \xi_N \big> \bigg|\\
		&
		= \bigg|\int  \dd{x_2}   \big<  \xi_N, \cU^*_N(t;0) \big( O_1 \omega_{N,t} O_2 \omega_{N,t} \big) (x_2;x_2) \cU_N(t;0)  \xi_N \big> \bigg|\\
		%		&
		%		= \bigg| \bigg<  \xi_N, \cU^*_N(t;0) \left( \int  \dd{x_2}  \big( \bar{\omega}_{N,t} \overline{O^*_1} \bar{\omega}_{N,t} \overline{O^*_2} \big) (x_2;x_2)\right) \cU_N(t;0)  \xi_N \bigg> \bigg|\\
		&
		\leq \trnorm{O_1 \omega_{N,t} O_2 \omega_{N,t}} \big|\big< \xi_N, \cU^*_N(t;0) \cU_N(t;0) \xi_N \big> \big|\\
		%		&
		%		\leq \hsnorm{O_1 \omega_{N,t}} \hsnorm{O_2 \omega_{N,t}}\\
		%		&
		%		\leq \hsnorm{O_1} \opnorm{\omega_{N,t}} \opnorm{O_2} \hsnorm{\omega_{N,t}}\\
		&
		\leq \sqrt{N} \hsnorm{O_1} \opnorm{O_2}.
	\end{align*}
	As a summary, we have
	\begin{equation}\label{eq:estimated_mixed}
		\begin{split}
			\left|\Tr O (\gamma_{N,t}^{(2)}- \omega_{N,t} \otimes \omega_{N,t} )\right| & \leq \bigg|\sum_{i=1}^{16} A_i\bigg| + \bigg|\sum_{j=1}^{16} B_j\bigg| + |C|\\
			&
			\leq N \hsnorm{O_1} \opnorm{O_2} \norm{(\cN+1) \cU_N(t;0) \xi_N },
		\end{split}
	\end{equation}
	which implies that, for $O_1$ and $O_2$ being Hilbert-Schmidt and trace class operators, we get
	\begin{equation}
		\begin{aligned}
			&\left(\int \dx_1\dy_1\ \left[ \int \dd{x_2} \left|\gamma_{N,t}^{(2)}(x_1,x_2;y_1,x_2)  - \omega_{N,t}(x_1;y_1)\omega_{N,t}(x_2;x_2) \right| \right]^2  \right)^\frac{1}{2}\\
			&
			\qquad \leq  N  \norm{(\cN+1) \cU_N(t;0) \xi_N } .
		\end{aligned}
	\end{equation}
	Applying Lemma \ref{lem:est_U}, we obtain the inequalities in Proposition \ref{prop:mixed_norm} as desired.
\end{proof}

Finally, we have the following estimate for the mixed-norm. Since it is one of the main contributions of this paper, we write it as a theorem.

\begin{Theorem}\label{thm:mixed_norm_factorized}
	Suppose the assumptions given in Proposition \ref{prop:mixed_norm} hold. Then, we have the following  estimate
	\begin{equation}\label{eq:col_mixed}
		\left(\int \dw_1\du_1\ \left[ \mathds{T}^{(1)} \big|\gamma_{N,t}^{(2)}  - \gamma^{(1)}_{N,t} \otimes \gamma^{(1)}_{N,t}  \big| (u_1;w_1) \right]^2  \right)^\frac{1}{2} \leq  C_t N,
	\end{equation}
	where the constant $C_t$ depends on potential $V$ and time $t$.
\end{Theorem}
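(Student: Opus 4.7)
The plan is to use the three-term decomposition already recorded in \eqref{eq:intro_insert}, namely
\[
\gamma_{N,t}^{(2)}(u_1,u_2;w_1,w_2) - \gamma_{N,t}^{(1)}(u_1;w_1)\gamma_{N,t}^{(1)}(u_2;w_2) = T_1 + T_2 + T_3,
\]
so that after applying the $y$-diagonal trace $\mathds{T}^{(1)}|\cdot|$ and the $L^2_{u_1,w_1}$-norm, the triangle inequality reduces the statement to proving an $O(N)$ bound for each of the three contributions. The point is that Proposition \ref{prop:mixed_norm} has already delivered all three ingredients we need: the genuine ``fluctuation'' bound \eqref{eq:mixed_norm_0} for $T_1$, together with the trace-norm and Hilbert--Schmidt-norm estimates \eqref{eq:theorem_trace}, \eqref{eq:theorem_HS} for $\gamma^{(1)}_{N,t}-\omega_{N,t}$, which will absorb $T_2$ and $T_3$.

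The $T_1$ contribution is immediate: its mixed norm is exactly the quantity estimated in \eqref{eq:mixed_norm_0}, giving the desired $C_t N$ bound. For $T_2$, I would exploit the positivity of the Hartree--Fock density $\omega_{N,t}(y;y)\geq 0$ and the identity $\int \mathrm{d}y\,\omega_{N,t}(y;y)=\Tr\omega_{N,t}=N$ to factor the $y$-integral, obtaining
\[
\mathds{T}^{(1)}|T_2|(u_1;w_1) = N\,\big|\omega_{N,t}(u_1;w_1)-\gamma^{(1)}_{N,t}(u_1;w_1)\big|,
\]
so that the $L^2_{u_1,w_1}$-norm is $N\,\hsnorm{\omega_{N,t}-\gamma^{(1)}_{N,t}}\leq C N$ by \eqref{eq:theorem_HS}.

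For $T_3$, the factor $\gamma^{(1)}_{N,t}(u_1;w_1)$ comes out of the $y$-integral, leaving the diagonal integral of the self-adjoint operator $\omega_{N,t}-\gamma^{(1)}_{N,t}$. Using the pointwise bound $|A(y;y)|\leq |A|(y;y)$ valid for any self-adjoint trace-class $A$ (just spectrally decompose), I get
\[
\int \mathrm{d}y\,\big|(\omega_{N,t}-\gamma^{(1)}_{N,t})(y;y)\big| \leq \trnorm{\omega_{N,t}-\gamma^{(1)}_{N,t}} \leq C\sqrt{N}
\]
by \eqref{eq:theorem_trace}. Squaring, integrating in $(u_1,w_1)$, and using $\hsnorm{\gamma^{(1)}_{N,t}}^2 = \Tr(\gamma^{(1)}_{N,t})^2 \leq \opnorm{\gamma^{(1)}_{N,t}}\,\Tr\gamma^{(1)}_{N,t}\leq N$ (since $\gamma^{(1)}_{N,t}$ is a positive operator with operator norm at most one for antisymmetric states), gives $CN\cdot N = CN^2$, whose square root is $C\sqrt{N}\cdot\sqrt{N} = CN$, well within the claimed bound.

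No term here requires genuinely new analytic input beyond Proposition \ref{prop:mixed_norm}; the main obstacle is purely organizational, namely recognizing that the inserted/subtracted Hartree--Fock state $\omega_{N,t}\otimes\omega_{N,t}$ converts the nonlinear object $\gamma^{(1)}_{N,t}\otimes\gamma^{(1)}_{N,t}$ into three pieces each of which is controlled by a \emph{different} norm of $\gamma^{(1)}_{N,t}-\omega_{N,t}$ (Hilbert--Schmidt for $T_2$, trace for $T_3$), both already of size $O(\sqrt{N})$ relative to the trivial bound $N$.
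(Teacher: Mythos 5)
Your proof is correct and follows essentially the same route as the paper: the decomposition into $T_1+T_2+T_3$ via the inserted Hartree--Fock product, with $T_1$ bounded by \eqref{eq:mixed_norm_0}, $T_2$ by $\trnorm{\omega_{N,t}}\hsnorm{\omega_{N,t}-\gamma^{(1)}_{N,t}}$, and $T_3$ by $\hsnorm{\gamma^{(1)}_{N,t}}\trnorm{\omega_{N,t}-\gamma^{(1)}_{N,t}}$, exactly as the paper lists. Your proof supplies a few justifications the paper leaves implicit (the pointwise diagonal bound $|A(y;y)|\leq |A|(y;y)$ for self-adjoint trace-class $A$, and the bound $\hsnorm{\gamma^{(1)}_{N,t}}\leq\sqrt{\opnorm{\gamma^{(1)}_{N,t}}\Tr\gamma^{(1)}_{N,t}}\leq\sqrt{N}$ from the Pauli bound $\opnorm{\gamma^{(1)}_{N,t}}\leq 1$), but the argument is otherwise identical.
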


\begin{proof}[Proof of Theorem \ref{thm:mixed_norm_factorized}]
	Inserting the intermediate terms
	\begin{equation}\label{eq:insert}
		\begin{aligned}
			&\gamma_{N,t}^{(2)}(u_1,u_2;w_1,w_2)  - \gamma_{N,t}^{(1)}(u_1;w_1)\gamma_{N,t}^{(1)}(u_2;w_2) \\
			&
			=  \gamma_{N,t}^{(2)}(u_1,u_2;w_1,w_2) - \omega_{N,t}(u_1;w_1) \omega_{N,t}(u_2;w_2)\\
			&
			\quad + \big[ \omega_{N,t}(u_1;w_1)  - \gamma^{(1)}_{N,t}(u_1;w_1)\big] \omega_{N,t}(u_2;w_2)\\
			&
			\quad + \gamma^{(1)}_{N,t}(u_1;w_1) \big[\omega_{N,t}(u_2;w_2) -  \gamma_{N,t}^{(1)}(u_2;w_2)\big]\\
			&
			=: T_1 + T_2 +T_3,
		\end{aligned}
	\end{equation}
	the estimate in \eqref{eq:col_mixed} is then reduced into the estimates of the following terms.
	\begin{itemize}
		\item $\left(\displaystyle\int \dw_1\du_1\ \left[ \mathds{T}^{(1)} \big|\gamma_{N,t}^{(2)}  - \omega^{(1)}_{N,t} \otimes \omega^{(1)}_{N,t}  \big| (u_1;w_1) \right]^2   \right)^\frac{1}{2}$,
		
		\item $ \hsnorm{\omega_{N,t} -  \gamma_{N,t}^{(1)}} \trnorm{\omega_{N,t}}$,
		
		\item $\hsnorm{\gamma_{N,t}^{(1)}} \trnorm{\omega_{N,t} -  \gamma_{N,t}^{(1)}}$.
	\end{itemize}
	Proposition \ref{prop:mixed_norm} implies immediately \eqref{eq:col_mixed} considering the fact that $\trnorm{\omega_{N,t}} \leq N$.
\end{proof}

\begin{proof}[Proof of Proposition \ref{prop:est_mean-field_osc}]
	Direct corollary from Theorem \ref{thm:mixed_norm_factorized}.
\end{proof}

\section{Proof of Theorem \ref{thm:main_ben}}\label{sec:proof_mainTheorem}
{{
		In this section, we prove the main theorem. As have been mentioned in the introduction.	
		We will show that, for any $T>0$, the sequence $m_{N,t}$ is weakly compact 
		and that any accumulation point $m_t$ is exactly the solution of the Vlasov equation.
		For this purpose, we need the following lemma.
		\begin{Lemma}
			Let $m_{N,t}$ be weak solution of the reformulated Schr\"odinger equation \eqref{eq:BBGKY_k1_ben} and $\rho_{N,t}(q):=\int dp\, m_{N,t}(p,q)$.
			Then there exists a subsequence of $m_{N,t}$ (without relabeling for convenience) and function $m_t$ such that as $N\rightarrow \infty$
			\begin{align}
				&m_{N,t}\stackrel{*}{\rightharpoonup} m_t\quad {\rm{in}}\quad L^\infty(0,T;L^s(\bR^3\times\bR^3)),\quad s\in [1,\infty],\label{lN1}\\
				&\nabla V\ast \rho_{N,t}\rightarrow \nabla V\ast\rho_t \quad {\rm{in}}\quad L^r(0,T;L^r(\bR^3)),\quad r\in(1,\infty),\label{lN2}
			\end{align}
			where $\rho_t=\int \mathrm{d}p\, m_t$.
		\end{Lemma}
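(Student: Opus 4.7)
The plan is to follow the classical route for deriving a Vlasov-type equation from a BBGKY/Husimi hierarchy: establish uniform bounds, extract a weakly-$\ast$ convergent subsequence, and upgrade to strong convergence of the convolution with $\nabla V$ by combining a moment bound, the smoothing effect of $\nabla V$, and an equicontinuity-in-time argument coming from the equation \eqref{eq:BBGKY_k1_ben} itself.

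\emph{Step 1: uniform $L^s$ bounds and weak-$\ast$ compactness of $m_{N,t}$.} The Husimi measure \eqref{eq:husimi_def_1} is pointwise nonnegative and, by the CAR, satisfies the pointwise Pauli-type bound $m_{N,t}(q,p)\leq \|f\|_2^2=1$, while integration of \eqref{eq:husimi_def_1} gives $\int \dd q\, \dd p\, m_{N,t}=N$ uniformly in $t$. By interpolation $\|m_{N,t}\|_{L^\infty(0,T;L^s(\bR^6))}\leq C_s$ for every $s\in[1,\infty]$. For $s\in(1,\infty]$, since $L^s(\bR^6)$ is the dual of a separable space, a diagonal Banach--Alaoglu argument extracts a single subsequence and a limit $m_t$ with $m_{N,t}\stackrel{\ast}{\rightharpoonup} m_t$ in $L^\infty(0,T;L^s(\bR^6))$. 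The case $s=1$ is recovered from the propagated moment bound \eqref{eq:inip2q_ben} via Dunford--Pettis: the second-moment estimate gives uniform equi-integrability and tightness, so the same subsequence also converges in the $L^1$ weak-$\ast$ sense.

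\emph{Step 2: passing to the limit under the $p$-integration to get convergence of $\rho_{N,t}$.} The weak-$\ast$ convergence in Step~1 does not directly give convergence of $\rho_{N,t}=\int \dd p\, m_{N,t}$ because the constant test function $1$ in $p$ is not integrable. Here the moment bound \eqref{eq:inip2q_ben}, propagated in time by the energy conservation for \eqref{eq:Schrodinger_0}, provides tightness: $\int_{|p|>R}\!\!\int \dd q\, m_{N,t}\leq C/R^2$ uniformly in $N,t$. Truncating to $|p|\leq R$ and using the weak-$\ast$ convergence against $\varphi(t,q)\rchi_{|p|\leq R}(p)$ gives, upon sending $R\to\infty$, weak convergence $\rho_{N,t}\rightharpoonup \rho_t$ in $L^\infty(0,T;L^s(\bR^3))$ for every $s\in[1,\infty)$, with $\rho_t=\int \dd p\, m_t$.

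\emph{Step 3: strong convergence of $\nabla V\ast \rho_{N,t}$ via Arzel\`a--Ascoli/Aubin--Lions.} The integrability assumption $\int (1+|p|^2)|\widehat V(p)|\dd p<\infty$ in Assumption~\ref{assumptions} yields $\nabla V\in W^{1,\infty}(\bR^3)\cap C_0(\bR^3)$. Spatially, for every fixed $t$, $(\nabla V\ast\rho_{N,t})(q)\to (\nabla V\ast \rho_t)(q)$ pointwise by Step~2 applied to the admissible test function $\nabla V(q-\cdot)$, and $\|\nabla V\ast\rho_{N,t}\|_{L^\infty_q}\leq \|\nabla V\|_\infty\|\rho_{N,t}\|_{L^1_q}\leq C$ uniformly. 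For equicontinuity in time, integrate \eqref{eq:BBGKY_k1_ben} in $p$ against a test function in $q$: the transport term becomes a current $j_{N,t}(q)=\int p\, m_{N,t}(q,p)\dd p$, which is uniformly bounded in $L^\infty(0,T;L^1)$ by Cauchy--Schwarz combined with the kinetic-energy moment; and the residual contributions are controlled by Propositions~\ref{prop:tildeR}, \ref{prop:est_R_1}, \ref{prop:est_mean-field_osc}. This gives a uniform bound for $\partial_t \rho_{N,t}$ in $L^\infty(0,T;W^{-1,1}(\bR^3))$, hence $\partial_t (\nabla V\ast\rho_{N,t})$ is uniformly bounded in $L^\infty(0,T;L^\infty(\bR^3))$ after convolving with $\nabla V$. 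Combined with the uniform spatial regularity from $\nabla^2 V\in L^\infty$, an Arzel\`a--Ascoli argument on any fixed compact set gives uniform convergence in $(t,q)$ on compacts; tightness at infinity from $\rho_{N,t}\in L^1$ and $\nabla V\in C_0$ then upgrades this to $L^r((0,T)\times\bR^3)$ convergence for every $r\in(1,\infty)$.

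The main obstacle is the equicontinuity-in-time step for $\rho_{N,t}$, since the natural equation for $\rho_{N,t}$ inherits all three residues from \eqref{eq:BBGKY_k1_ben}. The residue estimates proved earlier in the paper are precisely what make this step feasible: they ensure that the $q$-divergence of the current plus residues is uniformly bounded in a negative-order space, which is exactly the kind of bound Aubin--Lions requires in order to convert weak convergence of $\rho_{N,t}$ and the smoothing by $\nabla V$ into strong convergence of $\nabla V\ast \rho_{N,t}$.
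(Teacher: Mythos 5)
Your overall architecture matches the paper's: uniform $L^s$ bounds by interpolation, weak-$\ast$ compactness, a time-derivative bound extracted from \eqref{eq:BBGKY_k1_ben}, and an Aubin--Lions-type compactness argument for $\nabla V\ast\rho_{N,t}$. However, two points in your Step~3 do not hold as stated.

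First, you claim that, after integrating \eqref{eq:BBGKY_k1_ben} in $p$, ``the residual contributions are controlled by Propositions~\ref{prop:tildeR}, \ref{prop:est_R_1}, \ref{prop:est_mean-field_osc}.'' After the $p$-integration, both $\nabla_p\cdot\cR_\rms$ and $\nabla_p\cdot\cR_\rmm$, as well as the mean-field transport term $\nabla_p\cdot(\ldots\, m_{N,t})$, vanish identically because they are pure $p$-divergences. Only the kinetic residue $\nabla_q\cdot\int\ddp\,\widetilde{\cR}$ survives, which is exactly what the paper records in the identity $\partial_t\rho_{N,t}=-\nabla_q\cdot\int\ddp\,p\,m_{N,t}+\nabla_q\cdot\int\ddp\,\widetilde{\cR}$. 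This is not a cosmetic difference: it is precisely because only $\widetilde{\cR}$ enters that an $L^{5/4}$ estimate on $\int\ddp\,|\widetilde{\cR}|$ is all that is required.

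Second, and more substantively, you assert a uniform bound $\partial_t\rho_{N,t}\in L^\infty(0,T;W^{-1,1}(\bR^3))$ and deduce from it that $\partial_t(\nabla V\ast\rho_{N,t})$ is bounded in $L^\infty(0,T;L^\infty(\bR^3))$. The $W^{-1,1}$ claim would require $\int\ddp\,|\widetilde{\cR}|$ to be uniformly bounded in $L^1(\bR^3)$, but Proposition~\ref{prop:tildeR} only proves $\norm{\int\ddp\,|\widetilde{\cR}|}_{L^{5/4}(\bR^3)}\leq C\hbar^{1/2}$. If one retraces the proof of that proposition and instead pairs with $\varrho_{N,t}^{1/2}\in L^2$ by Cauchy--Schwarz to aim for the $L^1$ norm, one picks up $\big(\int\varrho_{N,t}\big)^{1/2}\sim N^{1/2}\sim\hbar^{-3/2}$, giving $\norm{\int\ddp\,|\widetilde{\cR}|}_{L^1}\lesssim\hbar^{-1}$, which diverges. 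So the $W^{-1,1}$ bound is not available. The paper circumvents this by testing against $\tilde\varphi\in W^{1,3}(\bR^3)$, obtaining a uniform bound for $\partial_t(\nabla V\ast\rho_{N,t})$ in the weaker space $L^\infty(0,T;W^{-1,3/2}(\bR^3))$, which is still enough for Aubin--Lions once combined with the $L^\infty(0,T;W^{1,\infty})$ bound on $\nabla V\ast\rho_{N,t}$. To repair your argument you should work in a negative-order Sobolev space compatible with the $L^{5/4}$ estimate rather than $W^{-1,1}$.
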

		\begin{proof}
	The estimates in Appendix \ref{sec:apriori} imply
	\begin{align}\label{lN3}
		\|m_{N,t}\|_{L^\infty(0,T;L^1(\bR^3\times\bR^3))}
		+\|m_{N,t}\|_{L^\infty(0,T;L^\infty(\bR^3\times\bR^3))}\leq C,
	\end{align}
	where $C$ appeared in this section denotes a positive constant independent of $N$.
	And combining interpolation inequality, we have
	\begin{align*}
		\|m_{N,t}\|_{L^\infty(0,T;L^s(\bR^3\times\bR^3))}\leq C,\quad s\in[1,\infty].
	\end{align*}
	Therefore, \eqref{lN1} is a direct consequence of the above inequality and the moment estimates in Proposition \ref{prop:2nd_moment_finite}.
	
	Furthermore, Proposition \ref{prop:2nd_moment_finite} implies
	$
	\||p|^2 m_{N,t}\|_{L^\infty(0,T;L^1(\bR^3\times\bR^3))}\leq C,
	$
	together with \eqref{lN3}, we arrive at
	$
	\|\rho_{N,t}\|_{L^\infty(0,T;L^s(\bR^3))}\leq C
	$
	for $s\in [1,\frac{5}{3} ]$.
	Hence there exists a subsequence of $rho_{N,t}$ (which is not re-labeled for convenience) such that
	\begin{align*}
		\rho_{N,t}\stackrel{*}{\rightharpoonup}\rho_t \quad {\rm{in}}\quad L^\infty(0,T;L^s(\bR^3)), \quad s\in \big(1,\frac{5}{3}\big].
	\end{align*}
	Owing to $V\in W^{2,\infty}(\bR^3)$ and Young's convolution inequality, we have for a.e. $t\in(0,T)$,
	\begin{align*}
		\|\nabla^2V\ast \rho_{N,t}\|_{L^\infty(\bR^3)}
		\leq \|\nabla^2 V\|_{L^\infty(\bR^3)}\|\rho_{N,t}\|_{L^1(\bR^3)}\leq C.
	\end{align*}
	Similarly, we obtain that
	\begin{align}\label{lN4}
		\|\nabla V\ast \rho_{N,t}\|_{L^\infty(0,T;W^{1,\infty}(\bR^3))}\leq C.
	\end{align}
	By means of \eqref{eq:BBGKY_k1_ben}, we get
	\begin{align*}
		\partial_t\rho_{N,t}&=\partial_t\int \mathrm{d}p\, m_{N,t}(p,q)
		=-\nabla_q\cdot \int \mathrm{d}p \,p\,m_{N,t}(p,q)+\nabla_q\cdot \int \mathrm{d}p\, \widetilde{\cR}.
	\end{align*}
	It is easy to see that $\partial_t(\nabla V\ast\rho_{N,t}$ satisfies the following equation:
	\begin{align*}
		\partial_t(\nabla V\ast\rho_{N,t})=-\nabla_q\cdot \Big(\nabla V\otimes_\ast\int \mathrm{d}p \,p\,m_{N,t}(p,q) \Big)
		+\nabla_q\cdot \Big(\nabla V\otimes_\ast \int \mathrm{d}p\, \widetilde{\cR} \Big),
	\end{align*}
	where $(u\otimes_\ast v)_{ij}=u_i\ast v_j$ for $(u,v)\in \bR^3\times\bR^3$.
	Noticing that 
	\begin{align*}
		\Big\|\int \mathrm{d}p \,p\,m_{N,t}(p,q) \Big\|_{L^\infty(0,T;L^s(\bR^3))}\leq C,\quad s\in \Big[1,\frac{5}{4} \Big],
	\end{align*}
	we derive for any test function $\tilde \varphi(q)\in W^{1,3}(\bR^3)$ and a.e. $t\in(0,T)$,
	\begin{align*}
		&\Big|\int_{\bR^3}\mathrm{d}q\,\nabla_q\cdot \Big(\nabla V\otimes_\ast\int \mathrm{d}p \,p\,m_{N,t}(p,q) \Big)\tilde \varphi(q)\Big|
		\leq &C(V)
		\Big\|\int \mathrm{d}p\,p\,m_{N,t}(p,q) \Big\|_{L^1(\bR^3)}\|\nabla\tilde \varphi(q)\|_{L^{3}(\bR^3)}.
	\end{align*}
	For the second term, applying Proposition \ref{prop:tildeR}, we have
	\begin{align*}
    	\Big|\int \mathrm{d}q\,\nabla_q\cdot \Big(\nabla V\otimes_\ast \int \mathrm{d}p\, \widetilde{\cR} \Big)\tilde\varphi(q) \Big|\leq C(V)\norm{\int \mathrm{d}p\widetilde{\cR}}_{L^{\frac{5}{4}}(\R^3)}\norm{\nabla\varphi(q)}_{L^{3}(\R^3)}\leq C\hbar\norm{\nabla\varphi(q)}_{L^{3}(\R^3)} .
	\end{align*}
	The estimates above show that 
	\begin{align}\label{lN5}
		\|\partial_t(\nabla V\ast \rho_{N,t})\|_{L^\infty(0,T;W^{-1,\frac{3}{2}}(\bR^3))}\leq C.
	\end{align}
	The inequalities \eqref{lN4} and \eqref{lN5} allow us to apply Aubin-Lions lemma (e.g. in \cite{aubin1963analyse,lions1969quelques}) to infer that \eqref{lN2}. We mention here that the application of Aubin-Lions lemma is proceeded in a sequence of growing balls, and the convergent subsequence is obtained through diagonal rule.
\end{proof}

		\begin{proof}[Proof of Theorem \ref{thm:main_ben}]
			With the help of \eqref{lN1}, \eqref{lN2},  Proposition \ref{prop:tildeR}, Proposition \ref{prop:est_R_1} and Proposition \ref{prop:est_mean-field_osc},
			we can take limit $N\rightarrow \infty$ in the weak formulation of the reformulated Schr\"odinger equation \eqref{eq:BBGKY_k1_ben}.
			More precisely, for any $\phi,\varphi\in C_0^\infty(\bR^3)$ and $\eta\in C_0^\infty[0,T)$, $m_{N,t}$ satisfies the following equation
			\begin{align*}
				\int_0^T&\dt\int \dq\ddp\, m_{N,t}(p,q)\big[\partial_t\eta\varphi(q)\phi(p)+p\cdot\eta(t)\nabla_q\varphi(q)\phi(p)
				-{\frac{1}{(2\pi)^3}}\nabla V\ast\rho_{N,t}\eta(t)\varphi(q)\cdot\nabla_p\phi(p) \big]\\
				&=\int_0^T \dt\,\eta(t)\int \dq\ddp\,\big(\nabla \varphi(q)\,\phi(p)\,\widetilde{\cR}
				+ \varphi(q)\,\nabla\phi(p)(\cR_\rms + \cR_\rmm)\big) - \eta(0)\int \dq\ddp\,\varphi(q)\phi(p)m_{{N},0}.
			\end{align*}
			Since the sums and products of functions of the
			form $\eta(t)\varphi(q)\phi(p)$ are dense in $C_0^\infty([0,T)\times\bR^3\times\bR^3)$, we have showed that the limit of the subsequence is a weak solution of the Vlasov equation. On the other hand, the assumption 1 in \ref{assumptions} implies that $V\in W^{2,\infty}$, from which one obtains that the Vlasov equation has a unique weak solution {as in \cite[Theorem 1.1]{Golse2022}.} Therefore, the whole sequence $m_{N,t}$ converges weakly. Hence the proof of Theorem \ref{thm:main_ben} is completed.
		\end{proof}
	}
}
%\section*{Data Availability}
%Data sharing not applicable to this article as no datasets were generated or analyzed during the current study.
%
%\section*{Conflicts of interest}
%The authors have no competing interests to declare that are relevant to the content of this article.

\bibliographystyle{abbrv}
\bibliography{husimi_vlasov}

\clearpage
\begin{appendices}
	\section{Second Quantization}\label{sec:FockSpace}
	The Fock space formalism and some results from Bogoliubov {{theory}} for the proof of this paper are listed in the following.\footnote{See \cite{nam_qm2} for more pedagogic treatment on the topics.} In particular, as in \cite{Benedikter2016book}, we will introduce the fermionic Fock space over Hilbert spaces as the following direct sum:
	\[
	\mathcal{F}_a := \bC {\oplus} \bigoplus_{n \geq 1} L^2_a (\R^{3n}).
	\]
	{By convention, we say that the vacuum state, denoted as $\Omega = \{1,0,0,\dots \}$, belongs to $\bC$.} For all $\Psi = \{\psi^{(n)}\}_{n\in \bN} \in \mathcal{F}_a$ and $\psi^{(n)} \in L^2_a (\R^{3n})$, we define the number of particle operator on the $n$-th sector by $\big(\cN \Psi \big)^{(n)} = n \psi^{(n)}$.
	
	As in \cite{Benedikter2016book}, the creation and annihilation operators acting on $\Psi \in \mathcal{F}_a$ is defined as follows: for any $f \in L^2(\R^3)$
	\begin{align*}
		\big(a^*(f)\Psi \big)^{(n)} (x_1, \dots, x_n) &:= \sum_{j=1}^n \frac{(-1)^j}{\sqrt{n}} f(x_j) \psi^{(n-1)} (x_1,\dots,x_{j-1}, x_{j+1},\dots,x_n ),\\
		\big(a(f)\Psi \big)^{(n)} (x_1, \dots, x_n) &:= \sqrt{n+1} \int \dd{x} \overline{f(x)} \psi^{(n+1)}(x,x_1,\dots,x_n),
	\end{align*}
	where $\psi^{(n)} \in L^2(\R^{3n})$ for any $n\in \bN$. Additionally, for convenient purposes, the creation and annihilation operators will be represented by its operator-value distribution, $a^*_x$ and $a_x$ respectively, so that
	\[
	a^*(f) = \int \dd{x} f(x) a_x, \quad a(f) = \int \dd{x} \overline{f(x)} a_x.
	\]
	Therefore, the canonical anticommutator relation(CAR) is written as
	\[
	\{a^*_x, a_y\} = \delta_{x=y}, \quad \{a^*_x, a^*_y\} = \{a_x, a_y\} = 0,
	\]
	for any $x,y \in \R^3$.
	
	Observe that for given any $\Psi, \Phi \in \mathcal{F}_a$, it holds that
	\[
	\left< \Psi, \cN \Phi \right> = \int \mathrm{d}{x} \left< a_x \Psi, a_x \Phi \right>.
	\]
	Therefore, we write the number of particles operator as $\cN = \int \dd{x} a^*_x a_x$. Similarly, the integral kernel of the $k$-particle reduced density matrix is written as follows:
	\begin{equation}\label{eq:k_reduced}
		\gamma^{(k)} (x_1, \dots, x_k; y_1, \dots, y_k) = \left< \Psi, a^*_{y_1} \dots a^*_{y_k} a_{x_k} \cdots a_{x_1}  \Psi \right>.
	\end{equation}
	Moreover, the Hamiltonian acting on $\Psi \in \mathcal{F}_a$ can be written as
	\begin{equation}\label{eq:fock_Hamil}
		\mathcal{H}_N := \frac{\hbar^2}{2}\int \mathrm{d}{x} \nabla_x a^*_x \nabla_x a_x + \frac{1}{2N} \int \mathrm{d}{x}\mathrm{d}{y}\, V(x-y) a^*_x a^*_y a_y a_x,
	\end{equation}
	where $V$ is the interaction potential. We will denote the operator of the kinetic term as
	\begin{equation}\label{eq:fock_kinetic}
		\mathcal{K} = \hbar^2 \int \dd{x} \nabla_x a^*_x \nabla_x a_x.
	\end{equation}
	
	As presented in \cite{benedikter2014mean, solo_lect}, for any $t \geq 0$, there exists a unitary transformation $\cR_{\mathcal{V}_{N,t}}: \mathcal{F}_a \to \mathcal{F}_a$ such that
	\begin{equation}\label{eq:exciting_anil_creation}
		\begin{split}
			\cR^*_{\mathcal{V}_{N,t}} a_x \cR_{\mathcal{V}_{N,t}} &= a(\rmu_{t,x}) + a^*(\bar{\rmv}_{t,x}),\\
			\cR^*_{\mathcal{V}_{N,t}} a_x^* \cR_{\mathcal{V}_{N,t}} &= a^*(\rmu_{t,x}) + a(\bar{\rmv}_{t,x}),
		\end{split}
	\end{equation}
	where  $\rmv_{t,x} := \sum_{j=1}^N \dyad{\overline{\rme}_{j,t} }{\rme_{j,t}}$ and $\rmu_{t,x} := \mathds{1} - \sum_{j=1}\dyad{{\rme}_{j,t} }{\rme_{j,t}}$, for any orthonormal basis $\{\rme_{j,t}\}_{j=1}^N \subset L^2(\R^3)$.
	
	Then, for $t \geq 0$, the solution of the Schr\"odinger equation is given as
	\begin{equation}\label{eq:psi_bogo}
		\Psi_{N,t} = e^{-\frac{\ii}{\hbar}\mathcal{H}_Nt} \cR_{\mathcal{V}_{N,0}}\Omega = \cR_{\mathcal{V}_{N,t}}\cU_{N}(t;0)  \Omega ,
	\end{equation}
	where $\cR_{\mathcal{V}_{N,t}}$ is a unitary Bogoliubov mapping and $\cU_{N}$ is the quantum fluctuation dynamics defined as follows,
	\begin{equation}\label{eq:flunctuation}
		\cU_{N} (t;s) := R_{\mathcal{V}_{N,t}}^*  e^{-\frac{\ii}{\hbar}\mathcal{H}_N(t-s)} R_{\mathcal{V}_{N,s}}.
	\end{equation}
	
	\section{\textit{A priori} estimates}\label{sec:apriori}
	
	In this appendix, we present in this section a sequence of estimates from \cite{Chen2021JSP} that will prove useful to our calculation. First, we have the following properties of $k$-particle Husimi measures from \cite[Lemma 2.2]{Chen2021JSP}
	
	\begin{Lemma} \label{prop:kHusimi}
		Let $m^{(k)}_{N,t}$ be the $k$-particle Husimi measure as defined in \eqref{eq:husimi_def_1}. Then, the following properties hold:
		\begin{enumerate}
			\item  $m^{(k)}_{N,t}(q,p,\dots,q_k,p_k)$ is symmetric,
			\item  $\frac{1}{(2\pi)^{3k}} \int (\dq\ddp)^{\otimes k} m^{(k)}_{N,t}(q,p,\dots,q_k,p_k) = \frac{N(N-1)\cdots (N-k+1)}{N^k}$,
			\item $\frac{1}{(2\pi \hbar)^{3}} \int \dq_k  \ddp_k\ m^{(k)}_{N,t}(q,p,\dots,q_k,p_k) = (N-k+1) m^{(k-1)}_{N,t}(q,p,\dots,q_{k-1},p_{k-1}) $,
			\item $ 0 \leq  m^{(k)}_{N,t}(q,p,\dots,q_k,p_k) \leq 1$ a.e.,
		\end{enumerate}
		where $1 \leq k \leq N$.
	\end{Lemma}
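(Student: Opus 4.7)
The strategy is to recast $m^{(k)}_{N,t}$ in the form
\[
m^{(k)}_{N,t}(q_1,p_1,\dots,q_k,p_k) = \bigl\| a(f^\hbar_{q_k,p_k})\cdots a(f^\hbar_{q_1,p_1}) \psi_{N,t} \bigr\|^2,
\]
which makes non-negativity immediate and sets the stage for handling the remaining three properties in a unified way via the canonical anticommutation relations (CAR) and the coherent state resolution of identity. All four claims are essentially bookkeeping with creation/annihilation operators once this form is in hand.

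For symmetry (item 1), I would swap the positions of two of the creation operators $a^*(f^\hbar_{q_i,p_i})$ and $a^*(f^\hbar_{q_j,p_j})$ and simultaneously the two annihilation operators $a(f^\hbar_{q_i,p_i})$ and $a(f^\hbar_{q_j,p_j})$. Each swap produces a minus sign by CAR (with the off-diagonal $\{a(f), a^*(g)\} = \langle f,g\rangle$ contributing nothing since creators are not crossed past annihilators), and the two signs cancel. For the upper bound in item 4, I would use the standard Fock-space bound $\|a(f)\|_{\mathrm{op}} \leq \|f\|_2$ (a direct consequence of $a^*(f)a(f) + a(f)a^*(f) = \|f\|_2^2$), combined with $\|f^\hbar_{q,p}\|_2 = \|f\|_2 = 1$ and $\|\psi_{N,t}\|_2 = 1$.

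The integration identities (items 2 and 3) follow from the explicit computation
\[
\int \dq\,\ddp\, f^\hbar_{q,p}(x)\overline{f^\hbar_{q,p}(y)} = (2\pi\hbar)^3\, \delta(x-y),
\]
which I would verify by doing the $p$-integral first (producing $(2\pi\hbar)^3\delta(x-y)$) and then the $q$-integral (where $|f|^2$ integrates to $\hbar^{3/2}$). This immediately yields
\[
\int \dq\,\ddp\, a^*(f^\hbar_{q,p})\, a(f^\hbar_{q,p}) = (2\pi\hbar)^3\, \cN.
\]
Applying this identity to the innermost pair in $m^{(k)}_{N,t}$ gives $\frac{1}{(2\pi\hbar)^3}\int\dq_k\ddp_k\, m^{(k)}_{N,t} = \langle\psi_{N,t}, a^*_1\cdots a^*_{k-1}\cN\, a_{k-1}\cdots a_1\psi_{N,t}\rangle$, and since $\psi_{N,t}$ sits in the $N$-particle sector, $\cN$ acts as $N-k+1$ on $a_{k-1}\cdots a_1\psi_{N,t}$, proving item 3. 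Item 2 then follows by iterating item 3 a total of $k$ times and using $\hbar^{3k} = N^{-k}$ to convert the prefactor from $(2\pi\hbar)^{-3k}$ to $(2\pi)^{-3k}$.

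There is no real obstacle here; the only subtle point is keeping the normalization factor of $\hbar$ straight between items 2 and 3, and making sure that when $\cN$ is commuted outward past $a_{k-1}\cdots a_1$ one accounts correctly for the fact that each annihilation decreases the particle count by one, so the eigenvalue is $N-(k-1) = N-k+1$ rather than $N$.
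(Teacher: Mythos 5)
Your argument is correct and complete. The paper itself does not supply a proof of this lemma; it is quoted verbatim from \cite[Lemma~2.2]{Chen2021JSP} as a preparatory fact in Appendix~\ref{sec:apriori}. Your derivation is the standard one for that cited result: rewriting $m^{(k)}_{N,t}$ as $\bigl\|a(f^\hbar_{q_k,p_k})\cdots a(f^\hbar_{q_1,p_1})\psi_{N,t}\bigr\|^2$ handles non-negativity and, via the CAR and the fermionic bound $\|a(f)\|_{\mathrm{op}}\leq\|f\|_2$, the symmetry and the pointwise bound by $1$; the resolution-of-identity computation $\int\dq\ddp\, a^*(f^\hbar_{q,p})a(f^\hbar_{q,p})=(2\pi\hbar)^3\cN$ together with the fact that $a(f^\hbar_{q_{k-1},p_{k-1}})\cdots a(f^\hbar_{q_1,p_1})\psi_{N,t}$ lies in the $(N-k+1)$-particle sector gives item~(3), and item~(2) follows by iterating and inserting $\hbar^3=N^{-1}$. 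The only cosmetic point worth tightening is the parenthetical "$|f|^2$ integrates to $\hbar^{3/2}$": after the change of variables this Jacobian factor $\hbar^{3/2}$ exactly cancels the $\hbar^{-3/2}$ in the coherent-state normalization, leaving $\|f\|_2^2=1$, which is what actually produces the clean identity $\int\dq\ddp\, f^\hbar_{q,p}(x)\overline{f^\hbar_{q,p}(y)}=(2\pi\hbar)^3\delta(x-y)$.
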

	
	From \cite[Lemma 2.6]{Chen2021JSP} and \cite[Proposition 2.3]{Chen2021JSP}, we {have} the following estimate for the kinetic energy as well as the moment estimate of the $1$-particle Husimi measure respectively:
	
	\begin{Lemma}\label{lem:kinetic_finite}
		Assume $V \in W^{1,\infty} (\R^3)$, then the kinetic energy is bounded as follows:
		\begin{equation}\label{eq:k_kinetic_bounded}
			\left<\Psi_{N,t}, \frac{\mathcal{K}}{N} \Psi_{N,t}\right> \leq \left< \Psi_{N}, \frac{\mathcal{K}}{ N} \Psi_{N} \right> + Ct^2,
		\end{equation}
		where $\mathcal{K}$ is defined in \eqref{eq:fock_kinetic} and the constant $C$ depends on $\norm{\nabla V}_\infty$.
	\end{Lemma}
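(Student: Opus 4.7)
The plan is to differentiate the kinetic energy expectation in time and exploit the Schr\"odinger equation. Since $\mathcal{H}_N=\tfrac12\mathcal{K}+\mathcal{V}$ and $\mathcal{K}$ commutes with itself, the evolution equation~\eqref{eq:psi_bogo} gives
\begin{equation*}
\frac{d}{dt}\langle \Psi_{N,t},\mathcal{K}\,\Psi_{N,t}\rangle
 \;=\; \frac{i}{\hbar}\,\langle \Psi_{N,t},[\mathcal{V},\mathcal{K}]\,\Psi_{N,t}\rangle,
\end{equation*}
so the argument reduces to a controlled estimate of this commutator.

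Next I would compute $[\mathcal{V},\mathcal{K}]$ sector by sector. On the $n$-particle sector, $\mathcal{K}=-\hbar^{2}\sum_{j}\Delta_{x_j}$ and $\mathcal{V}=\tfrac{1}{2N}\sum_{i\neq j}V(x_i-x_j)$, and a direct calculation based on the identity
\begin{equation*}
[V(x_i-x_j),\,\Delta_i+\Delta_j]
 \;=\; -2(\Delta V)(x_i-x_j)-2(\nabla V)(x_i-x_j)\cdot(\nabla_i-\nabla_j)
\end{equation*}
yields
\begin{equation*}
\frac{i}{\hbar}[\mathcal{V},\mathcal{K}]
 \;=\; \frac{i\hbar}{N}\sum_{i\neq j}\bigl[(\Delta V)(x_i-x_j)+(\nabla V)(x_i-x_j)\cdot(\nabla_i-\nabla_j)\bigr].
\end{equation*}
Since the right-hand side is anti-Hermitian, taking the expectation in $\Psi_{N,t}$ makes the Hermitian $\Delta V$ contribution cancel against the Hermitian half of $(\nabla V)\cdot(\nabla_i-\nabla_j)$ (as can be seen by integration by parts), so that only a purely imaginary pairing involving $\nabla V$ survives. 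Concretely, this leaves $\tfrac{d}{dt}\langle\Psi_{N,t},\mathcal{K}\Psi_{N,t}\rangle$ as a constant multiple of $\tfrac{\hbar}{N}\sum_{i\neq j}\mathrm{Im}\,\langle\Psi_{N,t},(\nabla V)(x_i-x_j)\cdot\nabla_i\Psi_{N,t}\rangle$, in which only $\|\nabla V\|_{\infty}$ appears.

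Then I would close the estimate via Cauchy--Schwarz. Since $\sum_i\|\nabla_i\Psi_{N,t}\|^2=\hbar^{-2}\langle\Psi_{N,t},\mathcal{K}\Psi_{N,t}\rangle$, one has $\sum_i\|\nabla_i\Psi_{N,t}\|\leq \hbar^{-1}\sqrt{N\langle\Psi_{N,t},\mathcal{K}\Psi_{N,t}\rangle}$, which, combined with $\|\nabla V(x_i-x_j)\Psi_{N,t}\|\leq\|\nabla V\|_{\infty}$ and the $(N-1)$ summation in $j$, gives
\begin{equation*}
\left|\frac{d}{dt}\langle\Psi_{N,t},\mathcal{K}\Psi_{N,t}\rangle\right|
 \;\leq\; C\,\sqrt{N}\,\|\nabla V\|_{\infty}\,\sqrt{\langle\Psi_{N,t},\mathcal{K}\Psi_{N,t}\rangle}.
\end{equation*}
Dividing by $N$ and using the chain rule on $\sqrt{\langle\Psi_{N,t},\mathcal{K}\Psi_{N,t}\rangle/N}$ produces $\tfrac{d}{dt}\sqrt{\langle\Psi_{N,t},\mathcal{K}\Psi_{N,t}\rangle/N}\leq C$ with $C$ depending only on $\|\nabla V\|_{\infty}$, and integrating in time followed by squaring yields the claimed bound.

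The main obstacle I anticipate is the commutator bookkeeping in the second step: one must correctly track the second-derivative contribution $(\Delta V)(x_i-x_j)$ generated by $[V,\Delta_i]$ and verify that it is absorbed via integration by parts into the Hermitian part of $(\nabla V)\cdot\nabla_i$, so that the final estimate involves only $\|\nabla V\|_{\infty}$ (consistent with the hypothesis $V\in W^{1,\infty}$) and scales as $\sqrt{N}$ rather than $N$---this is precisely the scaling needed for the Gronwall--type argument to produce a bound independent of $N$ on $E_{\mathrm{kin}}(t)/N$.
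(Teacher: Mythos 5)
Your approach---time-differentiate the kinetic energy, reduce to the commutator $[\mathcal{V},\mathcal{K}]$ sector by sector, strip out the Hermitian part, apply Cauchy--Schwarz, and then run the $\frac{d}{dt}\sqrt{E}\leq C$ argument---is the standard way to prove this kind of kinetic-energy propagation estimate, and the calculations check out: the commutator identity is correct, and the observation that the $(\Delta V)$ multiplication term cancels against the Hermitian part of $(\nabla V)\cdot(\nabla_i-\nabla_j)$ (equivalently, that the bracketed sum is anti-Hermitian, so the expectation of the full Hermitian $\tfrac{i}{\hbar}[\mathcal{V},\mathcal{K}]$ retains only the $\Im\langle\Psi,(\nabla V)\cdot\nabla_i\Psi\rangle$ contribution) is exactly what lets you close with $\|\nabla V\|_\infty$ alone. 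The $\sqrt{N}$ scaling from the double sum followed by Cauchy--Schwarz on $\sum_i\|\nabla_i\Psi\|$ is also right, and division by $N$ turns it into an $N$-independent differential inequality. The paper itself delegates the proof to \cite[Lemma 2.6]{Chen2021JSP}, so there is no in-paper proof to compare against, but your route is what that lemma's proof is expected to be.

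Two small remarks. First, the phrasing ``the right-hand side is anti-Hermitian'' is ambiguous: $\tfrac{i}{\hbar}[\mathcal{V},\mathcal{K}]$ is Hermitian (as $i$ times a commutator of two self-adjoint operators), while the bracketed sum $\sum_{i\neq j}\bigl[(\Delta V)+(\nabla V)\cdot(\nabla_i-\nabla_j)\bigr]$ is anti-Hermitian; you are using both facts, and it is worth saying so explicitly. Second, squaring $\sqrt{E(t)}\leq\sqrt{E(0)}+\|\nabla V\|_\infty t$ produces $E(t)\leq E(0)+2\|\nabla V\|_\infty\sqrt{E(0)}\,t+\|\nabla V\|_\infty^2 t^2$, and the cross term is linear in $t$ and carries a factor $\sqrt{E(0)}$, so the lemma's literal form $E(0)+Ct^2$ with $C$ depending \emph{only} on $\|\nabla V\|_\infty$ does not drop out without one further step (e.g.\ Young's inequality, which gives $E(t)\leq 2E(0)+2\|\nabla V\|_\infty^2 t^2$, or absorbing the linear term using the uniform bound on $E(0)$ from Assumption H3). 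This is a cosmetic mismatch, not a gap, and one could alternatively note that since $V\in W^{1,\infty}$ implies $V\in L^\infty$, a one-line energy-conservation argument already yields the time-uniform bound $E(t)\leq E(0)+2\|V\|_\infty$, which is stronger than what the lemma asserts.
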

	
	\begin{Proposition} \label{prop:2nd_moment_finite} For $t \geq 0$, we have the following finite moments:
		\begin{equation}\label{eq:2nd_moment_finite_0}
			\int \dq \ddp\, (|{q}| + |{p}|^2) m_{N,t}(q,p) \leq C(1+t^3),
		\end{equation}
		where $C>0$ is a constant that depends on initial data $\int \dq \ddp\, (|q| + |p|^2) m_{N}(q, p) $.
	\end{Proposition}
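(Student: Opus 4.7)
The plan is to estimate the two moments separately, exploiting the fact that the momentum moment is essentially the many-body kinetic energy (to which Lemma \ref{lem:kinetic_finite} applies), while the position moment can be propagated forward in time by a Gr\"onwall-type argument on the evolution equation \eqref{eq:BBGKY_k1_ben}.

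First I would handle the momentum moment $\int \dq \ddp\, |p|^2 m_{N,t}(q,p)$. Expanding the Husimi measure using \eqref{eq:husimi_def_1} and the explicit coherent state $f^{\hbar}_{q,p}(y)=\hbar^{-3/4} f((y-q)/\sqrt{\hbar}) e^{\ii p\cdot y/\hbar}$, each power of $p$ is rewritten by integrating by parts in $y$ against the oscillatory factor, converting $p \mapsto -\ii\hbar \nabla_y$ acting on either $\psi_{N,t}$ or the envelope $f$. The $p$-integral then collapses to a Dirac delta $\delta(w-u)$, and the remaining integrand is identified, up to error terms involving $\norm{\nabla f}_{2}$, with $\hbar^2\langle\psi_{N,t}, \int \dx\, \nabla_x a_x^* \nabla_x a_x\, \psi_{N,t}\rangle = \langle\psi_{N,t},\mathcal{K}\psi_{N,t}\rangle$. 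Lemma \ref{lem:kinetic_finite} then yields $\int \dq\ddp\,|p|^2 m_{N,t}(q,p) \leq C(1+t^2)$.

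Next I would bound the position moment. Multiplying \eqref{eq:BBGKY_k1_ben} by $|q|$ (or, to avoid the singularity at the origin, by the regularized weight $(|q|^2+\varepsilon^2)^{1/2}$ and then sending $\varepsilon\downarrow 0$) and integrating in $(q,p)$: the $\nabla_p$-divergence contributions from the mean-field term, $\cR_\rms$, and $\cR_\rmm$ all vanish after integration by parts since the weight is $p$-independent; the transport term contributes $-\int\dq\ddp\,(q/|q|)\cdot p\, m_{N,t}$, bounded in absolute value by $\int\dq\ddp\, |p|\, m_{N,t}$; and the $\nabla_q$-divergence of $\tilde{\cR}$ produces $-\int\dq\ddp\,(q/|q|)\cdot\tilde{\cR}$, controllable by the same kinetic-energy manipulation as in Proposition \ref{prop:tildeR} (using only boundedness, not the $\hbar^{1/2}$ smallness). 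By Cauchy--Schwarz and the previous step, $\int|p|\, m_{N,t} \leq C\,(\int |p|^2 m_{N,t})^{1/2} \leq C(1+t)$, so a time integration gives $\int\dq\ddp\,|q|\,m_{N,t} \leq C(1+t^2)$, safely inside the claimed $C(1+t^3)$ envelope.

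The main obstacles I anticipate are two. First, a careful bookkeeping of the lower-order terms in the $|p|^2$-moment identification: differentiating $f((y-q)/\sqrt{\hbar})$ produces factors of $\hbar^{-1/2}$, which must recombine correctly with the normalization $\hbar^{-3/4}$ so that the non-kinetic remainder stays $O(1)$ uniformly in $N$; compact support and the $H^1\cap L^\infty$ regularity of $f$ from Assumption H2 are essential here. Second, the residual contributions in the $d/dt \int |q|\, m_{N,t}$ identity must be shown to be uniformly bounded in $N$; this is less delicate than the rate estimates of Propositions \ref{prop:tildeR}--\ref{prop:est_mean-field_osc} because only an absolute bound (not a rate) is required, so the Cauchy--Schwarz-plus-kinetic-energy argument already displayed in the proof of Proposition \ref{prop:tildeR} suffices after minor modifications.
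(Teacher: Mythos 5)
The paper does not actually prove this proposition here: it is imported verbatim from \cite[Proposition 2.3]{Chen2021JSP}, so there is no in-paper argument to compare against. Your strategy is nonetheless correct and is the natural route to the result (and almost certainly the one the cited reference follows): writing $|p|^2 e^{\ii p\cdot(w-u)/\hbar}=\hbar^2\nabla_w\cdot\nabla_u e^{\ii p\cdot(w-u)/\hbar}$, the $p$-integral collapses to a delta, and after the $q$-integration the main term is exactly $(2\pi)^3\langle\Psi_{N,t},\mathcal{K}/N\,\Psi_{N,t}\rangle$ (using $\hbar^3=N^{-1}$), the $|\nabla f|^2$ remainder is $(2\pi)^3\hbar\|\nabla f\|_2^2=O(\hbar)$, and the $f\nabla f$ cross terms vanish identically since $\int\dq\,\nabla_q\bigl[f^2((w-q)/\sqrt\hbar)\bigr]=0$; Lemma~\ref{lem:kinetic_finite} then gives $\int |p|^2 m_{N,t}\leq C(1+t^2)$. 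For the $|q|$-moment, your Gr\"onwall argument is sound: the $\nabla_p$-terms drop out against the $p$-independent weight, $\bigl|\int (q/|q|)\cdot p\,m_{N,t}\bigr|\leq\bigl(\int|p|^2 m_{N,t}\bigr)^{1/2}\bigl(\int m_{N,t}\bigr)^{1/2}$ with $\int m_{N,t}=(2\pi)^3$ from Lemma~\ref{prop:kHusimi}(2), and $\int\dq\ddp\,|\widetilde{\cR}|\leq C\hbar^{1/2}$ as already shown in the proof of Proposition~\ref{prop:tildeR}. This yields the stronger rate $C(1+t^2)$, well within the stated $C(1+t^3)$.
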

	
	Next, we will present the oscillation estimate from \cite[Lemma 2.5]{Chen2021JSP} which will be used frequently in our proof:
	
	\begin{Lemma}[Bound on localized number operator]\label{lem:localized} Let $\psi_N \in \mathcal{F}^{(N)}_a$ such that $\norm{\psi_N} = 1$, and $R$ be the radius of a ball such that the volume is $1$. Then we have
		\begin{align*}
			&\int \dq\dx \left< \psi_N ,
			\rchi_{|x-q|\leq \sqrt{\hbar}R}a^*_{x}a_{x} \psi_N  \right>
			\leq C(R)\hbar^{-\frac{3}{2}},
		\end{align*}
		where $\rchi$ is a characteristic function.
	\end{Lemma}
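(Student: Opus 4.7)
The plan is a direct computation: apply Tonelli to swap the two integrations, carry out the $q$-integral first to extract the volume of a small ball, and then identify what remains as the expectation of the number operator $\cN$, which is pinned down by the $N$-particle condition on $\psi_N$ together with the semi-classical scaling $N\hbar^3 = 1$.

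Concretely, since both the characteristic function $\rchi_{|x-q|\leq \sqrt{\hbar}R}$ and the diagonal two-point function $\langle \psi_N, a^*_x a_x \psi_N\rangle$ are non-negative, Tonelli's theorem lets me interchange the order of integration and write
\begin{align*}
\int \dq\,\dx\, \rchi_{|x-q|\leq \sqrt{\hbar}R}\, \langle \psi_N, a^*_x a_x \psi_N\rangle
=\Bigl(\int \dq\, \rchi_{|q|\leq \sqrt{\hbar}R}\Bigr)\int \dx\,\langle \psi_N, a^*_x a_x \psi_N\rangle.
\end{align*}
The first factor is translation-invariant in $x$ and equals the volume $|B_{\sqrt{\hbar}R}| = \hbar^{3/2}|B_R|$, which by the normalization hypothesis that $R$ is the radius of a unit-volume ball is simply $\hbar^{3/2}$. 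The second factor is $\langle \psi_N, \cN \psi_N\rangle$, using the Fock-space identity $\cN = \int \dx\, a^*_x a_x$ recalled in Appendix~\ref{sec:FockSpace}; since $\psi_N \in \mathcal{F}_a^{(N)}$ with $\|\psi_N\|=1$, this expectation is exactly $N$. Combining the two pieces yields $\hbar^{3/2}\cdot N$, and invoking the scaling $\hbar = N^{-1/3}$ (equivalently $N\hbar^3 = 1$) from the introduction turns this into $\hbar^{-3/2}$, as claimed, with $C(R) = 1$ in the unit-volume normalization (or $C(R)=|B_R|$ in general).

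I do not anticipate any genuine obstacle: the lemma is essentially a book-keeping statement linking the ball-volume factor $\hbar^{3/2}$ produced by integrating out the phase-space $q$-variable to the total particle number $N$ via the semi-classical scaling. The only point that requires a bit of care is ensuring the application of Tonelli is legitimate, which is immediate from non-negativity, and keeping straight the $R$-normalization so that the constant $C(R)$ comes out clean.
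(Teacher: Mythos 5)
Your proof is correct, and the argument is the standard one: Tonelli to decouple the $q$- and $x$-integrations, the $q$-integral yielding the ball volume $|B_{\sqrt{\hbar}R}|=\hbar^{3/2}|B_R|=\hbar^{3/2}$ under the unit-volume normalization, the $x$-integral yielding $\langle\psi_N,\cN\psi_N\rangle=N$ because $\psi_N$ lies in the $N$-particle sector with unit norm, and then $N=\hbar^{-3}$ from the semi-classical scaling. Note that this paper does not itself supply a proof of Lemma~\ref{lem:localized} (it is imported from \cite{Chen2021JSP}), so there is no in-text argument to compare against; your computation is the expected one and is complete and correct.
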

	
	\begin{Lemma}[Estimate of oscillation]\label{lem:estimate_oscillation} For $\varphi \in C^\infty_0 (\R^3)$ and
		\begin{equation}\label{eq:estimate_oscillation_omega}
			\Omega_\hbar^\alpha := \{x \in \R^3;\ \max_{1\leq j \leq 3} |x_j|\leq \hbar^\alpha \},
		\end{equation}
		it holds for every $\alpha \in (0,1)$, $s \in \N$, and $x \in \R^3\backslash \Omega_\hbar^\alpha$,
		\begin{equation}\label{eq:estimate_oscillation_0}
			\left|\int_{\R^3} \ddp\,  e^{\frac{\rm i}{\hbar}p\cdot x} \varphi(p)\right| \leq C \hbar^{(1-\alpha)s},
		\end{equation}
		where $C$ depends on the compact support and the $C^s$-norm of $\varphi$.
	\end{Lemma}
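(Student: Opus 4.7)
\textbf{Proof plan for Lemma \ref{lem:estimate_oscillation}.} The natural strategy is a non-stationary phase argument via iterated integration by parts in a single well-chosen coordinate direction. Since $x\notin \Omega_\hbar^\alpha$, by definition of $\Omega_\hbar^\alpha$ there exists an index $j^\star\in\{1,2,3\}$ with $|x_{j^\star}| = \max_{1\leq j\leq 3}|x_j| > \hbar^\alpha$. In this direction the exponential $e^{\frac{\ii}{\hbar} p\cdot x}$ oscillates on the scale $\hbar / |x_{j^\star}| < \hbar^{1-\alpha}$, so trading one $p_{j^\star}$-derivative from the phase onto $\varphi$ costs exactly this small factor, and iterating $s$ times will produce the advertised bound $\hbar^{(1-\alpha)s}$.

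Concretely, the plan is to use the elementary identity
\[
e^{\frac{\ii}{\hbar} p\cdot x} = \frac{\hbar}{\ii\, x_{j^\star}}\, \partial_{p_{j^\star}} e^{\frac{\ii}{\hbar} p\cdot x},
\]
insert it into the integral, and apply integration by parts in $p_{j^\star}$ exactly $s$ times. Because $\varphi \in C_0^\infty(\R^3)$ has compact support, all boundary terms vanish, leaving
\[
\int_{\R^3} e^{\frac{\ii}{\hbar} p\cdot x}\, \varphi(p)\, \ddp = \left(-\frac{\hbar}{\ii\, x_{j^\star}}\right)^{\!s} \int_{\R^3} e^{\frac{\ii}{\hbar} p\cdot x}\, \partial_{p_{j^\star}}^{\,s}\varphi(p)\, \ddp.
\]
Taking absolute values and using $|x_{j^\star}| > \hbar^\alpha$ together with the compact support of $\varphi$, one obtains
\[
\left|\int_{\R^3} e^{\frac{\ii}{\hbar} p\cdot x}\, \varphi(p)\, \ddp \right| \leq \frac{\hbar^s}{|x_{j^\star}|^s}\, \bigl\|\partial_{p_{j^\star}}^{\,s}\varphi\bigr\|_{L^1(\R^3)} \leq \hbar^{(1-\alpha)s}\, |\supp\varphi|\, \|\varphi\|_{C^s},
\]
which is the desired estimate with $C$ depending only on $\supp \varphi$ and $\|\varphi\|_{C^s}$.

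There is no serious obstacle; the argument is essentially a textbook non-stationary phase bound. The only point requiring attention is the \emph{choice} of the integration-by-parts direction: all $s$ differentiations must be performed in the \emph{same} coordinate $j^\star$ where $|x_{j^\star}|$ attains its maximum, rather than being distributed among coordinates. This is precisely why the excluded set is taken to be the $\ell^\infty$-cube $\Omega_\hbar^\alpha$ rather than a Euclidean ball: the cube description guarantees existence of a single component of $x$ bounded below by $\hbar^\alpha$, which is exactly the structural input that the iterated integration by parts in that direction exploits.
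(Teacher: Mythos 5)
Your proof is correct and is the standard non-stationary phase argument for this type of estimate; the key observation — that the $\ell^\infty$-cube structure of $\Omega_\hbar^\alpha$ guarantees a single coordinate $j^\star$ with $|x_{j^\star}|>\hbar^\alpha$, so all $s$ integrations by parts can be performed in that one direction — is exactly the structural point this lemma relies on. The paper itself cites this result from [Chen2021JSP, Lemma 2.5] without reproducing the proof, and your reconstruction matches the expected argument there.
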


	\clearpage
	\section{Rest of Proof of Proposition \ref{prop:mixed_norm}}\label{sec:mixednorm-rest}
	
	In order to cater to readers who require more detailed information, we will present the remaining estimations for each term in \eqref{many_terms} in this appendix:
	\begin{align*}
		&|A_2|\\
		& = \bigg| \int \dx_1 \dx_2 \mathrm{d}z_1 \mathrm{d}z_2\ O_1(x_1;z_1) O_2(x_2;z_2)  \big<  \xi_N, \cU^*_N(t;0)  a(\bar{\rmv}_{t,x_1}) a(\rmu_{t,z_1}) a^*(\rmu_{t,x_2}) a(\rmu_{t,z_2})   \cU_N(t;0)  \xi_N \big> \bigg|\\
%	   &
%		= \bigg| \int \dx_1 \dx_2  \mathrm{d}z_1 \mathrm{d}z_2\ \big<\xi_N, U^*_N(t;0) O_1(x_1;z_1) O_2(x_2; z_2)  \int  \deta_1 \deta'_1\ a_{\eta_1} a_{\eta'_1} \rmv_t(\eta_1;x_1) \overline{\rmu_t(\eta'_1; z_1)}\\
%		&
%		\qquad  \int \deta_2 \deta'_2\ a^*_{\eta_2} a_{\eta'_2} \rmu_t(\eta_2;x_2) \overline{\rmu_t(\eta'_2;z_2)} \cU_N(t;0) \xi_N \big> \bigg|\\
		&
		= \bigg| \int \dx_1 \dx_2  \mathrm{d}z_1 \mathrm{d}z_2\ \big<\xi_N, U^*_N(t;0)   \int  \deta_1 \deta'_1\ a_{\eta_1} a_{\eta'_1} \rmv_t(\eta_1;x_1) O_1(x_1;z_1) {\rmu_t(z_1; \eta'_1)}\\
		&
		\qquad  \int \deta_2 \deta'_2\ a^*_{\eta_2} a_{\eta'_2} \rmu_t(\eta_2;x_2)  O_2(x_2; z_2) {\rmu_t(z_2;\eta'_2)} \cU_N(t;0) \xi_N \big> \bigg|\\
		&
		= \bigg| \big<\xi_N, U^*_N(t;0)   \int  \deta_1 \deta'_1\ a_{\eta_1} a_{\eta'_1} \big(\rmv_t O_1 \rmu_t\big)(\eta_1; \eta'_1) \int \deta_2 \deta'_2\ a^*_{\eta_2} a_{\eta'_2}  \big(\rmu_t O_2 \rmu_t\big)(\eta_2; \eta'_2) \cU_N(t;0) \xi_N \big> \bigg|\\
		&
		= \bigg|\int \deta_1 \big< a^*\big(\overline{\rmv_t O_1 \rmu_t}(\eta_1; \eta'_1)\big) a^*_{\eta_1} \cU_N(t;0)  \xi_N,    \dd{\Gamma}(\rmu_t O_2 \rmu_t) \cU_N(t;0) \xi_N \big> \bigg|\\
		&
		\leq \int \deta_1 \norm{\rmv_t O_1 \rmu_t}_2 \norm{a^*_{\eta_1} \cU_N(t;0)  \xi_N}  \norm{ \dd{\Gamma}(\rmu_t O_2 \rmu_t) \cU_N(t;0)  \xi_N }\\
		&
		\leq  \hsnorm{\rmv_t O_1 \rmu_t} \left(\int \dd{\eta_1}  \left<\xi_N, \cU_N^*(t;0)  a_{\eta_1} a^*_{\eta_1} \cU_N(t;0)  \xi_N \right> \right)^\frac{1}{2} \opnorm{\rmu_t O_2 \rmu_t} \norm{\cN \cU_N(t;0)  \xi_N}\\
		&
		\leq \opnorm{\rmv_t} \hsnorm{O_1} \opnorm{\rmu_t} \opnorm{\rmu_t} \opnorm{O_2} \opnorm{\rmu_t} \norm{\cN^\frac{1}{2} \cU_N(t;0)  \xi_N}  \norm{\cN \cU_N(t;0) \xi_N}\\
		&
		\leq  \hsnorm{O_1} \opnorm{O_2}  \norm{\cN \cU_N(t;0) \xi_N}\\
%		&
%		\leq \sqrt{N} \hsnorm{O_1} \opnorm{O_2}  \norm{\cN \cU_N(t;0) \xi_N}.
	\end{align*}
	\begin{align*}
		&|A_3|\\
		& = \bigg| \int \dx_1 \dx_2 \mathrm{d}z_1 \mathrm{d}z_2\ O_1(x_1;z_1) O_2(x_2;z_2)  \big<  \xi_N, \cU^*_N(t;0)  a(\bar{\rmv}_{t,x_1}) a(\rmu_{t,z_1}) a^*(\rmu_{t,x_2})a^*(\bar{\rmv}_{t,z_2})   \cU_N(t;0)  \xi_N \big> \bigg|\\
		&
		\leq \hsnorm{\rmv_t O_1 \rmu_t} \hsnorm{\rmu_t O_2 \bar{\rmv}_t}  \left<\xi_N, \cU_N^*(t;0) \left(\cN + 1 \right)  \cU_N(t;0) \xi_N\right>\\
		&
		\leq N \hsnorm{O_1} \opnorm{O_2}    \norm{\left(\cN + 1 \right) \cU_N(t;0) \xi_N}.
	\end{align*}
	\begin{align*}
		&|A_4|\\
		& = \bigg| \int \dx_1 \dx_2 \mathrm{d}z_1 \mathrm{d}z_2\ O_1(x_1;z_1) O_2(x_2;z_2)  \big<  \xi_N, \cU^*_N(t;0) a^*(\bar{\rmv}_{t,z_1}) a(\bar{\rmv}_{t,x_1})a^*(\rmu_{t,x_2})a(\rmu_{t,z_2})   \cU_N(t;0)  \xi_N \big> \bigg|\\
		&
		\leq \opnorm{\bar{\rmv}_t \overline{O^*_1} \rmv_t} \opnorm{\rmu_t O_2 \rmu_t}  \left<\xi_N, \cU_N^*(t;0) \cN^2   \cU_N(t;0) \xi_N\right>\\
		&
		\leq \hsnorm{O_1} \opnorm{O_2}  \left<\xi_N, \cU_N^*(t;0) \cN^2  \cU_N(t;0) \xi_N\right>\\
		&
		\leq N \hsnorm{O_1} \opnorm{O_2}  \norm{\left(\cN + 1 \right) \cU_N(t;0) \xi_N}.
	\end{align*}
	\begin{align*}
		&|A_5|\\
		& = \bigg| \int \dx_1 \dx_2 \mathrm{d}z_1 \mathrm{d}z_2\ O_1(x_1;z_1) O_2(x_2;z_2)  \big<  \xi_N, \cU^*_N(t;0) a^*(\bar{\rmv}_{t,z_1}) a(\bar{\rmv}_{t,x_1}) a^*(\bar{\rmv}_{t,z_2}) a^*(\rmu_{t,x_2})   \cU_N(t;0)  \xi_N \big> \bigg|\\
		&
		\leq \hsnorm{\bar{\rmv}_t \overline{O^*_1} {\rmv}_t} \hsnorm{\bar{\rmv}_t \overline{O^*_2} \bar{\rmu}_t}  \left<\xi_N, \cU_N^*(t;0) \left(\cN + 1 \right) \cU_N(t;0) \xi_N\right>\\
		&
		\leq N \hsnorm{O_1} \opnorm{O_2}\left<\xi_N, \cU_N^*(t;0) \left(\cN + 1 \right)   \cU_N(t;0) \xi_N\right>\\
		&
		\leq N \hsnorm{O_1} \opnorm{O_2}  \norm{\left(\cN + 1 \right) \cU_N(t;0) \xi_N}.
	\end{align*}
	\begin{align*}
		&|A_6|\\
		& = \bigg| \int \dx_1 \dx_2 \mathrm{d}z_1 \mathrm{d}z_2\ O_1(x_1;z_1) O_2(x_2;z_2)  \big<  \xi_N, \cU^*_N(t;0) a^*(\rmu_{t,x_1})a(\rmu_{t,z_1})a(\bar{\rmv}_{t,x_2})a(\rmu_{t,z_2})    \cU_N(t;0)  \xi_N \big> \bigg|\\
		&
		\leq \opnorm{\rmu_t O_1 \rmu_t} \hsnorm{{\rmv}_t O_2 {\rmu}_t}  \left<\xi_N, \cU_N^*(t;0) \cN^2 \cU_N(t;0) \xi_N\right>^\frac{1}{2} \left<\xi_N, \cU_N^*(t;0) \cN \cU_N(t;0) \xi_N\right>^\frac{1}{2}\\
		&
		\leq  N \hsnorm{O_1} \opnorm{O_2}  \norm{\cN \cU_N(t;0) \xi_N}.
	\end{align*}
	\begin{align*}
		&|A_7|\\
		& = \bigg| \int \dx_1 \dx_2 \mathrm{d}z_1 \mathrm{d}z_2\ O_1(x_1;z_1) O_2(x_2;z_2)  \big<  \xi_N, \cU^*_N(t;0) a^*(\rmu_{t,x_1}) a(\rmu_{t,z_1})  a^*(\bar{\rmv}_{t,z_2}) a(\bar{\rmv}_{t,x_2}) \cU_N(t;0)  \xi_N \big> \bigg|\\
		&
		\leq \opnorm{\rmu_t O_1 \rmu_t} \opnorm{{\rmv}_t \overline{O^*_2} {\rmv}_t}  \left<\xi_N, \cU_N^*(t;0) \cN^2 \cU_N(t;0) \xi_N\right>\\
		&
		\leq \hsnorm{O_1} \opnorm{O_2} \left<\xi_N, \cU_N^*(t;0) \cN^2   \cU_N(t;0) \xi_N\right>\\
		&
		\leq N \hsnorm{O_1} \opnorm{O_2}  \norm{\cN \cU_N(t;0) \xi_N}.
	\end{align*}
	\begin{align*}
		&|A_8|\\
		& = \bigg| \int \dx_1 \dx_2 \mathrm{d}z_1 \mathrm{d}z_2\ O_1(x_1;z_1) O_2(x_2;z_2)  \big<  \xi_N, \cU^*_N(t;0) a^*(\rmu_{t,x_1})a^*(\bar{\rmv}_{t,z_1})a(\bar{\rmv}_{t,x_2})a(\rmu_{t,z_2}) \cU_N(t;0)  \xi_N \big> \bigg|\\
		&
		\leq \hsnorm{\rmu_t O_1 \bar{\rmv}_t} \hsnorm{{\rmv}_t O_2 {\rmu}_t}  \left<\xi_N, \cU_N^*(t;0) \cN \cU_N(t;0) \xi_N\right>\\
		&
		\leq \sqrt{N} \hsnorm{O_1} \opnorm{O_2} \left<\xi_N, \cU_N^*(t;0) \cN   \cU_N(t;0) \xi_N\right>\\
		&
		\leq N \hsnorm{O_1} \opnorm{O_2}  \norm{\cN^\frac{1}{2}\cU_N(t;0) \xi_N}.
	\end{align*}
	\begin{align*}
		&|A_9|\\
		& = \bigg| \int \dx_1 \dx_2 \mathrm{d}z_1 \mathrm{d}z_2\ O_1(x_1;z_1) O_2(x_2;z_2)  \big<  \xi_N, \cU^*_N(t;0) a^*(\bar{\rmv}_{t,z_1}) a^*(\rmu_{t,x_1}) a^*(\bar{\rmv}_{t,z_2})a(\bar{\rmv}_{t,x_2}) \cU_N(t;0)  \xi_N \big> \bigg|\\
		&
		\leq \hsnorm{\bar{\rmv}_t \overline{O^*_1} \bar{\rmu}_t} \opnorm{\bar{\rmv}_t \overline{O^*_2} {\rmv}_t}  \left<\xi_N, \cU_N^*(t;0) \cN \cU_N(t;0) \xi_N\right>^\frac{1}{2} \left<\xi_N, \cU_N^*(t;0) \cN^2 \cU_N(t;0) \xi_N\right>^\frac{1}{2}\\
		&
		\leq \hsnorm{O_1} \opnorm{O_2} \left<\xi_N, \cU_N^*(t;0) \cN^2   \cU_N(t;0) \xi_N\right>\\
		&
		\leq N \hsnorm{O_1} \opnorm{O_2}  \norm{\cN \cU_N(t;0) \xi_N}.
	\end{align*}
	\begin{align*}
		&|A_{10}|\\
		& = \bigg| \int \dx_1 \dx_2 \mathrm{d}z_1 \mathrm{d}z_2\ O_1(x_1;z_1) O_2(x_2;z_2)  \big<  \xi_N, \cU^*_N(t;0) a^*(\rmu_{t,x_1})  a(\rmu_{t,z_1}) a^*(\rmu_{t,x_2}) a(\rmu_{t,z_2}) \cU_N(t;0)  \xi_N \big> \bigg|\\
		&
		\leq \opnorm{\rmu_t O_1 {\rmu}_t} \opnorm{\rmu_t O_2 \rmu_t}  \left<\xi_N, \cU_N^*(t;0) \cN^2 \cU_N(t;0) \xi_N\right>\\
		&
		\leq \hsnorm{O_1} \opnorm{O_2} \left<\xi_N, \cU_N^*(t;0) \cN^2   \cU_N(t;0) \xi_N\right>\\
		&
		\leq N \hsnorm{O_1} \opnorm{O_2}  \norm{\cN \cU_N(t;0) \xi_N}.
	\end{align*}
	\begin{align*}
		&|A_{11}|\\
		& = \bigg| \int \dx_1 \dx_2 \mathrm{d}z_1 \mathrm{d}z_2\ O_1(x_1;z_1) O_2(x_2;z_2)  \big<  \xi_N, \cU^*_N(t;0) a^*(\rmu_{t,x_1}) a(\bar{\rmv}_{t,z_1})a^*(\rmu_{t,x_2})a^*(\bar{\rmv}_{t,z_2})  \cU_N(t;0)  \xi_N \big> \bigg|\\
		&
		\leq \hsnorm{\rmu_t O_1 \rmu_t} \hsnorm{\rmu_t O_2 \bar{\rmv}_t}  \left<\xi_N, \cU_N^*(t;0) \cN \cU_N(t;0) \xi_N\right>^\frac{1}{2} \left<\xi_N, \cU_N^*(t;0) (\cN + 1 ) \cU_N(t;0) \xi_N\right>^\frac{1}{2}\\
		&
		\leq \sqrt{N} \hsnorm{O_1} \opnorm{O_2} \left<\xi_N, \cU_N^*(t;0)  \left(\cN + 1 \right)  \cU_N(t;0) \xi_N\right>\\
		&
		\leq N \hsnorm{O_1} \opnorm{O_2}  \norm{(\cN + 1 )^\frac{1}{2}\cU_N(t;0) \xi_N}.
	\end{align*}
	\begin{align*}
		&|A_{12}|\\
		& = \bigg| \int \dx_1 \dx_2 \mathrm{d}z_1 \mathrm{d}z_2\ O_1(x_1;z_1) O_2(x_2;z_2)  \big<  \xi_N, \cU^*_N(t;0) a^*(\rmu_{t,x_1})a^*(\bar{\rmv}_{t,z_1})a^*(\rmu_{t,x_2})a^*(\bar{\rmv}_{t,z_2}) \cU_N(t;0)  \xi_N \big> \bigg|\\
		&
		\leq \hsnorm{\rmu_t O_1 {\rmv}_t} \hsnorm{\rmu_t O_2 \bar{\rmv}_t}  \left<\xi_N, \cU_N^*(t;0)  \left(\cN + 1 \right) \cU_N(t;0) \xi_N\right>\\
		&
		\leq \sqrt{N} \hsnorm{O_1} \opnorm{O_2} \left<\xi_N, \cU_N^*(t;0)  \left(\cN + 1 \right)   \cU_N(t;0) \xi_N\right>\\
		&
		\leq N \hsnorm{O_1} \opnorm{O_2}  \norm{(\cN + 1 )^\frac{1}{2}\cU_N(t;0) \xi_N}.
	\end{align*}
	\begin{align*}
		&|A_{13}|\\
		& = \bigg| \int \dx_1 \dx_2 \mathrm{d}z_1 \mathrm{d}z_2\ O_1(x_1;z_1) O_2(x_2;z_2)  \big<  \xi_N, \cU^*_N(t;0) a^*(\bar{\rmv}_{t,z_1}) a(\bar{\rmv}_{t,x_1})a^*(\bar{\rmv}_{t,z_2})a(\bar{\rmv}_{t,x_2}) \cU_N(t;0)  \xi_N \big> \bigg|\\
		&
		\leq \opnorm{\bar{\rmv}_t \overline{O^*_1} {\rmv}_t} \opnorm{\bar{\rmv}_t \overline{O^*_2} {\rmv}_t}  \left<\xi_N, \cU_N^*(t;0)  \cN^2 \cU_N(t;0) \xi_N\right>\\
		&
		\leq \hsnorm{O_1} \opnorm{O_2} \left<\xi_N, \cU_N^*(t;0) \cN^2   \cU_N(t;0) \xi_N\right>\\
		&
		\leq N \hsnorm{O_1} \opnorm{O_2}  \norm{\cN \cU_N(t;0) \xi_N}.
	\end{align*}
	\begin{align*}
		&|A_{14}|\\
		& = \bigg| \int \dx_1 \dx_2 \mathrm{d}z_1 \mathrm{d}z_2\ O_1(x_1;z_1) O_2(x_2;z_2)  \big<  \xi_N, \cU^*_N(t;0) a^*(\bar{\rmv}_{t,z_1}) a(\bar{\rmv}_{t,x_1}) a(\rmu_{t,z_2}) a(\bar{\rmv}_{t,x_2}) \cU_N(t;0)  \xi_N \big> \bigg|\\
		&
		\leq \hsnorm{\bar{\rmv}_t \overline{O^*_1} {\rmv}_t} \hsnorm{\bar{\rmu}_t \overline{O^*_2} {\rmv}_t}  \left<\xi_N, \cU_N^*(t;0) \cN \cU_N(t;0) \xi_N\right>^\frac{1}{2} \left<\xi_N, \cU_N^*(t;0) \cN \cU_N(t;0) \xi_N\right>^\frac{1}{2}\\
		&
		\leq \sqrt{N} \hsnorm{O_1} \opnorm{O_2} \left<\xi_N, \cU_N^*(t;0) \cN   \cU_N(t;0) \xi_N\right>\\
		&
		\leq N \hsnorm{O_1} \opnorm{O_2}  \norm{\cN^\frac{1}{2} \cU_N(t;0) \xi_N}.
	\end{align*}
	\begin{align*}
		&|A_{15}|\\
		& = \bigg| \int \dx_1 \dx_2 \mathrm{d}z_1 \mathrm{d}z_2\ O_1(x_1;z_1) O_2(x_2;z_2)  \big<  \xi_N, \cU^*_N(t;0) a^*(\rmu_{t,x_1})a^*(\bar{\rmv}_{t,z_1}) a^*(\rmu_{t,x_2}) a(\rmu_{t,z_2}) \cU_N(t;0)  \xi_N \big> \bigg|\\
		&
		\leq \hsnorm{\rmu_t O_1 \bar{\rmv}_t} \opnorm{{\rmu}_t O_2 {\rmu}_t} \left<\xi_N, \cU_N^*(t;0) \cN^2 \cU_N(t;0) \xi_N\right>^\frac{1}{2} \left<\xi_N, \cU_N^*(t;0) \cN \cU_N(t;0) \xi_N\right>^\frac{1}{2}\\
		&
		\leq \hsnorm{O_1} \opnorm{O_2} \left<\xi_N, \cU_N^*(t;0) \cN^2   \cU_N(t;0) \xi_N\right>\\
		&
		\leq N \hsnorm{O_1} \opnorm{O_2}  \norm{\cN \cU_N(t;0) \xi_N}.
	\end{align*}
	\begin{align*}
		&|A_{16}|\\
		& = \bigg| \int \dx_1 \dx_2 \mathrm{d}z_1 \mathrm{d}z_2\ O_1(x_1;z_1) O_2(x_2;z_2)  \big<  \xi_N, \cU^*_N(t;0)  a(\rmu_{t,z_1}) a(\bar{\rmv}_{t,x_1}) a^*(\bar{\rmv}_{t,z_2}) a(\bar{\rmv}_{t,x_2}) \cU_N(t;0)  \xi_N \big> \bigg|\\
		&
		\leq \hsnorm{\bar{\rmu}_t  \overline{O^*_1} \bar{\rmv}_t} \opnorm{\bar{\rmv}_t \overline{O^*_1} \rmv_t}  \left<\xi_N, \cU_N^*(t;0) \cN^2 \cU_N(t;0) \xi_N\right>^\frac{1}{2} \left<\xi_N, \cU_N^*(t;0) (\cN +1)\cU_N(t;0) \xi_N\right>^\frac{1}{2}\\
		&
		\leq \hsnorm{O_1} \opnorm{O_2} \left<\xi_N, \cU_N^*(t;0)\left(\cN + 1 \right)^2   \cU_N(t;0) \xi_N\right>\\
		&
		\leq N \hsnorm{O_1} \opnorm{O_2}  \norm{\cN \cU_N(t;0) \xi_N}.
	\end{align*}
	Additionally, we have
	\begin{align*}
		&|B_2|\\
		&= \bigg| \int \dx_1 \dx_2 \mathrm{d}z_1 \mathrm{d}z_2\ O_1(x_1;z_1) O_2(x_2;z_2)  \big<  \xi_N, \cU^*_N(t;0) , \left<\bar{\rmv}_{t,x_1}, \bar{\rmv}_{t,z_1} \right> a(\bar{\rmv}_{t,x_2}) a(\rmu_{t,z_2})   \cU_N(t;0)  \xi_N \big> \bigg|\\
		&
		= \bigg| \int \dx_1 \dx_2 \mathrm{d}z_1 \mathrm{d}z_2\ O_1(x_1;z_1) O_2(x_2;z_2)  \big<  \xi_N, \cU^*_N(t;0) , \overline{\omega_{N,t}(x_1;z_1) } \\
		&
		\qquad \int  \deta \deta'\ a_{\eta} a_{\eta'} \rmv_t(\eta;x_2) \overline{\rmu_t(\eta'; z_2)}   \cU_N(t;0)  \xi_N \big> \bigg| \\
		&
		= \bigg| \int \dx_1 \dx_2 \mathrm{d}z_1 \mathrm{d}z_2\    \big<  \xi_N, \cU^*_N(t;0) , O_1(x_1;z_1) \omega_{N,t}(z_1;x_1)  \\
		&
		\qquad \int  \deta \deta'\ a_{\eta} a_{\eta'} \rmv_t(\eta;x_2) O_2(x_2;z_2) \rmu_t(z_2;\eta')  \cU_N(t;0)  \xi_N \big> \bigg| \\
		&
		= \bigg| \int  \deta \deta'  \big<  \xi_N, \cU^*_N(t;0) , \left(\int \dx_1\  \big(O_1\omega_{N,t}\big)(x_1;x_1) \right) a_{\eta} a_{\eta'} \big(\rmv_t O_2 \rmu_t\big)(\eta;\eta')  \cU_N(t;0)  \xi_N \big> \bigg| \\
		&
		\leq \trnorm{O_1 \omega_{N,t}} \hsnorm{\rmv_t O_2 \rmu_t}  \norm{\cN^{1/2} \cU_N(t;0)  \xi_N}\\
		&
		\leq \hsnorm{O_1} \hsnorm{\omega_{N,t}} \hsnorm{\rmv_t} \opnorm{O_2}  \opnorm{\rmu_t}  \norm{\cN^{1/2} \cU_N(t;0)  \xi_N}\\
		&
		\leq N  \hsnorm{O_1}  \opnorm{O_2}  \norm{\cN^{1/2} \cU_N(t;0)  \xi_N},
	\end{align*}
	where we use the fact that
	\[
	\left| \Tr O_1 \omega_{N,t} \right|\leq \trnorm{O_1 \omega_{N,t}} \leq \hsnorm{O_1} \hsnorm{\omega_{N,t}} \leq \sqrt{N} \hsnorm{O_1}.
	\]
	\begin{align*}
		&|B_3|\\
		&
		= \bigg| \int \dx_1 \dx_2 \mathrm{d}z_1 \mathrm{d}z_2\ O_1(x_1;z_1) O_2(x_2;z_2)  \big<  \xi_N, \cU^*_N(t;0) , \left<\rmu_{t,z_1}, \rmu_{t,x_2} \right> a(\bar{\rmv}_{t,x_1}) a(\rmu_{t,z_2})   \cU_N(t;0)  \xi_N \big> \bigg|\\
		&
		= \bigg| \int \dx_1 \dx_2 \mathrm{d}z_1 \mathrm{d}z_2\ O_1(x_1;z_1) O_2(x_2;z_2)  \big<  \xi_N, \cU^*_N(t;0) , \rmu_t(z_1;x_2)\\
		&
		\qquad  \int  \deta \deta'\ a_{\eta} a_{\eta'} \rmv_t(\eta;x_1) \overline{\rmu_t(\eta'; z_2)}  \cU_N(t;0)  \xi_N \big> \bigg|\\
		&
		= \bigg| \int \dx_1 \dx_2 \mathrm{d}z_1 \mathrm{d}z_2\int  \deta \deta'\big<  \xi_N, \cU^*_N(t;0) , a_{\eta} a_{\eta'} \\
		&
		\qquad  \rmv_t(\eta;x_1) O_1(x_1;z_1 )\rmu_t(z_1;x_2) O_2(x_2;z_2) {\rmu_t(z_2;\eta')}  \cU_N(t;0)  \xi_N \big> \bigg|\\
		&
		= \bigg| \int  \deta \deta'\big<  \xi_N, \cU^*_N(t;0) , a_{\eta} a_{\eta'}  \big(\rmv_t O_1 \rmu_t O_2 \rmu_t\big)(\eta;\eta')  \cU_N(t;0)  \xi_N \big> \bigg|\\
		&
		\leq \hsnorm{\rmv_t O_1 \rmu_t O_2 \rmu_t}  \norm{\cN^{1/2} \cU_N(t;0)  \xi_N} \\
		&
		\leq \hsnorm{O_1} \opnorm{O_2}  \norm{\cN^{1/2} \cU_N(t;0)  \xi_N}.
	\end{align*}
	\begin{align*}
		&|B_4|\\
		&
		= \bigg| \int \dx_1 \dx_2 \mathrm{d}z_1 \mathrm{d}z_2\ O_1(x_1;z_1) O_2(x_2;z_2)  \big<  \xi_N, \cU^*_N(t;0) , \left<\rmu_{t,z_1}, \rmu_{t,x_2} \right> a^*(\bar{\rmv}_{t,z_2})  a(\bar{\rmv}_{t,x_1})  \cU_N(t;0)  \xi_N \big> \bigg|\\
		&
		\leq \hsnorm{\rmv_t O_1 \rmu_t O_2 \bar{\rmv}_t} \norm{\cN^{1/2} \cU_N(t;0)  \xi_N} \\
		&
		\leq \hsnorm{O_1} \opnorm{O_2}  \norm{\cN^{1/2} \cU_N(t;0)  \xi_N}.
	\end{align*}
	\begin{align*}
		&|B_5|\\
		&
		= \bigg| \int \dx_1 \dx_2 \mathrm{d}z_1 \mathrm{d}z_2\ O_1(x_1;z_1) O_2(x_2;z_2)  \big<  \xi_N, \cU^*_N(t;0) , \left<\rmu_{t,z_1}, \rmu_{t,x_2} \right> \left<\bar{\rmv}_{t,x_1}, \bar{\rmv}_{t,z_2} \right> \cU_N(t;0)  \xi_N \big> \bigg|\\
		&
		= \bigg| \int \dx_1 \dx_2 \mathrm{d}z_1 \mathrm{d}z_2\ O_1(x_1;z_1) O_2(x_2;z_2)  \big<  \xi_N, \cU^*_N(t;0) , \rmu_t(z_1;x_2) \omega_{N,t}(z_2;x_1) \cU_N(t;0)  \xi_N \big> \bigg|\\
		&
		= \bigg|  \big<  \xi_N, \cU^*_N(t;0) , \int \dx_1 \dx_2   \big(O_1 \rmu_t\big)(x_1;x_2)\big( O_2 \omega_{N,t}\big)(x_2;x_1) \cU_N(t;0)  \xi_N \big> \bigg|\\
		&
		= \bigg|  \big<  \xi_N, \cU^*_N(t;0) , \int \dx_1   \big(O_1 \rmu_t O_2 \omega_{N,t}\big)(x_1;x_1) \cU_N(t;0)  \xi_N \big> \bigg|\\
		&
		\leq \hsnorm{ O_1 \rmu_t O_2} \hsnorm{\omega_{N,t}} \\
		&
		\leq \sqrt{ N} \hsnorm{O_1} \opnorm{O_2} .
	\end{align*}
	\begin{align*}
		&|B_6|\\
		&
		= \bigg| \int \dx_1 \dx_2 \mathrm{d}z_1 \mathrm{d}z_2\ O_1(x_1;z_1) O_2(x_2;z_2)  \big<  \xi_N, \cU^*_N(t;0) , \left<\bar{\rmv}_{t,x_1}, \bar{\rmv}_{t,z_1} \right> a^*(\rmu_{t,x_2}) a(\rmu_{t,z_2}) \cU_N(t;0)  \xi_N \big>   \bigg|\\
		&
		= \bigg| \int \dx_1 \dx_2 \mathrm{d}z_1 \mathrm{d}z_2\ O_1(x_1;z_1) O_2(x_2;z_2)  \big<  \xi_N, \cU^*_N(t;0) ,\omega_{N,t}(z_1;x_1) \\
		&
		\qquad  \int  \deta \deta'\ a^*_{\eta} a_{\eta'} \rmu_t(\eta;x_2) \overline{\rmu_t(\eta'; z_2)} \cU_N(t;0)  \xi_N \big> \bigg| \\
		&
		= \bigg| \int  \deta \deta'\big<  \xi_N, \cU^*_N(t;0) , \int \dd{x_1} \big(O_1\omega_{N,t}\big) (x_1;x_1) a^*_{\eta} a_{\eta'} \big(\rmu_t {O_2} \rmu_t\big)(\eta;\eta') \cU_N(t;0)  \xi_N \big> \bigg| \\
		&
		\leq \hsnorm{O_1} \hsnorm{\omega_{N,t}} \opnorm{\rmu_t {O_2} \rmu_t}  \norm{\xi_N} \norm{\cN^{1/2} \cU_N(t;0)  \xi_N}\\
		&
		\leq \sqrt{N} \hsnorm{O_1} \opnorm{O_2}  \norm{\cN^{1/2} \cU_N(t;0)  \xi_N}
	\end{align*}
	\begin{align*}
		&|B_7|\\
		&
		= \bigg| \int \dx_1 \dx_2 \mathrm{d}z_1 \mathrm{d}z_2\ O_1(x_1;z_1) O_2(x_2;z_2)  \big<  \xi_N, \cU^*_N(t;0) , \left<\bar{\rmv}_{t,x_1}, \bar{\rmv}_{t,z_1} \right> a^*(\rmu_{t,x_2})  a^*(\bar{\rmv}_{t,z_2})  \cU_N(t;0)  \xi_N \big>   \bigg|\\
		&
		= \bigg| \int \dx_1 \dx_2 \mathrm{d}z_1 \mathrm{d}z_2\ O_1(x_1;z_1) O_2(x_2;z_2)  \big<  \xi_N, \cU^*_N(t;0) ,\omega_{N,t}(z_1;x_1) \\
		&
		\qquad  \int  \deta \deta'\ a^*_{\eta} a^*_{\eta'} \rmu_t(\eta;x_2) \overline{\rmv_t(\eta'; z_2)} \cU_N(t;0)  \xi_N \big> \bigg| \\
		&
		\leq \hsnorm{O_1} \hsnorm{\omega_{N,t}} \hsnorm{\rmu_t O_2 \bar{\rmv}_t}  \norm{\xi} \norm{(\cN+1)^{1/2} \cU_N(t;0)  \xi_N}\\
		&
		\leq \sqrt{N} \hsnorm{O_1} \opnorm{\rmu_t}\opnorm{O_2} \hsnorm{\rmv_t} \norm{\xi_N} \norm{(\cN+1)^{1/2} \cU_N(t;0)  \xi_N}\\
		&
		\leq N \hsnorm{O_1} \opnorm{O_2} \norm{(\cN+1)^{1/2} \cU_N(t;0)  \xi_N}.
	\end{align*}
	\begin{align*}
		&|B_{8}|\\
		&
		= \bigg| \int \dx_1 \dx_2 \mathrm{d}z_1 \mathrm{d}z_2\ O_1(x_1;z_1) O_2(x_2;z_2)  \big<  \xi_N, \cU^*_N(t;0) , \left<\bar{\rmv}_{t,x_2}, \bar{\rmv}_{t,z_1} \right>  a^*(\rmu_{t,x_1})a(\rmu_{t,z_2})   \cU_N(t;0)  \xi_N \big>   \bigg|\\
		&
		= \bigg| \int \dx_1 \dx_2 \mathrm{d}z_1 \mathrm{d}z_2\ O_1(x_1;z_1) O_2(x_2;z_2)  \big<  \xi_N, \cU^*_N(t;0) ,\omega_{N,t}(z_1;x_2) \\
		&
		\qquad  \int  \deta \deta'\ a^*_{\eta} a_{\eta'} \rmu_t(\eta;x_1) \overline{\rmu_t(\eta'; z_2)} \cU_N(t;0)  \xi_N \big> \bigg| \\
		&
		\leq \opnorm{\rmu_1 O_1 \omega_{N,t} O_2 \rmu_t} \norm{\xi} \norm{\cN \cU_N(t;0)  \xi_N}\\
		&
		\leq \opnorm{O_1} \opnorm{O_2} \opnorm{ \omega_{N,t}}  \norm{\xi_N} \norm{\cN \cU_N(t;0)  \xi_N}\\
		&
		\leq \hsnorm{O_1} \opnorm{O_2} \norm{\xi} \norm{\cN \cU_N(t;0)  \xi_N}.
	\end{align*}
	\begin{align*}
		&|B_{9}|\\
		&
		= \bigg| \int \dx_1 \dx_2 \mathrm{d}z_1 \mathrm{d}z_2\ O_1(x_1;z_1) O_2(x_2;z_2)  \big<  \xi_N, \cU^*_N(t;0) , \left<\bar{\rmv}_{t,x_2}, \bar{\rmv}_{t,z_1} \right> a^*(\rmu_{t,x_1})a^*(\bar{\rmv}_{t,z_2})   \cU_N(t;0)  \xi_N \big>   \bigg|\\
		&
		= \bigg| \int \dx_1 \dx_2 \mathrm{d}z_1 \mathrm{d}z_2\ O_1(x_1;z_1) O_2(x_2;z_2)  \big<  \xi_N, \cU^*_N(t;0) ,\omega_{N,t}(z_1;x_2) \\
		&
		\qquad  \int  \deta \deta'\ a^*_{\eta} a^*_{\eta'} \rmu_t(\eta;x_1) \overline{\rmv_t(\eta'; z_2)} \cU_N(t;0)  \xi_N \big> \bigg| \\
		&
		\leq \hsnorm{\rmu_t O_1 \overline{\omega}_{N,t} O_2 \bar{\rmv}_t} \norm{\xi_N} \norm{(\cN+1)^{1/2} \cU_N(t;0)  \xi_N}\\
		&
		\leq \hsnorm{O_1} \opnorm{O_2} \norm{(\cN+1)^{1/2} \cU_N(t;0)  \xi_N}.
	\end{align*}
	\begin{align*}
		&|B_{10}|\\
		&
		= \bigg| \int \dx_1 \dx_2 \mathrm{d}z_1 \mathrm{d}z_2\ O_1(x_1;z_1) O_2(x_2;z_2)  \big<  \xi_N, \cU^*_N(t;0) , \left<\rmu_{t,z_1}, \rmu_{t,x_2} \right> a^*(\rmu_{t,x_1})a(\rmu_{t,z_2})  \cU_N(t;0)  \xi_N \big>   \bigg|\\
		&
		= \bigg| \int \dx_1 \dx_2 \mathrm{d}z_1 \mathrm{d}z_2\ O_1(x_1;z_1) O_2(x_2;z_2)  \big<  \xi_N, \cU^*_N(t;0) , \rmu_t(z_1;x_2) \\
		&
		\qquad  \int  \deta \deta'\ a^*_{\eta} a_{\eta'} \rmu_t(\eta;x_1) \overline{\rmu_t(\eta'; z_2)} \cU_N(t;0)  \xi_N \big> \bigg| \\
		&
		\leq \hsnorm{\rmu_t O_1 \rmu_t O_2 \rmu_t}\norm{\xi_N} \norm{\cN^{1/2} \cU_N(t;0)  \xi_N}\\
		&
		\leq \hsnorm{O_1} \opnorm{O_2}  \norm{\cN^{1/2} \cU_N(t;0)  \xi_N}.
	\end{align*}
	\begin{align*}
		&|B_{11}|\\
		&
		= \bigg| \int \dx_1 \dx_2 \mathrm{d}z_1 \mathrm{d}z_2\ O_1(x_1;z_1) O_2(x_2;z_2)  \big<  \xi_N, \cU^*_N(t;0) ,  \left<\rmu_{t,z_1},\rmu_{t,x_2} \right> a^*(\rmu_{t,x_1})a^*(\bar{\rmv}_{t,z_2})  \cU_N(t;0)  \xi_N \big>   \bigg|\\
		&
		= \bigg| \int \dx_1 \dx_2 \mathrm{d}z_1 \mathrm{d}z_2\ O_1(x_1;z_1) O_2(x_2;z_2)  \big<  \xi_N, \cU^*_N(t;0) , \rmu_t(z_1;x_2) \\
		&
		\qquad  \int  \deta \deta'\ a^*_{\eta} a^*_{\eta'} \rmu_t(\eta;x_1) \overline{\rmv_t(\eta'; z_2)} \cU_N(t;0)  \xi_N \big> \bigg| \\
		&
		\leq \hsnorm{\rmu_t O_1 \rmu_t O_2 \bar{\rmv}_t} \norm{\xi_N} \norm{(\cN+1)^{1/2} \cU_N(t;0)  \xi_N}\\
		&
		\leq \hsnorm{O_1} \opnorm{O_2} \norm{(\cN+1)^{1/2} \cU_N(t;0)  \xi_N}.
	\end{align*}
	\begin{align*}
		&|B_{12}|\\
		&
		= \bigg| \int \dx_1 \dx_2 \mathrm{d}z_1 \mathrm{d}z_2\ O_1(x_1;z_1) O_2(x_2;z_2)  \big<  \xi_N, \cU^*_N(t;0) ,   \left<\bar{\rmv}_{t,x_2}, \bar{\rmv}_{t,z_2}\right> a^*(\bar{\rmv}_{t,z_1}) a(\bar{\rmv}_{t,x_1}) \cU_N(t;0)  \xi_N \big>   \bigg|\\
		&
		= \bigg| \int \dx_1 \dx_2 \mathrm{d}z_1 \mathrm{d}z_2\ O_1(x_1;z_1) O_2(x_2;z_2)  \big<  \xi_N, \cU^*_N(t;0) ,\omega_{N,t}(z_2;x_2) \\
		&
		\qquad  \int  \deta \deta'\ a^*_{\eta} a_{\eta'} \overline{\rmv_t(\eta;z_1)} \rmv_t(\eta'; x_1) \cU_N(t;0)  \xi_N \big> \bigg| \\
		&
		\leq \opnorm{O_2} \trnorm{\omega_{N,t}} \opnorm{\bar{\rmv}_t \overline{O^*_1}  \rmv_t}  \norm{\xi_N} \norm{\cN \cU_N(t;0)  \xi_N}\\
		&
		\leq N \hsnorm{O_1} \opnorm{O_2}   \norm{\cN  \cU_N(t;0)  \xi_N}.
	\end{align*}
	\begin{align*}
		&|B_{13}|\\
		&
		= \bigg| \int \dx_1 \dx_2 \mathrm{d}z_1 \mathrm{d}z_2\ O_1(x_1;z_1) O_2(x_2;z_2)  \big<  \xi_N, \cU^*_N(t;0) ,   \left<\bar{\rmv}_{t,x_2}, \bar{\rmv}_{t,z_1}\right> a^*(\bar{\rmv}_{t,z_2}) a(\bar{\rmv}_{t,x_1}) \cU_N(t;0)  \xi_N \big>   \bigg|\\
		&
		= \bigg| \int \dx_1 \dx_2 \mathrm{d}z_1 \mathrm{d}z_2\ O_1(x_1;z_1) O_2(x_2;z_2)  \big<  \xi_N, \cU^*_N(t;0) ,\omega_{N,t}(z_1;x_2) \\
		&
		\qquad  \int  \deta \deta'\ a^*_{\eta} a_{\eta'} \overline{\rmv_t(\eta;z_2)} \rmv_t(\eta'; x_1) \cU_N(t;0)  \xi_N \big> \bigg| \\
		&
		\leq  \hsnorm{\bar{\rmv}_t \overline{O^*_2} \overline{\omega}_{N,t} \overline{O^*_1}  \rmv_t}  \norm{\xi_N} \norm{\cN^{1/2} \cU_N(t;0)  \xi_N}\\
		&
		\leq \hsnorm{O_1} \opnorm{O_2}  \norm{\cN^{1/2} \cU_N(t;0)  \xi_N}.
	\end{align*}
	\begin{align*}
		&|B_{14}|\\
		&
		= \bigg| \int \dx_1 \dx_2 \mathrm{d}z_1 \mathrm{d}z_2\ O_1(x_1;z_1) O_2(x_2;z_2)  \big<  \xi_N, \cU^*_N(t;0) ,   \left< \bar{\rmv}_{t,x_2}, \bar{\rmv}_{t,z_2} \right> a^*(\rmu_{t,x_1}) a(\rmu_{t,z_1})  \xi_N \big>   \bigg|\\
		&
		= \bigg| \int \dx_1 \dx_2 \mathrm{d}z_1 \mathrm{d}z_2\ O_1(x_1;z_1) O_2(x_2;z_2)  \big<  \xi_N, \cU^*_N(t;0) ,\omega_{N,t}(z_2;x_2) \\
		&
		\qquad  \int  \deta \deta'\ a^*_{\eta} a_{\eta'} \rmu_t(\eta; x_1)\overline{\rmu_t(\eta)} \cU_N(t;0)  \xi_N \big> \bigg| \\
		&
		\leq  \hsnorm{\bar{\rmv}_t \overline{O^*_2} \overline{\omega}_{N,t} \overline{O^*_1}  \rmv_t}  \norm{\xi_N} \norm{\cN^{1/2} \cU_N(t;0)  \xi_N}\\
		&
		\leq \hsnorm{O_1} \opnorm{O_2}  \norm{\cN^{1/2} \cU_N(t;0)  \xi_N}.
	\end{align*}
	\begin{align*}
		&|B_{15}|\\
		&
		= \bigg| \int \dx_1 \dx_2 \mathrm{d}z_1 \mathrm{d}z_2\ O_1(x_1;z_1) O_2(x_2;z_2)  \big<  \xi_N, \cU^*_N(t;0) ,    \left< \bar{\rmv}_{t,x_2}, \bar{\rmv}_{t,z_2} \right> a^*(\bar{\rmv}_{t,z_1}) a^*(\rmu_{t,x_1}) \cU_N(t;0)  \xi_N \big>   \bigg|\\
		&
		= \bigg| \int \dx_1 \dx_2 \mathrm{d}z_1 \mathrm{d}z_2\ O_1(x_1;z_1) O_2(x_2;z_2)  \big<  \xi_N, \cU^*_N(t;0) ,\omega_{N,t}(z_2;x_2) \\
		&
		\qquad  \int  \deta \deta'\ a^*_{\eta} a^*_{\eta'} \overline{\rmv_t(\eta;z_1)} \overline{\rmu_t(x_1;\eta')} \cU_N(t;0)  \xi_N \big> \bigg| \\
		&
		\leq \trnorm{O_2 \omega_{N,t}} \hsnorm{\bar{\rmv}_t \overline{O^*_1} \bar{\rmu}_t} \norm{\cN^\frac{1}{2} \cU_N(t;0) \xi_N}\\
		&
		\leq \hsnorm{O_1} \hsnorm{\omega_{N,t}} \hsnorm{\rmv_t} \opnorm{O_2} \norm{\cN^\frac{1}{2} \cU_N(t;0) \xi_N}\\
		&
		\leq N \hsnorm{O_1}\opnorm{O_2} \norm{\cN^\frac{1}{2} \cU_N(t;0) \xi_N}.
	\end{align*}
	\begin{align*}
		&|B_{16}|\\
		&
		= \bigg| \int \dx_1 \dx_2 \mathrm{d}z_1 \mathrm{d}z_2\ O_1(x_1;z_1) O_2(x_2;z_2)  \big<  \xi_N, \cU^*_N(t;0) ,   \left<\bar{\rmv}_{t,x_1}, \bar{\rmv}_{t,z_1}\right> a^*(\bar{\rmv}_{t,z_2}) a(\bar{\rmv}_{t,x_2})  \cU_N(t;0)  \xi_N \big>   \bigg|\\
		&
		= \bigg| \int \dx_1 \dx_2 \mathrm{d}z_1 \mathrm{d}z_2\ O_1(x_1;z_1) O_2(x_2;z_2)  \big<  \xi_N, \cU^*_N(t;0) ,\omega_{N,t}(z_1;x_1) \\
		&
		\qquad  \int  \deta \deta'\ a^*_{\eta} a_{\eta'} \overline{\rmv_t(\eta;z_2)} \rmv_t(\eta'; x_2) \cU_N(t;0)  \xi_N \big> \bigg| \\
		&
		\leq \hsnorm{O_1} \hsnorm{\omega_{N,t}} \opnorm{\bar{\rmv}_t \overline{O^*_2} \rmv_t} \norm{\xi_N} \norm{\cN \cU_N(t;0) \xi_N}\\
		&
		\leq  \sqrt{N} \hsnorm{O_1} \opnorm{O_2} \norm{ \cN \cU_N(t;0)  \xi_N}.
	\end{align*}

\vfill
\end{appendices}

\end{document}